\newtheoremstyle{exampstyle}
  {\topsep} {\topsep} {} {} {\bfseries} {.} {.5em} {} \theoremstyle{definition}
\newtheorem{definition}{Definition}
\newtheorem{example}{Example}
\newtheorem{theorem}{Theorem}
\newtheorem{lemma}{Lemma}
\newtheorem{proposition}{Proposition}
\newtheorem{corollary}{Corollary}
\newtheorem*{priorwork*}{Prior Work}
\newtheorem*{notation*}{Notation}
\newtheorem{remark}{Remark}
\newcommand{\CartProd}{\bigtimes}
\newcommand{\Val}{\mathrm{Val}}
\newcommand{\Ll}{\mathcal{L}}
\newcommand{\base}{\textnormal{base}}
\newcommand{\Lbase}{\Ll_{\base}}
\newcommand{\Graph}{{\mathcal{G}}}
\newcommand{\RubinOutcomes}{\mathfrak{O}}
 \newcommand{\EndoVars}{{\*V}} \newcommand{\EndoVarsHighLevel}{{\*V}_{\mathrm{H}}} \newcommand{\EndoVarsLowLevel}{{\*V}_{\mathrm{L}}}  \newcommand{\EndoVarsX}{{\*X}}
\newcommand{\ExoVars}{{\*{U}}} 
\newcommand{\EndoSet}{{\*v}}
\newcommand{\HighLevel}{{\mathrm{H}}}
\newcommand{\LowLevel}{{\mathrm{L}}}
\newcommand{\SCM}{{\mathcal{M}}}
\newcommand{\counterfactual}{{\mathrm{cf}}}
\newcommand{\RCM}{{\mathcal{R}}}
\newcommand{\LowLevelModel}{{\mathcal{L}}}
\newcommand{\Ante}{\mathbf{x}}
\newcommand{\RubinUnits}{\mathcal{U}}
\newcommand{\Confounded}{{\dashleftarrow\dashrightarrow}}
\newcommand{\1}{\underline{1}}
\newcommand{\semantics}[1]{[\![\mbox{\em $ #1 $\/}]\!]}
\newcommand{\Unit}{{\*{u}}}
\newcommand{\Proves}{\vdash}
\renewcommand{\models}{\vDash}
\newcommand*{\addFileDependency}[1]{\typeout{(#1)}
  \@addtofilelist{#1}
  \IfFileExists{#1}{}{\typeout{No file #1.}}
}
\title{Comparing Causal Frameworks: Potential Outcomes, Structural Models, Graphs, and Abstractions}
\author{Duligur Ibeling \\
  Department of Computer Science\\
  Stanford University\\
\texttt{duligur@stanford.edu} \\
\And
  Thomas Icard \\
  Department of Philosophy \\
  Stanford University \\
\texttt{icard@stanford.edu} \\
}
\begin{document}

\maketitle

\begin{abstract}
The aim of this paper is to make clear and precise the relationship between the Rubin causal model (RCM) and structural causal model (SCM) frameworks for causal inference. Adopting a neutral logical perspective, and drawing on previous work, we show what is required for an RCM to be \emph{representable} by an SCM. A key result then shows that every RCM---including those that violate algebraic principles implied by the SCM framework---emerges as an abstraction of some representable RCM. Finally, we illustrate the power of this conciliatory perspective by pinpointing an important role for SCM principles in classic applications of RCMs;  conversely, we offer a characterization of the algebraic constraints implied by a graph, helping to substantiate further comparisons between the two frameworks. 
\end{abstract}

Toward the end of the twentieth century several frameworks arose for formalizing  and analyzing problems of causal inference. One of these, associated with Rubin \cite{Rubin1974} and others (see \cite{Imbens2015}), takes the \emph{potential outcome}---formalizing the effect of one variable on another---as a fundamental target of analysis. Causal assumptions in the Rubin causal model (RCM) are naturally encoded as algebraic constraints on potential outcomes, and research in this area has spawned a remarkable body of theoretical and applied work especially in social and biomedical sciences (see \cite{Imbens2020} for a review). 

A second approach, associated with Pearl \cite{Pearl1995} and collaborators (see \cite{Pearl2009} for a textbook treatment; see also \cite{Spirtes2001}), focuses instead on assumptions that can be encoded qualitatively, or more specifically, graphically, arising from a fundamental object known as a structural causal model (SCM). The \emph{do-calculus} is one of the crowning achievements of this approach, and it has been shown derivationally complete with respect to a wide range of canonical estimation and inference problems \citep{ShpitserPearlComplete,BareinboimPearl,lee2019general}.

Both frameworks have enjoyed considerable influence within recent causal machine learning in particular. As just one example, concern in reinforcement learning about the possibility of unobserved confounders---variables impacting both decisions and their outcomes---has generated a number of important advances, some employing tools and concepts from the RCM approach (e.g., \cite{KallusZhou,Namkoong,JongbinGoel}), others grounded in the SCM approach and typically involving graphs (e.g., \cite{bareinboimforney,Tennenholtz2020,Zhang202neurips}). 

Despite the remarkable successes that both of these frameworks have engendered in machine learning and  beyond, there remains substantial controversy over how to understand their relationship. In the literature it has been claimed, on the one hand, that the two are equivalent,  that ``a theorem in one is a theorem in the other'' \citep[p. 98]{Pearl2009}. 
On the other hand,  some authors suggest that the two are only equivalent in a weak sense, one that ``builds the limitations of SCM into the
resulting comparison and likewise filters out aspects of the rival theory that do not
readily translate to SCM'' \citep[p. 443]{Markus2021}.

At issue are two separable questions. The first is one of practical significance. Some argue that graphs give greater ``conceptual clarity'' \citep[p. 103]{Pearl2009} and that SCMs more generally offer a ``a flexible formalism for data-generating models'' that helps ground causal inquiry \citep[p. 514]{bareinboim:etal20}; others argue that work in the RCM framework provides ``transparent definitions of causal effects and encourages the analyst to consider individual-level heterogeneity as a first principle'' \citep[p. 91]{MorganWinship} as well as  ``guidance to researchers and policy makers
for practical implementation'' \citep[p. 1131]{Imbens2020}. While obviously very important, our goal is not to address these disputes about what theoretical primitives are most ``natural'' or ``useful'' for practitioners or applied researchers.  Rather, the aim of the present contribution is to offer a number of new technical results that together shed light on a more basic question, namely, how precisely the RCM and SCM frameworks relate at a theoretical level. For example, are the two merely notational variants, or does one tactictly enforce assumptions that the other does not? 

In this paper we first endeavor, building on previous work, to elucidate the precise sense in which SCMs can be construed as \emph{representations} of RCMs, provided the latter satisfy two key principles known as composition and reversibility \cite{Galles1998,Halpern2000}. Interestingly, such principles (or their logical consequences) have been questioned in the literature (e.g., \cite{colefrangakis}). Our second goal is to help clarify the sense in which they may fail. Drawing from recent literature on causal abstraction (e.g., \cite{Rubinstein2017,beckers20a})---broadened to cover both SCMs and RCMs---we suggest that failure of these principles plausibly results when causally relevant low-level details are elided in favor of more abstract variables. Our Thm.~\ref{thm:abstractionrepresentation} buttresses this intuition, showing that every RCM is a \emph{constructive abstraction} of a representable RCM (hence satisfying composition and reversibility). We furthermore remark on how the well known SUTVA assumptions \cite{Imbens2015} can be understood as conditions on good variable abstractions. 

Our starting point in this work is theoretically neutral, taking for granted only the primitive, ``probability of a counterfactual.'' In the second half of the paper we introduce a framework-neutral formal language for reasoning about such probabilities, which in turn facilities further comparison. 
With respect to this common language, we offer a completeness result for the class of all RCMs (Thm.~\ref{thm:poassumptionscompleteness}) and, drawing on \cite{Halpern2000}, the class of representable RCMs (Cor.~\ref{cor:rep}). These results are illustrated with an example derivation of LATE (see \cite{AngristImbens,Angrist1996}), which also helps illuminate which assumptions are logically required in the derivation. Meanwhile, we offer a partial answer to the well-known open question \cite{TianPearl2002,RichardsonEvans} of how to characterize the algebraic constraints implied by a particular graph (Thm.~\ref{thm:graphicalsystem}), a result that helps bring graphical assumptions into this neutral common language. Finally, we show how an existing result on single-world intervention graphs (SWIGs), a framework drawing from both perspectives, can be construed as a completeness result for the same language (Thm. \ref{thm:swig}).

Taken together, our results are largely conciliatory---in the same spirit as other important conciliatory work in this context; see, e.g., \cite{RichardsonRobins,ShpitserTchetgen,pmlr-v89-malinsky19b}---showing how the two frameworks are productively compatible, while also suggesting distinctive perspectives on problems of causal inference. 

Proofs are deferred to supplementary appendices \ref{s:app:modeling}, \ref{s:app:inference}, which contain additional technical material.

\section{Modeling}\label{sec:modeling}

We first introduce a formalization of the Rubin causal model \citep{Rubin1974,Holland1986,Holland1986a} before then turning to structural causal models \citep{Pearl1995,Pearl2009,bareinboim:etal20}. The relationship between these two is elucidated in \S\ref{section:rep}. 

\subsection{Preliminaries}\label{sec:preliminaries}

Common to both frameworks will be a set $\mathbf{V}$ of \emph{endogenous variables}. Concerning notation:

\begin{notation*}
The signature (or range) of a variable $V$ is denoted $\Val(V)$. 
Where $\*S$ is a set of variables, let $\Val(\*S) = \CartProd_{S \in \*S} \Val(S)$, the product of the family of sets $\Val(S)$ indexed by $S \in \*S$. Elements of $\Val(\*S)$ represent joint valuations of the variables $\*S$.
Given an indexed family of sets $\{S_\beta\}_{\beta \in B}$ and elements $s_\beta \in S_\beta$, let $\{s_\beta\}_\beta$ denote the indexed family of elements whose object associated with the index $\beta$ is $s_\beta$, for all $\beta$.
The symbol $\subset$ indicates any subset (or set inclusion) and does not imply a strict subset (or proper inclusion).
For $B' \subset B$ write $\pi_{B'} : \CartProd_{\beta \in B} S_\beta \to \CartProd_{\beta \in B'} S_{\beta}$ for the \emph{projection map} sending each $\{s_\beta\}_{\beta \in B} \mapsto \{s_{\beta'}\}_{\beta' \in B'}$; abbreviate $\pi_{\beta'} = \pi_{\{\beta'\}}$, for $\beta' \in B$. 
Thus if $\*s \in \Val(\*S)$ is a joint valuation of variables $\*S$ and $S \in \*S$ is a single variable, then $s = \pi_S(\*s) \in \Val(S)$ is a value for $S$. If $\*S' \subset \*S$ then $\pi_{\*S'}(\*s) \in \Val(\*S')$ is a joint valuation of $\*S'$, namely the projection of $\*s$ to $\*S'$.
Upper-case letters like $\mathbf{S}$ conventionally represent those sets of variables that the corresponding lower-case letters $\mathbf{s}$ are valuations of, $\mathbf{s} \in \Val(\*S)$.
\end{notation*}
    
\subsubsection{Rubin Causal Models, Potential Outcomes, Counterfactuals}
The present formalization of the Rubin causal model \citep{Rubin1974,Imbens2015} loosely follows previous presentations; see especially \cite{Holland1986a}. It codifies experimental outcomes across individuals drawn from a distribution. These are \emph{potential outcomes} over a variable set $\*V$, defined as expressions of the form $Y_\Ante$ for an \emph{outcome} $Y \in \*V$ and an \emph{intervention} or \emph{treatment} $\Ante \in \Val(\*X)$ for some $\*X \subset \*V$, and interpreted as the value observed for $Y$ in a controlled experiment where each $X \in \*X$ is held fixed to $\pi_X(\mathbf{x})$.
\begin{definition}
    \label{def:rcm}
A \emph{Rubin causal model} (RCM) is a tuple $\mathcal{R} = \langle \mathcal{U}, \mathbf{V}, \mathfrak{O}, \mathfrak{F}, P \rangle$ where $\mathcal{U}$ is a finite set of \emph{units} or \emph{individuals}, $\mathbf{V}$ is a finite set of \emph{endogenous variables}, $\mathfrak{O}$ is a set of {potential outcomes} over $\*V$, $\mathfrak{F}$ is a set of \emph{potential response} functions, to be defined shortly, and $P : \mathcal{U} \to [0, 1]$ is a probability distribution on $\mathcal{U}$. A potential response for $Y_\Ante \in \mathfrak{O}$ is a function $f_{Y_\Ante}: \mathcal{U} \to \Val(Y)$. For each $Y_\Ante \in \mathfrak{O}$ we require that $\mathfrak{F} = \{f_{Y_\Ante}\}_{Y_\Ante \in \mathfrak{O}}$ contain exactly one such function.\footnote{By generalizing to allow multiple such functions one arrives at a class of models closely related to the \emph{generalized structural equation models} (GSEMs) of Peters and Halpern \cite{peters2021causal}, or the equivalent class of \emph{causal constraint models} (CCMs) introduced by Blom et al. \cite{Blom2019}. Rather than mapping each potential outcome $Y_{\Ante}$ to a value in $\Val(Y)$, GSEMs map each (allowable) intervention $\Ante$ to a (possibly empty) \emph{set} of values for all the variables, that is, to elements of the powerset $\wp{(\Val(\mathbf{V}))}$. RCMs, by contrast, allow that, e.g., $Y_{\Ante}$ may be defined for all $\mathbf{u}$, while $Z_{\Ante}$  is undefined because $Z_{\Ante} \notin \mathfrak{O}$. The two are thus incomparable in expressive power.}
\end{definition}
RCMs are often specified in a tabular form as in, e.g., Fig.~\ref{fig:example:cmreq} below.
We adopt the notation $y_\Ante(i)$ or $Y_\Ante(i) = y$ as a shorthand for $f_{Y_\Ante}(i) = y$:
for $\mathcal{R}$ as in Def.~\ref{def:rcm}, write $\mathcal{R} \vDash y_\Ante(i)$ iff $Y_\Ante \in \mathfrak{O}$ and $f_{Y_\Ante}(i) = y$. This means that in the above controlled experiment, outcome $y \in \Val(Y)$ is observed for individual $i$. Each $Y_\Ante$ can be thought of as a variable with range $\Val(Y_\Ante) = \Val(Y)$.
We call the set $\Val(\mathfrak{O})$ of joint valuations of these variables \emph{counterfactuals}.
A set of potential responses $\mathfrak{F}$ then maps units to counterfactuals, $\mathfrak{F}: \mathcal{U} \to \Val(\mathfrak{O})$, by defining $\mathfrak{F}(i) = \{f_{Y_\Ante}(i)\}_{Y_\Ante \in \mathfrak{O}}$, and:

\begin{definition}
Where $\mathcal{R}$ is as in Def.~\ref{def:rcm}, the \emph{counterfactual distribution} $P_{\mathrm{cf}}^\mathcal{R} : \Val(\mathfrak{O}) \to [0, 1]$ induced by $\mathcal{R}$ is the pushforward\footnote{That is, $P_{\mathrm{cf}}^\mathcal{R}(\*o) = P(\mathfrak{F}^{-1}(\{\*o\}))$ for any $\*o \in \Val(\mathfrak{O})$.} $\mathfrak{F}_*(P)$ of $P : \mathcal{U} \to [0, 1]$ under $\mathfrak{F}: \mathcal{U} \to \Val(\mathfrak{O})$. \label{def:counterfactualrcm}
\end{definition}
The reason we call $P_{\mathrm{cf}}^\mathcal{R}$ a counterfactual distribution (and $\Val(\mathfrak{O})$ counterfactuals) is because such joint probabilities over multiple potential outcomes appear in the usual ratio definition of counterfactual probabilities. For instance, $P(y_x|y'_{x'}) = P(y_x,y'_{x'})/P(y'_{x'})$ gives the probability that a unit who was withheld treatment and did not recover would have recovered if assigned treatment. But $P_{\mathrm{cf}}^\mathcal{R}$ also answers via marginalization all questions (whenever defined by $\mathcal{R}$) about ``interventional'' probabilities like $P(y_\mathbf{x})$, as well as purely ``observational'' probabilities such as $P(\mathbf{x})$; see, e.g., \cite{bareinboim:etal20}. 

Some authors submit that ``probability will
 mean nothing more nor less than a proportion of units'' \citep[p. 945]{Holland1986}, thereby assuming a uniform distribution on a finite population $\mathcal{U}$ (cf. also \cite{Angrist1996}). Of course, in the infinite population size limit we recover all RCMs as in Def.~\ref{def:rcm} (see Prop.~\ref{prop:app:limitofunits}).

While practitioners do not typically consider potential outcomes $Y_\Ante$ when $Y \in \mathbf{X}$, instead maintaining a strict dichotomy between cause and effect variables (e.g., \cite{Holland1986,Holland1986a}), it is natural to impose the following requirement (known as \emph{effectiveness}) whenever such potential outcomes are defined. An intervention is always assumed to be a \emph{successful} intervention: whenever $Y \in \mathbf{X}$,
\begin{eqnarray}\label{eq:poeffectiveness}
\text{Effectiveness}. && Y_\Ante(u) = \pi_Y(\Ante)
\end{eqnarray}
for every $u \in \mathcal{U}$.
In fact, practice in the RCM framework reflects this same assumption, in the sense that violations of it are taken to signify a poor choice of variables. As a classic example, the possibility of \emph{non-compliance} in an experimental trial motivates the introduction of instrumental variables, and specifically a separation between, e.g., \emph{treatment} and \emph{assignment to treatment} (cf. Ex.~\ref{eg:late}). Crucially, we recover effectiveness with respect to the latter. 
We will assume any RCM to meet \eqref{eq:poeffectiveness} unless otherwise specified; let $\mathfrak{R}_{\mathrm{eff}}$ be the class of such RCMs.\footnote{Not only does this assumption reflect practice, but it is also without loss with regard to comparing RCMs and SCMs, as the latter also satisfy effectiveness: see footnote \ref{footnote:reffexplain}.}

\subsubsection{Structural Causal Models}
An important feature of RCMs is that potential outcomes are cleanly separated from assignment mechanisms \citep{Imbens2015}. A different starting point is to assume that potential outcomes and their algebraic behavior are rather \emph{derived} from an underlying formal representation of causal structure. These putatively ``deeper mathematical objects'' \citep[p. 102]{Pearl2009} involve concrete functional dependencies, and an operation of function replacement known as \emph{intervention}:
\begin{definition}\label{def:scm}
    A \emph{structural causal model} (SCM) is a tuple $\mathcal{M} = \langle \mathbf{U}, \mathbf{V}, \mathcal{F},  P(\mathbf{U}) \rangle$ where $\mathbf{U}$ is a finite set of \emph{exogenous variables}, $\mathbf{V}$ is a finite set of \emph{endogenous variables}, $\mathcal{F}$ is a family of \emph{structural functions}, to be defined shortly, and $P : \Val(\*U) \to [0, 1]$ is a probability distribution on (joint valuations of) $\*U$.
    
    A structural function for $V \in \*V$ is a function
    $f_V: \Val(\mathbf{U}_V \cup \mathbf{Pa}_V) \to \Val(V)$, where $\*U_V \subset \*U$, $\mathbf{Pa}_V \subset \*V \setminus \{ V\}$.
    For every $V \in \*V$ we require that $\mathcal{F} = \{f_V\}_V$ have exactly one such function; the entire collection $\mathcal{F}$ thus forms an exogenous-to-endogenous mapping.
\end{definition}
Interventions come as a derived notion, replacing structural functions with constant functions \citep{meekglymour94,Pearl2009}: \begin{definition}[Intervention on SCMs]\label{def:intervention}
Let $\*x \in \Val(\*X)$ for some $\*X \subset \*V$ be an intervention and $\mathcal{M}$ be the SCM of Def.~\ref{def:scm}.
Then define a \emph{manipulated model} $\mathcal{M}_{\*x} = \langle \mathbf{U}, \mathbf{V}, \{f'_V\}_V, P(\*U) \rangle$ where each $f'_V : \Val(\mathbf{U}'_V \cup \mathbf{Pa}'_V) \to \Val(V)$.
If $V \notin \*X$ define $\mathbf{U}'_V = \*U_V$, $\mathbf{Pa}'_V = \mathbf{Pa}_V$, and $f'_V = f_V$.
If $V \in \*X$ define $\mathbf{U}'_V = \mathbf{Pa}'_V = \varnothing$ and $f'_V$ as a constant function mapping to $\pi_V(\*x)$.
\end{definition}
Letting $\mathcal{M}$ be the SCM of Def.~\ref{def:scm}, for $\*v \in \Val(\*V)$ and $\*u \in \Val(\*U)$ write $\mathcal{M}, \*u \models \*v$ if we have $f_V\big(\pi_{\mathbf{U}_V \cup \mathbf{Pa}_V}(\*v)\big) = \pi_V(\*v)$ for every $V$.
Let $\mathfrak{M}_{\mathrm{uniq}}$ be the class of all SCMs $\mathcal{M}$ such that,
for any $\*u$ and intervention $\*x$ there is a unique \emph{solution} $\*v$ such that $\mathcal{M}_{\*x}, \*u \models \*v$. In this case we define the potential outcome $Y_{\*x}(\*u)$ as $\pi_Y(\*v)$.
Thus any $\mathcal{M} \in \mathfrak{M}_{\mathrm{uniq}}$ defines a potential outcome for \emph{every} $Y_\Ante$,
giving a natural function $p^{\mathcal{M}}: \Val(\mathbf{U}) \to \Val(\{Y_\Ante\}_{\text{all } Y, \Ante})$ via these outcomes, and:
\begin{definition} \label{def:counterfactualscm}
    The counterfactual distribution $P_{\mathrm{cf}}^{\mathcal{M}} : \Val(\{Y_\Ante\}_{\text{all }Y,\Ante}) \to [0, 1]$ induced by $\mathcal{M} \in \mathfrak{M}_{\mathrm{uniq}}$ is the pushforward $p^{\mathcal{M}}_*(P)$ of $P : \Val(\mathbf{U}) \to [0, 1]$ under $p^{\mathcal{M}} :\Val(\mathbf{U}) \to \Val(\{Y_\Ante\}_{\text{all } Y, \Ante})$.
\end{definition} Thus, SCMs in $\mathfrak{M}_{\mathrm{uniq}}$ canonically define counterfactual distributions for all possible potential outcomes via manipulation of functional dependencies.
Importantly, Def. \ref{def:counterfactualscm} provides a bridge to RCMs, as both produce counterfactual distributions (recall Def. \ref{def:counterfactualrcm}). As long as the counterfactual probabilities are assumed to mean the same thing---i.e., as long as they highlight the same targets for empirical and theoretical investigation---we can then compare the ranges of assumptions and inference patterns that each framework can encode about them. 
We thus assume that all our SCMs belong to this class $\mathfrak{M}_{\mathrm{uniq}}$.

\subsubsection{Representation of RCMs by SCMs} \label{section:rep}
A contentious methodological question is whether all (endogenous) variables should be potential candidates for intervention. Following the literature we have supposed that SCMs allow all possible interventions (though this assumption is not universal; see, e.g., \cite{Rubinstein2017,beckers20a}). For RCMs it is generally assumed that there can be ``no causation without manipulation'' \citep{Holland1986,Imbens2015}, and thus that only some interventions should be allowed. While methodologically important, this is theoretically inessential. We can construe SCMs as possible representations of RCMs in the following sense:
\begin{definition}
Let $\mathcal{R} \in \mathfrak{R}_{\mathrm{eff}}$ and $\mathcal{M} \in \mathfrak{M}_{\mathrm{uniq}}$.
We say that $\mathcal{M}$ \emph{represents} $\mathcal{R}$ if its counterfactual distribution $P_\counterfactual^\mathcal{M}$ marginalizes down to $P_\counterfactual^\mathcal{R}$ on the potential outcomes defined by $\mathcal{R}$ (the set $\mathfrak{O}$ in Def.~\ref{def:rcm}). We say $\mathcal{R}$ is \emph{representable} if it is represented by at least some $\mathcal{M} \in \mathfrak{M}_{\mathrm{uniq}}$.\footnote{With regard to representability, our assumption that $\mathcal{R} \in \mathfrak{R}_{\mathrm{eff}}$ is without loss since Def.~\ref{def:intervention} easily implies that the potential outcomes induced by any SCM must satisfy effectiveness \eqref{eq:poeffectiveness}.\label{footnote:reffexplain}}
\end{definition}
Thus $\mathcal{M}$ represents $\mathcal{R}$ if they are counterfactually equivalent with respect to the outcomes defined by $\mathcal{R}$. Toward a characterization of representability, consider two properties of an RCM \cite{Galles1998}:
\begin{definition}
    The following Boolean formulas encode assumptions about potential outcomes:
\begin{eqnarray}
    \text{Composition.} && Y_{\*w}(u) = y \land Z_{\*w}(u) = z \rightarrow Z_{\*w y}(u) = z\label{eq:pocomposition}\\
    \text{Reversibility.} && Y_{\*w z}(u) = y \land Z_{\*w y}(u) = z \rightarrow Y_{\*w}(u) = y.\label{eq:poreversibility}
\end{eqnarray}
Say $\mathcal{R} \in \mathfrak{R}_{\mathrm{eff}}$ satisfies composition and reversibility, respectively, when the respective statements hold for every unit $u$ of $\mathcal{R}$, whenever all the appropriate potential outcomes are defined.\end{definition}

We understand lower-case values like $y$, $z$, $\*w$, when not bound as dummy indices or otherwise, to be schematic variables carrying tacit universal quantifiers. Thus \eqref{eq:pocomposition}, \eqref{eq:poreversibility} must hold for all possible $y \in \Val(Y)$, $\*w \in \Val(\*W)$, $z \in \Val(Z)$. This same usage is repeated, e.g., in
\eqref{eqn:cfsep}.

While reversibility seems not to have arisen in the potential outcomes literature, instances of composition have appeared explicitly (e.g., \citealt{Holland1986a}, p. 968) and have been used implicitly in concrete derivations (see Ex.~\ref{eg:late} below).
Note that the well-known principle of \emph{consistency} \cite{colefrangakis,Pearl2009} is merely the instance of composition for $\*W = \varnothing$.
For $\mathcal{R}, \mathcal{R}' \in \mathfrak{R}_{\mathrm{eff}}$ that share the same units $\mathcal{U}$ and endogenous variables $\*V$ but have respective potential outcome sets $\mathfrak{O}, \mathfrak{O}'$ and potential response sets $\mathfrak{F}, \mathfrak{F}'$, if $\mathfrak{O} \subset \mathfrak{O}'$ and $\mathfrak{F} \subset \mathfrak{F}'$ we say that $\mathcal{R}'$ \emph{extends} or is an extension of $\mathcal{R}$ and $\mathcal{R}$ is a \emph{submodel} of $\mathcal{R}'$. Call $\mathcal{R}$ \emph{full} if it has no proper extension.
Then: \begin{proposition}[SCM Representation]\label{prop:scmrepresentation}
RCM $\mathcal{R}$ is representable iff $\mathcal{R}$ extends to some full $\mathcal{R}'$ that satisfies composition and reversibility.
\end{proposition}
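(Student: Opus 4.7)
The plan is to prove the two directions by transferring potential-outcome data between representations. For the $(\Rightarrow)$ direction, suppose $\mathcal{M} \in \mathfrak{M}_{\mathrm{uniq}}$ represents $\mathcal{R}$, and (without loss) assume $P(i) > 0$ for every $i \in \mathcal{U}$, since zero-probability units can be dropped without affecting $P_{\counterfactual}^{\mathcal{R}}$. Because $\pi_{\mathfrak{O}} \circ p^{\mathcal{M}}$ pushes $P$ on $\Val(\*{U})$ forward to $P_{\counterfactual}^{\mathcal{R}}$, each profile $\mathfrak{F}(i) \in \Val(\mathfrak{O})$ lies in the image of $\pi_{\mathfrak{O}} \circ p^{\mathcal{M}}$. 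Choose $\mu(i) \in \Val(\*{U})$ with $\pi_{\mathfrak{O}}(p^{\mathcal{M}}(\mu(i))) = \mathfrak{F}(i)$, and define $\mathcal{R}'$ to share $\mathcal{R}$'s units, endogenous variables, and distribution, but with the full potential-outcome set $\mathfrak{O}' = \{Y_{\Ante}\}_{Y, \Ante}$ and responses $f_{Y_{\Ante}}^{\mathcal{R}'}(i) := Y_{\Ante}^{\mathcal{M}}(\mu(i))$. Then $\mathcal{R}' \supset \mathcal{R}$ is full by construction. Since each row of $\mathcal{R}'$'s potential outcomes is the entire counterfactual profile of $\mathcal{M}$ at $\mu(i)$, composition and reversibility reduce to the standard fact that every SCM in $\mathfrak{M}_{\mathrm{uniq}}$ satisfies both: composition because the unique solution of $\mathcal{M}_{\*{w}}$ is also the unique solution of $\mathcal{M}_{\*{w} y}$ when $Y_{\*{w}}(\*{u}) = y$, and reversibility by combining the unique solutions of $\mathcal{M}_{\*{w} z}$ and $\mathcal{M}_{\*{w} y}$ (both equal to the unique solution of $\mathcal{M}_{\*{w} y z}$ by composition) and checking that their common vector also satisfies the equations of $\mathcal{M}_{\*{w}}$.

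For the converse, let $\mathcal{R}'$ be a full extension of $\mathcal{R}$ satisfying composition and reversibility, and construct $\mathcal{M} = \langle \{U\}, \*{V}, \{f_V\}_V, P \rangle$ whose single exogenous variable $U$ ranges over $\mathcal{U}$ with distribution $P$, taking $\mathbf{Pa}_V = \*{V} \setminus \{V\}$ and structural function $f_V(u, \*{w}) := V_{\*{w}}^{\mathcal{R}'}(u)$ (well-defined because $\mathcal{R}'$ is full). The main claim is that $\mathcal{M} \in \mathfrak{M}_{\mathrm{uniq}}$ and its induced counterfactuals coincide with those of $\mathcal{R}'$. Fix an intervention $\*{x}$ on $\*{X}$ and a unit $u$, and let $\*{v}^{\star}$ be the vector with $\pi_V(\*{v}^{\star}) = \pi_V(\*{x})$ for $V \in \*{X}$ and $\pi_V(\*{v}^{\star}) = V_{\*{x}}^{\mathcal{R}'}(u)$ otherwise. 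A short induction using composition---absorbing each $W_{\*{x}}^{\mathcal{R}'}(u)$ into the intervention one variable at a time---shows $V_{\*{x}}^{\mathcal{R}'}(u) = V_{\pi_{\*{V} \setminus \{V\}}(\*{v}^{\star})}^{\mathcal{R}'}(u)$ for every $V \notin \*{X}$, so $\*{v}^{\star}$ is a solution of $\mathcal{M}_{\*{x}}$ (existence). If $\*{v}'$ is any other solution, reversibility applied iteratively to pairs of disagreeing coordinates---with composition used to freeze the agreeing coordinates into a background intervention---collapses the disagreement set to empty, yielding uniqueness. It follows that $Y_{\*{x}}^{\mathcal{M}}(u) = Y_{\*{x}}^{\mathcal{R}'}(u)$ for every $(Y, \*{x}, u)$, whence $\mathcal{M}$ represents $\mathcal{R}'$ and, by composition of marginals, also $\mathcal{R}$.

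The principal obstacle is the uniqueness step in the converse. Reversibility directly handles the case in which $\*{v}^{\star}$ and $\*{v}'$ disagree on exactly two coordinates, but larger disagreement sets require a careful cascade of such applications---essentially the technical core of Halpern's representation theorem for unique-solution SCMs. The cleanest formalization I anticipate is by induction on the size of the disagreement set, at each stage selecting a minimal pair of coordinates on which reversibility can fire (with the rest of the disagreement frozen via composition) and then peeling off. The forward direction and the existence portion of the converse are by comparison routine once the construction is in place.
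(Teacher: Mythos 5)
Your proposal is correct and follows essentially the same route as the paper's proof: the forward direction pulls each unit back to a positive-probability exogenous setting and extends $\mathcal{R}$ by the SCM's solutions under all interventions, and the converse builds the canonical SCM with a single exogenous variable ranging over units and structural functions $f_V(u,\mathbf{w}) = V_{\mathbf{w}}(u)$ read off the full RCM. The only real difference is that the paper delegates both halves of the hard work---soundness of composition/reversibility for SCMs, and existence/uniqueness of solutions in the canonical model---to Halpern's Theorem~3.2, whereas you re-derive them inline; the uniqueness-from-reversibility cascade you correctly flag as the principal obstacle is exactly the content of Halpern's generalized-reversibility lemma (proved by the induction you describe), so your plan there is sound, though as written it remains a sketch of what the paper simply cites.
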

Note that for an RCM $\mathcal{R}$ to be representable it is necessary (though not sufficient, in light of the models presented in Fig.~\ref{fig:example:cmreq} below) that $\mathcal{R}$ itself witness no composition or reversibility failures.
Prop.~\ref{prop:scmrepresentation} thus clarifies a sense in which RCMs are more general than SCMs, not just by allowing only a subset of allowable interventions, but also by imposing fewer requirements on how potential outcomes relate to one another. However, assuming composition, reversibility, and fullness, the two define the same classes of counterfactual distributions, despite the superficial differences in their definitions. In that case the two are, e.g., equivalent for interpreting the probabilistic logical language of \S{}\ref{sec:inference}.  We submit that some version of this result makes sense of statements in the literature, e.g., from Pearl \cite{Pearl2009}, that the twain are essentially equivalent frameworks from a theoretical perspective.

\subsection{Causal Abstraction} \label{section:abstraction}
The goal of this section is to clarify the source of putative failures of  principles like composition. We suggest that it is helpful to view these issues through the lens of \emph{causal abstraction} (the definitions in this section are adapted from \cite{Rubinstein2017,beckers20a}). Abstraction has mostly been studied in the context of SCMs; our definitions apply equally to SCMs and RCMs via counterfactual distributions. 

In causal abstraction, one has a set $\EndoVarsLowLevel$ of low-level (or concrete, or micro-) variables representing a fine-grained description and a set $\EndoVarsHighLevel$ of high-level (or abstract, or macro-) variables representing a coarser-grained description of the same scenario.
The correspondence between the two descriptions is given by a partial \emph{translation} map $\tau: \Val(\EndoVarsLowLevel) \to \Val(\EndoVarsHighLevel)$. Translations extend canonically to maps of partial valuations (e.g., interventions) $\tau: \bigcup_{\*X \subset \EndoVarsLowLevel} \Val(\*X) \to \bigcup_{\*X \subset \EndoVarsHighLevel} \Val(\*X)$ by setting $\tau(\*x_{\mathrm{L}}) = \*x_{\mathrm{H}} \text{ iff }
    \tau\big(\pi^{-1}_{\*X_\LowLevel}(\*x_{\mathrm{L}})\big) = \pi^{-1}_{\*X_\HighLevel}(\*x_{\mathrm{H}})$.

We overload $\tau$ once more so as to cover counterfactuals, defining as follows yet another partial $\tau : \Val(\mathfrak{O}_\LowLevel) \to \Val(\mathfrak{O}_\HighLevel)$ for any sets $\mathfrak{O}_\LowLevel$, $\mathfrak{O}_\HighLevel$ of potential outcomes over $\EndoVarsLowLevel$ and $\EndoVarsHighLevel$ respectively.
Index an element of $\Val(\mathfrak{O}_\LowLevel)$ as $\{(\mathbf{y}^i_\LowLevel)_{\*x^i_\LowLevel}\}_{1 \le i \le m}$, where $\*x^i_\LowLevel \neq \*x^j_\LowLevel$ for any $i \neq j$ and $\mathbf{y}^i_\LowLevel \in \Val(\{ Y \in \*V_\LowLevel : Y_{\*x^i_\LowLevel} \in \mathfrak{O}_\LowLevel \})$ for each $i$, and an element of $\Val(\mathfrak{O}_\HighLevel)$ likewise as $\{(\mathbf{y}^j_\HighLevel)_{\*x^j_\HighLevel}\}_{1 \le j \le n}$.
Define $\tau\big(\{(\mathbf{y}^i_\LowLevel)_{\*x^i_\LowLevel}\}_{1 \le i \le m}\big) = \{(\mathbf{y}^j_\HighLevel)_{\*x^j_\HighLevel}\}_{1 \le j \le n}$
if
$\tau(\{\Ante^i_\LowLevel: 1 \le i \le m \}) = \{\Ante^j_\HighLevel: 1 \le j \le n\}$
and $\tau(\mathbf{y}^i_\LowLevel) = \mathbf{y}^j_\HighLevel$ for any pair $
\Ante^i_\LowLevel, \Ante^j_\HighLevel$ where $\tau(\Ante^i_\LowLevel) = \Ante^j_\HighLevel$.
\begin{definition}\label{def:rcmabstraction}
With counterfactual translation in hand, we define an abstraction relation between probabilistic causal models. The model $\mathcal{H}$ abstracts $\LowLevelModel$ over the aligned variables (written $\mathcal{H}\prec_{\tau}\mathcal{L}$) if the translation $\tau$ pushes the latter's counterfactual distribution to the former's, that is, $P^{\mathcal{H}}_\counterfactual = \tau_*(P^{\mathcal{L}}_\counterfactual)$.
\end{definition}

A stricter and typically more useful notion is that of \emph{constructive} abstraction (e.g., \cite{beckers20a}). These arise from translations that can be generated variable-wise, and thus correspond to a coherent ``clustering'' of variables:
\begin{definition}\label{def:constructiveabs}
Translation $\tau: \Val(\EndoVarsLowLevel) \to \Val(\EndoVarsHighLevel)$ is constructive if
there is a partition $\Pi$ of ${\*V}_{\mathrm{L}}$ with non-overlapping cells $\{\Pi_{V}\}_{V \in {\*V}_{\mathrm{H}} \cup \{\bot\}}$, each $\Pi_V \subset {\*V}_{\mathrm{L}}$, where $\Pi_V$ is non-empty for all $V \neq \bot$, and a collection $\{\tau_V\}_{V \in \EndoVarsHighLevel}$ each of which is a partial surjective map $\tau_V: \Val(\Pi_V) \to \Val(V)$, such that $\tau(\EndoSet_\LowLevel) = \big\{ \tau_V\big(\pi_{\Pi_V}(\EndoSet_\LowLevel)\big)\big\}_{V \in \EndoVarsHighLevel}$ for any $\EndoSet_\LowLevel \in \Val(\EndoVarsLowLevel)$.
\end{definition}

A simple abstraction, ubiquitous in the literature (see, e.g., \cite[\S{}1.6.2]{Imbens2015} and \cite{colefrangakis}), is that of variable treatment levels.
Here a higher-level value corresponds to some collection of lower-level specifications that might
represent the potency or dosage of the drug administered, the time of administration, etc.: for example, a distinction of whether one took 300, 400, 500, or 600 mg of aspirin is made at the low level, but at the high level, there is only the binary distinction between having taken aspirin and not.
Formally, a treatment variable $T$ is only binary with values $\mathrm{c}, \mathrm{tr}$ (control, treatment resp.) at the high level but takes on many values $\mathrm{c}, \mathrm{tr}^1, \dots, \mathrm{tr}^n$ at the low level. The abstraction is made by omitting the fine-grained details; symbolically, one forms a new model by eliding the superscripts, collapsing all $\mathrm{tr}^i$ into $\mathrm{tr}$. So long as for any outcomes we have $Y_{\mathrm{tr}^i}(u) = Y_{\mathrm{tr}^j}(u)$, the model thus formed will be a constructive probabilistic abstraction of the low-level model.

The next result provides some useful properties of constructive abstraction.
\begin{proposition}\label{prop:abstractionproperties}
Suppose $\mathcal{H}\prec_{\tau}\mathcal{L}$ with $\tau$ constructive.
Then $\mathcal{H}$ is effective if $\mathcal{L}$ is effective.
Also, for any submodel $\mathcal{H}'$ of $\mathcal{H}$ there is a submodel $\mathcal{L}'$ of $\mathcal{L}$ such that $\mathcal{H}' \prec_{\tau} \mathcal{L}'$. \end{proposition}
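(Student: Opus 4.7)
Both parts unwind the constructive decomposition $\tau(\mathbf{v}_L) = \{\tau_V(\pi_{\Pi_V}(\mathbf{v}_L))\}_{V \in \mathbf{V}_H}$ and the induced action of $\tau$ on interventions and on counterfactuals. The two claims then reduce to tracking which pieces of the low-level model project, under these variable-wise $\tau_V$'s, to the high-level quantity of interest.

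\emph{Effectiveness.} Fix $Y_{\mathbf{x}_H} \in \mathfrak{O}_H$ with $Y \in \mathbf{X}_H$ and a unit $u_H$; we may assume $P_H(u_H) > 0$, since units of zero probability can be redefined without affecting $P^{\mathcal{H}}_\mathrm{cf}$. Because $P^{\mathcal{H}}_\mathrm{cf} = \tau_*(P^{\mathcal{L}}_\mathrm{cf})$, the joint valuation $\mathfrak{F}_H(u_H)$ lies in the image of $\tau$, so there is a low-level unit $u_L$ and intervention $\mathbf{x}_L$ with $\tau(\mathbf{x}_L) = \mathbf{x}_H$ such that the $\Pi_Y$-portion of $\mathfrak{F}_L(u_L)$ indexed by $\mathbf{x}_L$, namely $\mathbf{y}_L = \{Z_{\mathbf{x}_L}(u_L)\}_{Z \in \Pi_Y}$, satisfies $\tau_Y(\mathbf{y}_L) = Y_{\mathbf{x}_H}(u_H)$. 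The defining condition $\tau\bigl(\pi^{-1}_{\mathbf{X}_L}(\mathbf{x}_L)\bigr) = \pi^{-1}_{\mathbf{X}_H}(\mathbf{x}_H)$, together with constructivity, forces $\tau_Y$ to take the constant value $\pi_Y(\mathbf{x}_H)$ on every $\Pi_Y$-valuation agreeing with $\mathbf{x}_L$ on $\Pi_Y \cap \mathbf{X}_L$. Low-level effectiveness then gives $Z_{\mathbf{x}_L}(u_L) = \pi_Z(\mathbf{x}_L)$ for each $Z \in \Pi_Y \cap \mathbf{X}_L$, placing $\mathbf{y}_L$ inside that constancy region, so $Y_{\mathbf{x}_H}(u_H) = \tau_Y(\mathbf{y}_L) = \pi_Y(\mathbf{x}_H)$, as required.

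\emph{Submodel lifting.} Given a submodel $\mathcal{H}'$ of $\mathcal{H}$ with outcome set $\mathfrak{O}_{H'}$, define $\mathcal{L}' = \langle \mathcal{U}_L, \mathbf{V}_L, \mathfrak{O}_{L'}, \mathfrak{F}_{L'}, P_L\rangle$ with
\[
\mathfrak{O}_{L'} = \bigl\{\,Y_{\mathbf{x}_L} \in \mathfrak{O}_L : \exists\, Z \in \mathbf{V}_H \text{ with } Y \in \Pi_Z,\ \tau(\mathbf{x}_L) \text{ defined},\ Z_{\tau(\mathbf{x}_L)} \in \mathfrak{O}_{H'}\,\bigr\}
\]
and $\mathfrak{F}_{L'}$ obtained from $\mathfrak{F}_L$ by restricting the index set to $\mathfrak{O}_{L'}$. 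This is patently a submodel of $\mathcal{L}$. For $\mathcal{H}' \prec_\tau \mathcal{L}'$: by constructivity, the value of each retained high-level component $Z_{\mathbf{x}_H} \in \mathfrak{O}_{H'}$ under $\tau$ depends on exactly those low-level components $Y_{\mathbf{x}_L}$ with $Y \in \Pi_Z$ and $\tau(\mathbf{x}_L) = \mathbf{x}_H$, which are precisely the outcomes retained in $\mathfrak{O}_{L'}$. Thus $\tau$ factors through marginalization to $\mathfrak{O}_{L'}$, and applying this to $\tau_*(P^\mathcal{L}_\mathrm{cf}) = P^\mathcal{H}_\mathrm{cf}$ yields the desired equality $\tau_*(P^{\mathcal{L}'}_\mathrm{cf}) = P^{\mathcal{H}'}_\mathrm{cf}$.

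The main obstacle is the bookkeeping in part 1: one must verify that the realized low-level outcomes (not merely the chosen intervention) sit in the constancy region of $\tau_Y$, and it is precisely per-unit low-level effectiveness applied on the cluster $\Pi_Y \cap \mathbf{X}_L$ that closes this gap. Once the correct index set $\mathfrak{O}_{L'}$ is identified in part 2, the remainder is a routine commutation of marginalization with the variable-wise pushforward $\tau$.
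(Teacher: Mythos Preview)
Your argument is correct and follows the same route as the paper's proof, which packages the variable-wise decomposition you use inline as a separate lemma (Lem.~\ref{lem:parttransdecomp}) and then runs Part~1 by contrapositive and Part~2 by induction on the number of removed high-level outcomes, arriving at the same $\mathcal{L}'$. One small imprecision worth tightening: you write $\mathbf{y}_L = \{Z_{\mathbf{x}_L}(u_L)\}_{Z \in \Pi_Y}$ and apply $\tau_Y$ to it as if every such $Z_{\mathbf{x}_L}$ were in $\mathfrak{O}_L$, which need not hold; the clean statement (as in the paper's lemma) is that $\tau_Y$ is constant on the $\Pi_Y$-valuations agreeing with the \emph{defined} outcome coordinates, and low-level effectiveness on $\Pi_Y \cap \mathbf{X}_L \cap \mathbf{Y}_L$ is precisely what makes this region intersect the one coming from $\tau(\mathbf{x}_L)=\mathbf{x}_H$.
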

Thus our general class of effective RCMs closes under constructive translation.
The next example shows that this is not the case for the narrower class of representable models.

\begin{example}\label{ex:abstraction}
Let $X, Y, X', Y'$ be variables with
$\Val(X) = \{0, 1, 2\}$ and $\Val(X') = \Val(Y') = \Val(Y) = \{0, 1\}$.
Consider the RCM $\RCM_\LowLevel$ defined over $\EndoVars_\LowLevel = \{ X, Y\}$ as a conjunction of POs:
\begin{align}
    X = 1 \land Y = 1 \land Y_{X = 2} = 0 \land X_{Y = 0} = 1 \label{eq:abstractionexample:lowlevel}
\end{align}
for a single unit (suppressed above for clarity).
Consider a second RCM $\RCM_\HighLevel$ over $\EndoVars_\HighLevel = \{X', Y'\}$: \begin{align}
    X' = 1 \land Y' = 1 \land Y'_{X' = 1} = 0 \land X'_{Y' = 0} = 1. \label{eq:abstractionexample:highlevel}
\end{align}
Note that $\RCM_\HighLevel \prec_\tau \RCM_\LowLevel$ where $\tau$ is a constructive abstraction with $\Pi_{X'} = \{X\}$, $\Pi_{Y'} = \{Y\}$ and $\tau(X = 0) = 0$, $\tau(X = 1) = \tau(X = 2) = 1$, $\tau(Y = y) = y$.
Now $\RCM_\HighLevel$ violates both composition and reversibility, while $\RCM_\LowLevel$ is representable.
\end{example}

A second observation is that the analogue of the claim about submodels in Prop.~\ref{prop:abstractionproperties} does not hold for extensions:
\begin{example}
Consider enlarging \eqref{eq:abstractionexample:lowlevel} with the potential outcome $Y_{X = 1} = 1$.
Then there is no high-level abstraction under $\tau$ that defines the outcome $Y'_{X' = 1}$, since $Y_{X = 2} = 0 \land Y_{X = 1} = 1$ translates to $Y'_{X' = 1} = 0 \land Y'_{X' = 1} = 1$.
\end{example}

The main result of this section is that the phenomenon exhibited by Ex.~\ref{ex:abstraction} accounts for all representability failures:
\begin{theorem}[Abstract Representation]\label{thm:abstractionrepresentation}
    Let $\mathcal{R}$ be an RCM. Then there is a representable $\RCM_\LowLevel$ and constructive translation $\tau$ such that $\mathcal{R} \prec_\tau \mathcal{R}_\LowLevel$.
\end{theorem}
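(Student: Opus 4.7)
The plan is to give an explicit construction showing that any RCM $\mathcal{R}$ admits a representable low-level RCM $\mathcal{R}_\LowLevel$ together with a constructive translation $\tau$ satisfying $\mathcal{R} \prec_\tau \mathcal{R}_\LowLevel$. The guiding intuition, echoing Ex.~\ref{ex:abstraction}, is that failures of composition and reversibility in $\mathcal{R}$ arise because several causally distinct ``contexts'' are conflated in a single high-level value; by tagging each low-level value with the identity of the intervention that produced it, these contexts can be separated at the low level without disturbing what is visible through $\tau$.

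Enumerate the distinct interventions appearing in $\mathfrak{O}$ as $\Ante^0 = \varnothing, \Ante^1, \dots, \Ante^N$. For each $V \in \*V$ I would introduce a low-level variable $V'$ with $\Val(V') = \Val(V) \times \{0, \dots, N\}$, and an auxiliary variable $R$ with $\Val(R) = \{0, \dots, N\}$; set $\*V_\LowLevel = \{V' : V \in \*V\} \cup \{R\}$. The constructive translation $\tau$ uses the partition $\Pi_V = \{V'\}$ and $\Pi_\bot = \{R\}$, with $\tau_V(v, t) = v$ projecting out the tag; the dummy $R$ sits in the $\bot$-cell and is erased by $\tau$. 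For each high-level intervention $\Ante^i$, define its low-level lift $\tilde{\Ante}^i$ to set $R = i$ together with each $V' \in \tilde{\*X}^i = \{V' : V \in \*X^i\}$ to $(\pi_V(\Ante^i), i)$; a direct unfolding shows $\tau(\tilde{\Ante}^i) = \Ante^i$.

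To witness representability I would exhibit an SCM $\mathcal{M}_\LowLevel \in \mathfrak{M}_{\mathrm{uniq}}$: take a single exogenous $U$ with $\Val(U) = \mathcal{U}$ and distribution $P$, declare $R$ an endogenous root with constant function $f_R = 0$, and make every $V'$ a child of $R$ (with $U$ as exogenous parent), assigning $f_{V'}(R = t, u) = (f_{V_{\Ante^t}}(u), t)$ when $V_{\Ante^t} \in \mathfrak{O}$ and a designated default otherwise. The resulting graph is acyclic, so $\mathcal{M}_\LowLevel$ lies in $\mathfrak{M}_{\mathrm{uniq}}$, and a one-line induction on the topological order shows that under intervention $\tilde{\Ante}^i$, every non-intervened $V'$ is forced to $(f_{V_{\Ante^i}}(u), i)$. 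Letting $\mathcal{R}_\LowLevel$ be the RCM whose potential-outcome set is $\mathfrak{O}_\LowLevel = \{V'_{\tilde{\Ante}^i} : V_{\Ante^i} \in \mathfrak{O}\}$ and whose potential responses agree with those induced by $\mathcal{M}_\LowLevel$, we have that $\mathcal{M}_\LowLevel$ represents $\mathcal{R}_\LowLevel$, so $\mathcal{R}_\LowLevel$ is representable.

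It remains to verify $\tau_*(P^{\mathcal{R}_\LowLevel}_\counterfactual) = P^\mathcal{R}_\counterfactual$. For each unit $u$ the low-level counterfactual tuple is $\{(f_{V_{\Ante^i}}(u), i)\}_{V, i}$, and $\tau$ projects it to $\{f_{V_{\Ante^i}}(u)\}_{V, i} = \mathfrak{F}(u)$; since tags are forced to match the intervention index, each high-level counterfactual has a unique $\tau$-preimage of positive measure, giving exact equality of pushforwards. The principal technical obstacle is maintaining uniqueness of solutions in $\mathcal{M}_\LowLevel$ while still allowing every $V'$ to ``know'' which intervention is in effect: a flat SCM cannot propagate the intervention context, while a fully endogenously connected SCM admits spurious fixed points. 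The auxiliary root $R$ resolves both difficulties at once by broadcasting the intervention tag through a single acyclic layer.
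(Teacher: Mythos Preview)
Your construction is correct and genuinely different from the paper's. The paper does \emph{not} use an auxiliary register; instead it blows up each high-level variable $V$ into $n{+}1$ low-level copies $(V,1),\dots,(V,n{+}1)$, each carrying an extra dummy value $\star$, and takes $\Pi_V=\{(V,j)\}_j$ with $\tau_V$ returning the unique non-$\star$ coordinate (undefined otherwise). The low-level intervention $\Ante^i_\LowLevel$ sets only the $i$th copies of the variables in $\*X^i$, and the representing SCM must then detect the active intervention by pattern-matching on which copies are non-$\star$ among its endogenous parents; this forces a somewhat intricate recursive order and case analysis (their Lemma on solutions of $\SCM_\LowLevel$). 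Your device of a single root $R\in\Pi_\bot$ that broadcasts the intervention index, together with value-tagging $\Val(V')=\Val(V)\times\{0,\dots,N\}$, sidesteps all of that: the SCM is a one-layer star with $f_{V'}(t,u)=(f_{V_{\Ante^t}}(u),t)$, acyclicity and hence membership in $\mathfrak{M}_{\mathrm{uniq}}$ are immediate, and the abstraction check reduces to verifying $\tau(\tilde\Ante^i)=\Ante^i$ and $\tau_V((v,t))=v$, both one-liners via Lem.~\ref{lem:parttransdecomp}. The price you pay is using a nonempty $\Pi_\bot$, which the paper's construction avoids (every low-level variable there sits in some $\Pi_V$ with $V\in\*V_\HighLevel$); but Def.~\ref{def:constructiveabs} explicitly permits $\Pi_\bot$, so this is not a weakening of the theorem. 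Two small points worth tightening in a final write-up: (i) your claim that every non-intervened $V'$ is ``forced to $(f_{V_{\Ante^i}}(u),i)$'' only literally holds when $V_{\Ante^i}\in\mathfrak{O}$, and you should say explicitly that the default value is irrelevant because such $V'_{\tilde\Ante^i}\notin\mathfrak{O}_\LowLevel$; (ii) for intervened $V'$ with $V\in\*X^i\cap\*Y^i$ the value is $(\pi_V(\Ante^i),i)$, which agrees with $(f_{V_{\Ante^i}}(u),i)$ only by effectiveness of $\mathcal{R}$---worth making that appeal explicit.
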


It is worth remarking on the connection between a well-known twofold condition called the Stable Unit Treatment Value Assumption (SUTVA \cite[\S{}1.6]{Imbens2015}) and causal abstraction.
The first part of SUTVA is the assumption that ``the potential outcomes for any unit do not vary with the treatments assigned to other units''; this is already presumed by our definition of causal model, which does not admit interventions on multiple units (however, see Rmk.~\ref{rmk:app:sutva} for a way to model this condition within our framework as an application of abstraction).
The second part is that ``for each unit, there are no different forms or versions of each treatment level, which lead to different potential outcomes.'' Note that this assumption can be seen as guaranteeing the viability of the variable treatment levels abstraction, as it is simply a restatement of the condition we already identified---that the outcomes $Y_{\mathrm{tr}^i}(u)$ for any unit $u$ and treatment level $\mathrm{tr}^i$ must all agree.\footnote{Imbens and Rubin \cite{Imbens2015} also mention ways of fulfilling this condition requiring changes to the causal model definition. In the supplement (Rmk.~\ref{rmk:app:sutva2}) we show these alternative models can be represented within our framework.}

\section{Inference}\label{sec:inference}
The raison d'\^{e}tre of a causal inference framework is to provide a language for encoding causal assumptions and showing when and why conclusions follow from available data and appropriate assumptions. In this section, to provide further granularity on the comparison between RCMs and SCMs, we introduce a neutral formal language that is naturally interpreted relative to both of these models. The language systematizes reasoning about the probabilities of counterfactuals. 
Fixing a set $\mathfrak{O}$ of potential outcome pairs, we define a formal language $\mathcal{L}$ in two stages:

\begin{definition} The \emph{base language} $\mathcal{L}_{\textnormal{base}}$ is given by all Boolean combinations of statements $Y_{\Ante}=y$, alternatively written $y_{\Ante}$, for all $Y_\Ante \in \mathfrak{O}$, $y \in \Val(Y)$. Meanwhile, $\mathcal{L}$ is defined as the set of Boolean combinations of inequalities $\mathbf{t}_1 \geqslant \mathbf{t}_2$, where $\mathbf{t}_1,\mathbf{t}_2$ are generated as sums, products, and additive inverses of probability terms $\mathbf{P}(\varepsilon)$, where $\varepsilon \in \mathcal{L}_{\textnormal{base}}$. \end{definition}
The language $\mathcal{L}$ is the most expressive in a sequence of three languages introduced in \cite{ibelingicard2020,bareinboim:etal20} to formalize the ``causal hierarchy'' \citep{Pearl2009}. By restricting probability terms to purely ``observational'' or ``interventional'' quantities, it is possible to study the inferential limitations of data and assumptions at lower levels of this hierarchy. For present purposes, $\mathcal{L}$ naturally encodes prominent reasoning patterns in RCMs and in SCMs. Its semantics are straightforward in any $\SCM$ or $\RCM$ that includes all outcomes $\mathfrak{O}$: we generate a mapping of each polynomial term $\mathbf{t} \mapsto \semantics{\mathbf{t}} \in \mathbb{R}$ recursively with the crucial case being to map $\mathbf{P}(\varepsilon)$ to the probability calculable by marginalization of $p_\counterfactual^\RCM$ or $p_\counterfactual^\SCM$,
and then evaluate the atom $\mathbf{t}_1 \geqslant \mathbf{t}_2$ true iff $\semantics{\mathbf{t}_1} \ge \semantics{\mathbf{t}_2}$, recursing to define a semantics for all of $\Ll$. Over the class of all (recursive, possibly infinite) SCMs, $\mathcal{L}$ has been axiomatized \cite{ibelingicard2020} by a set of principles called $\textsf{AX}_3$, and the complexity of its satisfiability problem has been shown complete for the class $\exists \mathbb{R}$ \cite{mosse}.
The class of simple probability distributions over the atoms of $\Lbase$ is axiomatized by principles known as $\textsf{AX}_1$ \cite{ibelingicard2020}, which we will abbreviate $\textsf{AX}$.

\subsection{Potential Outcomes Assumptions}\label{sec:poassumptions:inference}
Reasoning about potential outcomes is often encoded in what we call the base language $\mathcal{L}_{\textnormal{base}}$, augmented with (typically implicit universal) quantifiers over units. For instance, the well known \emph{monotonicity} (or ``no defiers'' who do the opposite of their prescription) assumption \citep{AngristImbens,Imbens2015} says \begin{equation}\label{eq:pomonotonicity}
    \forall u.  X_{z^-}(u)=1 \rightarrow X_{z^+}(u)=1,
\end{equation} where $X$ and $Z$ are binary variables respectively meaning the treatment (actually taken) and the treatment prescribed, with $z^+,z^-$ abbreviating $Z=1, Z=0$ respectively.
We will use the same abbreviation for other binary variables, so that the above condition can be written succinctly as $x^+_{z^+} \rightarrow x^+_{z^-}$.
We also adopt this interpretation of $X, Z$ for the rest of \S{}\ref{sec:poassumptions:inference}.
We now explain how this and other causal assumptions in the potential outcomes framework can be encoded in $\mathcal{L}$:
\begin{definition}\label{def:equationaltranslation}
    Let $\xi$ be a well-formed, closed predicate formula in prenex normal form with a single quantifier over a variable $\{u\}$ and matrix in $\Lbase$; the $u$ can alternately be included in the atoms, e.g., by writing $Y_\Ante(u) = y$.
    Define its encoding $\mathrm{T}(\xi) \in \mathcal{L}$ as follows:
    \begin{equation*}
        \mathrm{T}(\xi) = \begin{cases}
\mathbf{P}(\lnot \mathrm{T}(\zeta)) = 0, & \xi = \forall u . \zeta, \text{ where } \zeta \in \mathcal{L}_{\textnormal{base}} \\
            \mathbf{P}(\mathrm{T}(\zeta)) > 0, & \xi = \exists u . \zeta, \text{ where } \zeta \in \mathcal{L}_{\textnormal{base}} \\
\end{cases}.
    \end{equation*} \end{definition}
Note that $\zeta${ is quantifier-free} in both cases.
Thus, e.g., the encoding of \eqref{eq:pomonotonicity} is $\mathbf{P}\big[\lnot(x^+_{z^-} \rightarrow x^+_{z^+})\big] = 0$.

Where $S$ is a set of $\Lbase$ assumptions let $\mathfrak{R}(S)$ be the class of RCMs whose potential outcomes obey every assumption in $S$, thus obeying $\forall u. \sigma$ where $u$ ranges over units for any $\sigma \in S$.
Also let $\mathrm{T}(S) = \{ \mathrm{T}(\forall u.\sigma) \}_{\sigma \in S}$ be the encoding of $S$ via Def.~\ref{def:equationaltranslation}.
Then we have the following: \begin{theorem}\label{thm:poassumptionscompleteness}
$\textsf{AX} + \mathrm{T}(S)$ is sound and complete for $\mathfrak{R}(S)$.
\end{theorem}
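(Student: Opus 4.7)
The plan is to prove the two halves separately, with completeness proceeding by a canonical-model construction that leverages the existing axiomatization of $\textsf{AX}$.

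Soundness is by direct verification. Every axiom of $\textsf{AX}$ holds in any $\mathcal{R} \in \mathfrak{R}(S)$ because the induced counterfactual distribution $P_\counterfactual^\mathcal{R}$ from Def.~\ref{def:counterfactualrcm} is literally a probability measure on $\Val(\mathfrak{O})$. For each $\sigma \in S$, membership $\mathcal{R} \in \mathfrak{R}(S)$ means $\sigma$ holds at every unit, so $\mathfrak{F}^{-1}\bigl(\{\*o : \*o \models \neg\sigma\}\bigr) = \varnothing$ and hence $P_\counterfactual^\mathcal{R}(\neg\sigma) = 0$, which is exactly the encoded formula $\mathrm{T}(\forall u.\sigma)$. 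Existential cases in Def.~\ref{def:equationaltranslation} are analogous, using that a nonempty fiber under $\mathfrak{F}$ has positive $P$-mass since $\mathcal{U}$ is finite.

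For completeness, suppose $\varphi \in \Ll$ is consistent with $\textsf{AX} + \mathrm{T}(S)$. By the established completeness of $\textsf{AX}$ for simple probability distributions over the atoms of $\Lbase$ (\cite{ibelingicard2020}), there exists a distribution $P$ on the finite set $\Val(\mathfrak{O})$ satisfying $\varphi$ together with every $\mathrm{T}(\forall u.\sigma)$ for $\sigma \in S$. I then read off a canonical RCM $\mathcal{R}$ from $P$: set $\mathcal{U} := \mathrm{supp}(P)$, define each $f_{Y_\Ante}(\*o) := \pi_{Y_\Ante}(\*o)$, and take the probability on units to be $P$ restricted to $\mathcal{U}$. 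Since $\mathfrak{F}$ is then the identity map on counterfactual valuations, $P_\counterfactual^\mathcal{R} = P$, whence $\mathcal{R} \models \varphi$. Membership $\mathcal{R} \in \mathfrak{R}(S)$ follows because for each $\sigma \in S$ the formula $\mathrm{T}(\forall u.\sigma)$ forces $P(\neg\sigma) = 0$, so no unit $\*o \in \mathcal{U}$ (which carries positive $P$-mass by construction) witnesses $\neg\sigma$, giving $\forall u.\sigma$ as required.

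The main subtlety I foresee concerns effectiveness. By the standing convention $\mathfrak{R}(S) \subset \mathfrak{R}_{\mathrm{eff}}$, and soundness therefore requires that the consequences of \eqref{eq:poeffectiveness} also be derivable, yet $\textsf{AX}$ alone imposes no such restriction on the witnessing distribution. The natural fix is to treat effectiveness as an implicit member of $S$: its encoding under $\mathrm{T}$ is a finite conjunction of atoms $\mathbf{P}(y_{\Ante}) = 0$ (one for each $Y \in \*X$ and each $y \neq \pi_Y(\Ante)$), which once included forces $\mathrm{supp}(P)$ to consist solely of effective counterfactuals and puts the canonical $\mathcal{R}$ into $\mathfrak{R}_{\mathrm{eff}}$. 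Finiteness of the canonical $\mathcal{U}$ is automatic from finiteness of $\*V$ and of each $\Val(V)$, so no further measure-theoretic work is needed.
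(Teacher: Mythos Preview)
Your argument is correct and follows essentially the same route as the paper: invoke completeness of $\textsf{AX}$ to obtain a distribution over counterfactual valuations satisfying $\varphi$ together with all of $\mathrm{T}(S)$, then take units to be the valuations of positive mass and read off the potential responses componentwise. The paper restricts to the outcomes $\mathfrak{O}_\varphi$ appearing in $\varphi$ rather than all of $\mathfrak{O}$, but this is inessential.

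One clarification on your effectiveness worry: in this theorem $\mathfrak{R}(S)$ is \emph{not} meant to sit inside $\mathfrak{R}_{\mathrm{eff}}$ by convention; this is exactly the ``otherwise specified'' case. The subsequent Corollary~\ref{cor:reffax} recovers $\mathfrak{R}_{\mathrm{eff}}$ by instantiating $S$ with the effectiveness schema, which is precisely the fix you propose. So your handling is right, but it belongs to the corollary rather than to the theorem itself, and no additional argument is needed for Theorem~\ref{thm:poassumptionscompleteness} as stated.
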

\begin{corollary}\label{cor:reffax}
    $\textsf{RCM} = \textsf{AX}+ \mathrm{T}\eqref{eq:poeffectiveness}$ is sound and complete for $\mathfrak{R}_{\mathrm{eff}}$.
\end{corollary}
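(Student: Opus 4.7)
The plan is to derive this as an immediate specialization of Theorem~\ref{thm:poassumptionscompleteness}. The key observation is that effectiveness \eqref{eq:poeffectiveness} is not a single formula but a schema: for each triple $(\*X, Y, \Ante)$ with $Y \in \*X \subset \*V$ and $\Ante \in \Val(\*X)$ such that $Y_\Ante$ is among the outcomes under consideration, it asserts the $\Lbase$-atom $Y_\Ante = \pi_Y(\Ante)$. Let $S_{\mathrm{eff}}$ denote the set of all such $\Lbase$-atoms. Each is quantifier-free and lives in the base language, so $S_{\mathrm{eff}}$ is a legitimate choice of assumption set in the sense required by Thm.~\ref{thm:poassumptionscompleteness}, and by the conventions laid out in the paragraph following Def.~\ref{def:equationaltranslation}, $\mathrm{T}\eqref{eq:poeffectiveness}$ is just $\mathrm{T}(S_{\mathrm{eff}}) = \{\mathrm{T}(\forall u.\,\sigma)\}_{\sigma \in S_{\mathrm{eff}}}$.

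Next I would verify the model-class identity $\mathfrak{R}(S_{\mathrm{eff}}) = \mathfrak{R}_{\mathrm{eff}}$. Unfolding definitions: an RCM lies in $\mathfrak{R}(S_{\mathrm{eff}})$ iff for every $\sigma \in S_{\mathrm{eff}}$ the universal closure $\forall u.\,\sigma$ holds of its potential responses; but this is exactly the statement that for all $u \in \mathcal{U}$ and every defined $Y_\Ante$ with $Y \in \*X$, $f_{Y_\Ante}(u) = \pi_Y(\Ante)$, which is the definition of $\mathfrak{R}_{\mathrm{eff}}$. Applying Thm.~\ref{thm:poassumptionscompleteness} with $S = S_{\mathrm{eff}}$ then yields soundness and completeness of $\textsf{AX} + \mathrm{T}(S_{\mathrm{eff}}) = \textsf{AX} + \mathrm{T}\eqref{eq:poeffectiveness} = \textsf{RCM}$ for $\mathfrak{R}(S_{\mathrm{eff}}) = \mathfrak{R}_{\mathrm{eff}}$. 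There is no genuine obstacle here; the only thing to be careful about is relativizing the schema to the fixed set $\mathfrak{O}$ of outcome pairs with respect to which $\Lbase$ is defined, so that the effectiveness instances quantified into $S_{\mathrm{eff}}$ are exactly those whose atoms $Y_\Ante = \pi_Y(\Ante)$ are well-formed in $\Lbase$—a bookkeeping matter rather than a substantive step.
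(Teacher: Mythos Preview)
Your proposal is correct and follows essentially the same approach as the paper: the paper's own proof is the single line ``Observe that $\mathfrak{R}_{\mathrm{eff}} = \mathfrak{R}\eqref{eq:poeffectiveness}$,'' which is precisely the model-class identity you spell out before invoking Thm.~\ref{thm:poassumptionscompleteness}. You have simply made explicit the schema bookkeeping that the paper leaves implicit.
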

One consequence of this completeness result is that purely propositional and predicate logic reasoning about potential outcomes can be interweaved with probabilistic reasoning, as in Ex.~\ref{eg:late} below.
Another consequence is a complete axiomatization of SCMs (which can be seen as a probabilistic lift of \cite{Halpern2000}):
\begin{corollary}
Let $\textsf{C}$, $\textsf{Rev}$ be universal statements of \eqref{eq:pocomposition} and \eqref{eq:poreversibility} respectively.
Then $\textsf{SCM} = \textsf{RCM} + \mathrm{T}(\textsf{C}) + \mathrm{T}(\textsf{Rev})$ is sound and complete for $\mathfrak{M}_{\mathrm{uniq}}$ (where every outcome is included in $\Lbase$). \label{cor:rep}
\end{corollary}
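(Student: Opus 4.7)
The plan is to reduce Corollary \ref{cor:rep} to Theorem \ref{thm:poassumptionscompleteness} via the representability result Prop.~\ref{prop:scmrepresentation}. Fix the outcome set $\mathfrak{O}$ to contain every pair $Y_{\Ante}$, so that the base language $\Lbase$ ranges over all potential outcomes. Let $S = \{\eqref{eq:poeffectiveness}, \textsf{C}, \textsf{Rev}\}$, so $\textsf{SCM} = \textsf{AX} + \mathrm{T}(S)$. By Thm.~\ref{thm:poassumptionscompleteness} this system is sound and complete for $\mathfrak{R}(S)$, the class of RCMs (with outcome set $\mathfrak{O}$) satisfying the assumptions universally over units. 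What remains is to show that, at the level of $\mathcal{L}$-theory, $\mathfrak{R}(S)$ and $\mathfrak{M}_{\mathrm{uniq}}$ are interchangeable.

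For soundness, I would take any $\mathcal{M} \in \mathfrak{M}_{\mathrm{uniq}}$ and check that its induced counterfactual distribution $P_\counterfactual^{\mathcal{M}}$ (Def.~\ref{def:counterfactualscm}) is precisely that of some $\mathcal{R} \in \mathfrak{R}(S)$. Effectiveness follows directly from Def.~\ref{def:intervention}, since intervened variables become constant functions and the unique solution must pin $Y$ to $\pi_Y(\Ante)$ when $Y \in \*X$. Composition and reversibility for SCMs with unique solutions are classical (\cite{Galles1998,Halpern2000}): given an SCM solution $\*v$ for $\*u$ under intervention $\*w$, if the values coincide with $y,z$ on $Y,Z$, then $\*v$ remains the unique solution under $\*wy$, giving \eqref{eq:pocomposition}; reversibility is handled similarly. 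Hence each axiom of $\textsf{SCM}$ holds under $\semantics{\cdot}^{\mathcal{M}}$, and since $\mathcal{L}$-semantics depends only on the counterfactual distribution, soundness on $\mathfrak{M}_{\mathrm{uniq}}$ follows from soundness on $\mathfrak{R}(S)$ from Thm.~\ref{thm:poassumptionscompleteness}.

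For completeness, suppose $\varphi \in \mathcal{L}$ is $\textsf{SCM}$-consistent. By Thm.~\ref{thm:poassumptionscompleteness}, $\varphi$ is satisfied in some $\mathcal{R} \in \mathfrak{R}(S)$. Because the language carries \emph{every} potential outcome in $\mathfrak{O}$, the witnessing $\mathcal{R}$ is full. Prop.~\ref{prop:scmrepresentation} then furnishes an SCM $\mathcal{M} \in \mathfrak{M}_{\mathrm{uniq}}$ representing $\mathcal{R}$; representability means $P_\counterfactual^{\mathcal{M}}$ marginalizes to $P_\counterfactual^{\mathcal{R}}$ on the outcomes of $\mathcal{R}$, but fullness makes these outcomes exhaust $\mathfrak{O}$, so the two counterfactual distributions agree on all atoms of $\Lbase$. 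Again using that $\mathcal{L}$-semantics is determined by the counterfactual distribution, $\mathcal{M} \models \varphi$.

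The only delicate step is the one I expect to be the main obstacle: aligning the encoding $\mathrm{T}$ with the universally quantified axioms. The conversion $\mathrm{T}(\forall u.\zeta) = (\mathbf{P}(\lnot \zeta) = 0)$ enforces $\zeta$ almost surely on $\mathcal{U}$, which is exactly what ``for every unit'' means when $P$ is merely a probability distribution and not necessarily uniform; here we use that Thm.~\ref{thm:poassumptionscompleteness} already absorbs this subtlety, so no additional work is needed. Everything else is bookkeeping, since Prop.~\ref{prop:scmrepresentation} does the heavy construction of the representing SCM.
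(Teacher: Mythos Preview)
Your proposal is correct and follows precisely the route the paper intends. The paper states Cor.~\ref{cor:rep} without an explicit proof, presenting it as an immediate consequence of Thm.~\ref{thm:poassumptionscompleteness} (applied with $S=\{\eqref{eq:poeffectiveness},\textsf{C},\textsf{Rev}\}$) together with Prop.~\ref{prop:scmrepresentation}, which supplies the bridge between full RCMs satisfying composition and reversibility and SCMs in $\mathfrak{M}_{\mathrm{uniq}}$; your write-up is exactly this derivation spelled out, including the appeal to \cite{Galles1998,Halpern2000} for soundness of \eqref{eq:pocomposition}, \eqref{eq:poreversibility} over $\mathfrak{M}_{\mathrm{uniq}}$.
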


\begin{example}\label{eg:late}
A seminal result from \cite{AngristImbens,Angrist1996} is that it is possible to estimate the Average Treatment Effect among the population of units who \emph{comply} with their treatment assignment, a quantity known as the \emph{Local} Average Treatment Effect (LATE): $\mathbf{E}(Y_{x^+} - Y_{x^-} | x^+_{z^+} \land x^-_{z^-})$, with $Y$ the outcome, which we assume binary purely for simplicity, and without loss of generality. Thm.~\ref{thm:poassumptionscompleteness} implies that this can be verified in our calculus, by appeal to two key assumptions \cite{AngristImbens,Angrist1996}:  monotonicity \eqref{eq:pomonotonicity} and 
\begin{eqnarray}
\text{Exclusion restriction (ER).} && \forall u. y_{z^-, x}(u) \leftrightarrow y_{z^+, x}(u).\label{eq:po:er}
\end{eqnarray} The original discovery was that these principles guarantee that $\text{LATE} = \text{ITT}_1/\text{ITT}_2$,  where the latter are
the average ``causal effects of assignment on treatment received and on the outcome of interest'' \cite{Angrist1996}, or symbolically:
\begin{align*}
    \text{ITT}_2 &= \mathbf{E}(X_{z^+} - X_{z^-}) = \mathbf{P}(x^+_{z^+} \land x^-_{z^-}) - \mathbf{P}(x^-_{z^+} \land x^+_{z^-})  \\
\text{ITT}_1 &= \mathbf{E}\big(Y_{z^+, X_{z^+}} - Y_{z^-, X_{z^-}}\big) = \mathbf{P}\big(y^+_{z^+, X_{z^+}} \land y^-_{z^-, X_{z^-}}\big) - \mathbf{P}\big(y^-_{z^+, X_{z^+}} \land y^+_{z^-, X_{z^-}}\big). \end{align*}
where in $\text{ITT}_1$, interventions like $X_{z}$ set $X$ at the unit level to the value that it would take under the intervention setting $Z$ to $z$; thus, e.g., we have $\mathbf{P}(y_{z, X_{z}}) = \mathbf{P}(y_{z, x^+} \land x^+_{z}) + \mathbf{P}(y_{z, x^-} \land x^-_{z})$.
Crucially, these two quantities can be estimated, e.g., through randomized experiments \cite[Ch. 23]{Imbens2015}.

However, over our most general class $\mathfrak{R}_{\mathrm{eff}}$ of RCMs, these two assumptions are not in fact sufficient to identify $\text{LATE}$.  Fig. \ref{fig:example:cmreq} illustrates a family of RCMs that satisfy \eqref{eq:pomonotonicity} and \eqref{eq:po:er}, but disagree on LATE. An additional principle, which Angrist et al. \cite{Angrist1996} offer as a matter of notation, we dub:
\begin{eqnarray}
\text{Outcome decomposition.} && \forall u. y_x(u) \leftrightarrow y_{z^+,x}(u).\label{eq:po:od}
\end{eqnarray}
It can then be shown that, taken together, \eqref{eq:pomonotonicity},  \eqref{eq:po:er}, and \eqref{eq:po:od} do indeed logically entail $\text{LATE} = \text{ITT}_1/\text{ITT}_2$; see Prop.~\ref{prop:appx:late} in the technical appendix for the derivation. 

There has been much discussion of monotonicity and exclusion restrictions (which are closely related to graphical assumptions; see \S\ref{section:graph} below), but what might justify outcome decomposition \eqref{eq:po:od}? One intuition might be that it somehow follows from the exclusion restriction \eqref{eq:po:er}: if the effect of $X$ on $Y$ is the same no matter the value of $Z$, then it would seem that omitting $z^+$ in the intervention should have no impact on the effect of $X$ on $Y$. Of course, the example in Fig. \ref{fig:example:cmreq} shows that this is too quick. 

It turns out that \eqref{eq:po:od} does follow from \eqref{eq:po:er} if we restrict attention to \emph{representable} RCMs. In fact, \eqref{eq:po:od} is derivable from \eqref{eq:po:er} and the principle of \emph{composition} \eqref{eq:pocomposition} in the calculus $\mathsf{AX}$, so long as we can reason along the way about the potential response $Z_x$. By composition, for any $x$ and $y$ we have $y_x \wedge z_x^+ \rightarrow y_{z^+,x}$ and $y_x \wedge z_x^-\rightarrow y_{z^-,x}$, and by ER \eqref{eq:po:er} the latter gives $y_x \wedge z_x^-\rightarrow y_{z^+,x}$. As $Z_x$ is binary, we have $z_x^+ \vee z_x^-$, and thus by propositional reasoning, $y_x \rightarrow y_{z^+,x}$. The other direction $y_{z^+,x}\rightarrow y_x$ follows from the same argument by contraposition, as $Y$ too is binary.

Thus, while the full power of composition is not invoked, it is natural to read this example and much of the literature as implicitly assuming something like representability (thus implying composition).
Another source of support for this is that under representability one can show (see Prop.~\ref{prop:appx:itt1}) that $\text{ITT}_1 = \mathbf{E}(Y_{z^+} - Y_{z^-})$, an even simpler and manifestly identifiable expression for this average effect.

\end{example}

\begin{figure} \centering 
\begin{tabular}{ llllllll } 
& & & & $\RCM(\varepsilon)$ & & & \\
\toprule
$u$ & $P(u)$ & $X_{z^+}(u)$ & $X_{z^-}(u)$ & $Y_{x^+, z}(u)$ & $Y_{x^-, z}(u)$ & $Y_{x^+}(u)$ & $Y_{x^-}(u)$ \\
\midrule
$u_0$ & $\nicefrac{3}{4}-\varepsilon$ & $1$ & $0$ & $1$ & $0$ & $1$ & $0$ \\
$u_1$ & $\nicefrac{1}{4}-\varepsilon$ & $1$ & $0$ & $0$ & $1$ & $0$ & $1$ \\
$u_2$ & $ \varepsilon$ & $1$ & $0$ & $0$ & $0$ & $1$ & $0$ \\
$u_3$ & $ \varepsilon$ & $1$ & $0$ & $1$ & $1$ & $1$ & $1$ \\
\bottomrule
\end{tabular}
\caption{A family of RCMs $\{\RCM(\varepsilon)\}_{0 \le \varepsilon \le \nicefrac{1}{4}}$ such that $\RCM(\varepsilon)$ is representable iff $\varepsilon = 0$ (though \eqref{eq:pocomposition} and \eqref{eq:poreversibility} are met for any unit). Note that all members of this family meet \eqref{eq:pomonotonicity} and \eqref{eq:po:er} (the latter guaranteed since columns $Y_{x^+, z}(u)$, $Y_{x^-, z}(u)$ give the potential outcome for any $z$). Also, any experimentally testable quantities---including, in particular, $\text{ITT}_1$ and $\text{ITT}_2$---agree across the family, with $\frac{\text{ITT}_1}{\text{ITT}_2} = \nicefrac{1}{2}$. However the assumption in question \eqref{eq:po:od} holds in $\RCM(\varepsilon)$ iff $\varepsilon = 0$, and $\text{LATE} = \nicefrac{1}{2} + \varepsilon$, so that $\text{LATE} = \frac{\text{ITT}_1}{\text{ITT}_2}$ only when this holds, and it is not estimable in general.}
  \label{fig:example:cmreq}
\end{figure}

\subsection{Graphical Assumptions} \label{section:graph} As we saw above (Prop.~\ref{prop:scmrepresentation}), SCMs can be understood as \emph{representations} of suitable RCMs. As such, they also suggest further sources of  assumptions for deriving causal inferences. In particular, qualitative patterns of functional dependence introduce the possibility of \emph{graphical} methods:
\begin{definition}
    Let $\SCM = \big\langle \ExoVars, \EndoVars, \{f_V\}_{V \in \EndoVars}, P \big\rangle$ be an SCM.
    Then define the \emph{causal diagram} $\Graph(\SCM)$ of $\mathcal{M}$ as a graph over nodes $\EndoVars$,
    with mixed directed edges $\rightarrow$ and \emph{bidirected arcs} $\Confounded$.
    For any $V, V' \in \EndoVars$, there is a directed edge $V \rightarrow V'$ if $V \in \mathbf{Pa}_{V'}$,
    and there is a bidirected edge $V \Confounded V'$ if $\ExoVars_V, \ExoVars_{V'} \subset \ExoVars$ are correlated under $P$ (including if $\ExoVars_V \cap \ExoVars_{V'} \neq \varnothing$).
\end{definition}
Letting $\mathfrak{M}(\Graph)$ be the set of SCMs with diagram $\Graph$, we have $\mathfrak{M}(\Graph) \subset \mathfrak{M}_{\mathrm{uniq}}$ provided the directed edges in $\Graph$ form a dag (see Lem.~\ref{lem:app:recursive}). We thus assume this acyclicity of any $\Graph$.
When interpreting over an SCM, we include every possible potential outcome in $\Ll$.
Just as we earlier encoded assumptions about the potential outcomes of an RCM into $\mathcal{L}$, we may do the same for SCMs regarding their graphs.
A first observation is that Def.~\ref{def:equationaltranslation} translates axiom C6 of \cite{Halpern2000} to $\textsf{ProbRec}$ of \cite{ibelingicard2020}, thus rederiving the system $\textsf{AX}_3$ for the class of all acyclic SCMs, i.e.  $\bigcup_\Graph \mathfrak{M}(\Graph)$, from the latter.
We now encode the content of (the assumption of having) a \emph{particular} diagram $\Graph$ into $\mathcal{L}$.
Let $\mathbf{Pa}_V^\Graph = \{V'  \in \*V : V' \to V \in \Graph\}$ be the directed parents in a graph $\Graph$ of a vertex $V$.
We encode by way of two schemas, encapsulating what some \citep{pearlfundamentallaws} have called ``the two fundamental laws of causal inference'':
\begin{definition}\label{def:tgraph}
Let the {exclusion restriction} schema $\textsf{ER}^{\Graph}$ be the $\Lbase$ principle $y_{\mathbf{a}} \leftrightarrow  y_{\mathbf{p}}$,
for all variables $Y \in \*V$ and sets of variables $\mathbf{A} \supset \mathbf{Pa}_V^\Graph$,
where $y \in \Val(Y)$, $\mathbf{a} \in \Val(\mathbf{A})$, $\mathbf{p} = \pi_{\mathbf{Pa}_V^\Graph}(\mathbf{a})$.
Let the counterfactual independence schema $\textsf{cf-sep}^\Graph$ be, for all pairs of variable sets $\{Y_i\}_{1 \le i \le n}, \{Y'_j\}_{1 \le j\le n'} \subset \mathbf{V}$ such that there are no $Y_i, Y'_j$ for which $Y_i = Y'_j$ or $Y_i \Confounded Y'_j$ in $\Graph$,
\begin{eqnarray}\label{eqn:cfsep}
\textsf{cf-sep}^\Graph. && \mathbf{P}\big[\bigwedge_{1 \le i \le n} (y_i)_{\mathbf{p}_i} \land \bigwedge_{1 \le j \le n'} (y'_j)_{\mathbf{p}'_j}\big] = \mathbf{P}\big[\bigwedge_{1 \le i \le n} (y_i)_{\mathbf{p}_i}\big] \cdot \mathbf{P}\big[\bigwedge_{1 \le j \le n'} (y'_j)_{\mathbf{p}_j}\big]
\end{eqnarray}
where $y_i \in \Val(Y_i)$, $y'_j \in \Val(Y'_j)$, $\mathbf{p}_i \in \Val(\mathbf{Pa}^\Graph_{Y_i}), \mathbf{p}'_j \in \Val(\mathbf{Pa}^\Graph_{Y'_j})$ for each $Y_i$, $Y'_j$.
Then the translation of $\Graph$ is the combination of axioms $\mathrm{T}(\Graph) = \mathrm{T}(\textsf{ER}^\Graph) + \textsf{cf-sep}^\Graph$. \end{definition}
Note that while Ex.~\ref{eg:late} in no way relies on graphs, if we accept a $\Graph$ where $Z\not\rightarrow Y$, then $\textsf{ER}^\Graph$ yields $y_x \leftrightarrow y_{zx} \leftrightarrow y_{z'x}$ without further ado. Importantly, however, $\mathrm{T}\eqref{eq:pomonotonicity}$ is not valid over $\mathfrak{M}(\Graph)$ for any $\Graph$ containing the edge $Z \rightarrow X$, revealing an extra-graphical provenance.
On the other hand, $\textsf{cf-sep}$ is inexpressible in $\Lbase$---inferentially, the two approaches are incomparable.

A long-standing question has been whether exclusion restriction and independence axioms together could be \emph{complete},
in that they capture all the inferential content of a given causal diagram $\Graph$ (see, e.g., \cite{TianPearl2002,Evans2018,RichardsonEvans}). Answering such questions can help with the development of tractable inference methods. 
Partial completeness results for limited queries are known \citep{ShpitserPearlComplete}, and the method from Tian \cite{TianPearl2002} supplies an algorithm that is complete with respect to all equality constraints \cite{Evans2018}. 
Placing no limitations on queries beyond their expressibility in $\Ll$---and thus including inequality constraints as well---but making certain restrictions on $\Graph$, we answer this question in the affirmative:
\begin{theorem}\label{thm:graphicalsystem}
    For any acyclic diagram $\Graph$, axioms $\mathrm{T}(\Graph) + \textsf{SCM}$ are sound for $\mathcal{L}$ over $\mathfrak{M}(\Graph)$,
    and also complete if the bidirected arcs in $\Graph$ form a disjoint union of complete graphs.
\end{theorem}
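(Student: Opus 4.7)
The plan is to split the argument into soundness, handled by direct verification against the diagram, and completeness, handled by combining Cor.~\ref{cor:rep} with a canonical ``response-function'' SCM construction that realizes $\Graph$ exactly.

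For soundness, fix $\mathcal{M} \in \mathfrak{M}(\Graph)$. Since $\mathfrak{M}(\Graph) \subset \mathfrak{M}_{\mathrm{uniq}}$, the axioms $\textsf{SCM}$ hold by Cor.~\ref{cor:rep}. For $\mathrm{T}(\textsf{ER}^\Graph)$, Def.~\ref{def:intervention} implies that in $\mathcal{M}_{\mathbf{a}}$ with $Y \notin \mathbf{A}$ the structural function $f_Y$ still depends only on $\mathbf{Pa}_Y^\Graph$ and $\mathbf{U}_Y$, so $Y$'s value under $\mathbf{a}$ equals its value under the reduced intervention $\pi_{\mathbf{Pa}_Y^\Graph}(\mathbf{a})$. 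For $\textsf{cf-sep}^\Graph$, the absence of bidirected arcs between the variable sets $\{Y_i\}$ and $\{Y'_j\}$ means their exogenous families are jointly independent in $\mathcal{M}$; by ER each $(y_i)_{\mathbf{p}_i}$ or $(y'_j)_{\mathbf{p}'_j}$ is a deterministic function of just its side's exogenous, so the joint factorizes.

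For completeness, assume $\varphi \in \mathcal{L}$ is consistent with $\mathrm{T}(\Graph) + \textsf{SCM}$. By Cor.~\ref{cor:rep}, some $\mathcal{M}' \in \mathfrak{M}_{\mathrm{uniq}}$ satisfies $\varphi$ together with all of $\mathrm{T}(\Graph)$. Using the complete-clique hypothesis, partition $\mathbf{V}$ into c-components $C_1, \dots, C_k$. For each $C$ introduce a single exogenous variable $U_C$ of type $\prod_{V \in C}\big(\Val(\mathbf{Pa}_V^\Graph) \to \Val(V)\big)$, and set $f_V(\mathbf{pa}_V, u_C) = u_C(V)(\mathbf{pa}_V)$ for each $V \in C$. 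The resulting diagram has exactly the directed edges of $\Graph$, and bidirected arcs appear precisely between within-c-component pairs --- matching $\Graph$ under our hypothesis. Define $P(U_C)$ as the pushforward of $\mathcal{M}'$'s exogenous distribution under the map sending each unit $\mathbf{u}$ to the $C$-indexed tuple of response functions $\{V \mapsto (\mathbf{p}_V \mapsto Y^{\mathcal{M}'}_{\mathbf{p}_V}(\mathbf{u}))\}_{V \in C}$, and set $P(\mathbf{U}) = \prod_C P(U_C)$; call the result $\mathcal{M}$. Within each $C$, the marginal counterfactual distribution of $\mathcal{M}$ on $\{Y_{\mathbf{p}_Y} : Y \in C\}$ matches $\mathcal{M}'$'s by construction. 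Iterated application of $\textsf{cf-sep}^\Graph$ to $\mathcal{M}'$ --- valid at each step because distinct c-components have no bidirected arcs between them --- factors the full joint across $C_1,\dots,C_k$, and the product structure of $P(\mathbf{U})$ gives the same factorization for $\mathcal{M}$. So the two models agree on all conjunctions of atoms $y_{\mathbf{p}_Y}$ with $\mathbf{p}_Y \in \Val(\mathbf{Pa}_Y^\Graph)$, and since both satisfy $\textsf{ER}^\Graph$ any atom $y_\mathbf{a}$ reduces to this canonical form; hence they satisfy the same $\mathcal{L}$-formulas, and in particular $\mathcal{M} \models \varphi$.

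The main obstacle is precisely where the complete-clique hypothesis enters. Without it, the bidirected graph can exhibit patterns such as $V_1 \Confounded V_2$ and $V_2 \Confounded V_3$ with no arc between $V_1$ and $V_3$, which cannot be realized by disjoint shared-exogenous clusters. Then $\textsf{cf-sep}^\Graph$ only separates fully disconnected subsets and the canonical construction above either introduces extra confounding (by lumping partially dependent variables into one shared $U$) or fails to match $\mathcal{M}'$'s joint because the ``nested'' partial factorizations needed are not among the axioms. This ties into the longstanding difficulty of axiomatizing the full set of constraints implied by an arbitrary latent-variable DAG (cf. \cite{TianPearl2002,Evans2018,RichardsonEvans}).
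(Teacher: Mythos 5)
Your proof is essentially correct but takes a genuinely different route on the completeness half. The paper's argument is proof-theoretic: it shows that $\mathrm{T}(\Graph)+\textsf{SCM}$ \emph{provably} reduces any conjunction of literals to a normal form whose only probability terms are $\mathbf{P}(\delta_{\*C})$ for complete response-function descriptions $\delta_{\*C}$ of each $\Confounded$-component $\*C$ (using $\textsf{ER}^\Graph$ plus composition to canonicalize atoms, then $\textsf{cf-sep}^\Graph$ to factorize), invokes the Positivstellensatz to extract real values for these terms, and only then builds the same one-exogenous-variable-per-component response-function SCM that you build. You instead work semantically: you use Cor.~\ref{cor:rep} to obtain some $\mathcal{M}'\in\mathfrak{M}_{\mathrm{uniq}}$ satisfying $\varphi$ together with all (finitely many) instances of $\mathrm{T}(\Graph)$, and then perform model surgery on $\mathcal{M}'$, pushing its per-component response-function distributions onto the canonical SCM and using the product over components. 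This is legitimate and somewhat shorter, since the real-algebraic work is already buried inside the completeness of $\textsf{AX}$ underlying Cor.~\ref{cor:rep}; what it gives up is the explicit normal form \eqref{eqn:normalform}, which is itself informative (it exhibits exactly which polynomial constraints a diagram imposes and feeds the paper's discussion of Verma-type constraints). Two small points to tighten. First, your final step ``since both satisfy $\textsf{ER}^\Graph$ any atom $y_{\mathbf{a}}$ reduces to this canonical form'' is too quick when $\mathbf{a}$ does not set all of $\mathbf{Pa}_Y^\Graph$: $\textsf{ER}^\Graph$ only applies to supersets of the parent set, and you need composition (available, since $\mathrm{T}(\textsf{C})$ is part of $\textsf{SCM}$ and holds almost surely in both models) together with acyclicity to resolve $Y_{\mathbf{a}}$ recursively into a Boolean combination of the atoms $V_{\mathbf{p}_V}$ --- exactly the move the paper makes by invoking ``$\textsf{ER}^\Graph$ and composition.'' Second, your soundness claim that absence of bidirected arcs between the two variable sets yields \emph{joint} independence of the corresponding exogenous families matches the paper's own justification, but note it rests on reading the diagram definition as encoding more than pairwise non-correlation; this is inherited from the paper, not a defect of your argument relative to it.
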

Often the famous \emph{d-separation} conditional independence criterion (Def.~\ref{def:app:dsep}) is used in place of our $\textsf{cf-sep}$. Since all instances of the latter are instances of the former, our Thm.~\ref{thm:graphicalsystem} is stronger (see Cor.~\ref{cor:app:dsepcomplete}).
This completeness result implies that for such a $\Graph$, any known graphical conclusions---including $do$-calculus, identifiability results, and bounds---can be rederived in our calculus, e.g.:
\begin{example}[Verma constraints]\label{ex:verma}

We derive the \emph{Verma constraint} \cite{VermaPearl1990,TianPearl2002} over the graph $\Graph$ of Fig.~\ref{fig:gverma}a that $\sum_w \mathbf{P}(y \mid z, w, x) \mathbf{P}(w \mid x)$ does not depend functionally on $x$:
\begin{multline*}
\sum_w \frac{\mathbf{P}(y, z, w, x) \mathbf{P}(w, x)}{\mathbf{P}(z, w, x) \mathbf{P}(x)}
\overset{\mathrm{T}(\textsf{C})}{=}
\sum_w \frac{\mathbf{P}(y_{zwx}, z_{ywx}, w_{yzx}, x_{yzw}) \mathbf{P}(w_x, x_w)}{\mathbf{P}(z_{wx}, w_{zx}, x_{zw}) \mathbf{P}(x)}\\
\overset{\mathrm{T}(\textsf{ER}^\Graph)}{=}
\sum_{w} \frac{\mathbf{P}(y_z, z_w, w_{x}, x) \mathbf{P}(w_x, x)}{\mathbf{P}(z_w, w_{x}, x) \mathbf{P}(x)}
\overset{\eqref{eqn:cfsep}}{=} \sum_{w} \frac{\mathbf{P}(y_z, w_x) \cancel{\mathbf{P}(z_w) \mathbf{P}(x) \mathbf{P}(w_x) \mathbf{P}(x)}}{\cancel{\mathbf{P}(z_w)\mathbf{P}(w_x)\mathbf{P}(x) \mathbf{P}(x)}}
=
\mathbf{P}(y_z).
\end{multline*}

\end{example}
To summarize, for a wide class of cases, Thm.~\ref{thm:graphicalsystem} tells us that the two principles encoded in $\mathrm{T}(\mathcal{G})$ exhaust the types of algebraic statements that a researcher is committed to when venturing a graphical assumption. Putting this in algebraic terms facilitates a perspective on such assumptions that can be naturally interpreted with respect to RCMs, independent of any representation by an SCM.

\subsection{Single-World Intervention Graphs}
Another graphical framework that draws  on ideas and concepts from both frameworks is that of \emph{single-world intervention graphs} (SWIGs) \cite{RichardsonRobins}. In comparison to the usual formulation of do-calculus, SWIGs facilitate reasoning with a wider class of expressions by combining graphical and potential outcome notations (see, e.g., \cite{pmlr-v89-malinsky19b}). 
Here we show that this ``hybrid'' framework can also naturally be assimilated to the logical perspective adopted in the present paper. Assuming acyclicity:

\begin{definition}
Let $\mathcal{D}$ be a dag over $\mathbf{V}$ and let $\mathbf{x}$ be an intervention.
Let $\mathbf{An}_V^{\mathcal{D}} = \{ V' \in \mathbf{V} : V' \to \dots \to V \text{ in } \mathcal{D} \}$ be the directed ancestors\footnote{I.e., the transitive closure of the converse of the parent sets $\mathbf{Pa}_V^{\mathcal{D}}$. Note that $V \in \mathbf{An}_V^{\mathcal{D}}$ and $\mathbf{Pa}_V^{\mathcal{D}} \subset \mathbf{An}_V^{\mathcal{D}}$.} of a variable $V$ in $\mathcal{D}$.
Then the SWIG $\mathcal{D}_{\mathbf{x}}$ has nodes labeled $\{ V_{\pi_{\mathbf{A}}(\mathbf{x})} : V \in \mathbf{V} \} \cup 
\{ \pi_{V}(\mathbf{x}) : V \in \mathbf{X}\}$, where $\mathbf{A} = (\mathbf{An}^{\mathcal{D}}_V \cap \mathbf{X})\setminus \{V\}$, with
nodes in the first set in this union termed \emph{random}, and those in the second termed \emph{fixed}.
This SWIG has edges
\begin{align*}
\{\pi_X(\mathbf{x}) \to V_{\mathbf{p}} : X \in \mathbf{X}, \Graph \text{ has edge } X \to V \} \cup \{V_{\mathbf{p}} \to V'_{\mathbf{p}'} : V \notin \mathbf{X}, \Graph \text{ has edge }V \to V' \}.
\end{align*}
\end{definition}
\begin{figure} \centering 
\subfigure[Mixed Diagram $\Graph$] {
\begin{tikzpicture}[SCM,scale=1, every node/.append style={transform shape}]
    \node (X) at (0,1) {$X$};
    \node (W) at (0,0) {$W$};
    \node (Z) at (1.25,0) {$Z$};
    \node (Y) at (2.5,0) {$Y$};

    \path [->] (X) edge (W);
    \path [->] (W) edge (Z);
    \path [->] (Z) edge (Y);
    \path [<->] (W) edge [dashed, bend left=60] (Y);
\end{tikzpicture}
}\hspace{0.25in}
\subfigure [Directed Graph $\mathcal{D}$] {
\begin{tikzpicture}[SCM,scale=1, every node/.append style={transform shape}]
    \node (X) at (0,1) {$X$};
    \node (W) at (0,0) {$W$};
    \node (Z) at (1.25,0) {$Z$};
    \node (Y) at (2.5,0) {$Y$};
    \node (U) at (1.25,1) {$U$};

    \path [->] (X) edge (W);
    \path [->] (W) edge (Z);
    \path [->] (Z) edge (Y);
    \path [->] (U) edge (W);
    \path [->] (U) edge (Y);
\end{tikzpicture}
   }\hspace{0.25in}
\subfigure [SWIG $\mathcal{D}_{W = w^+}$] {
\begin{tikzpicture}[SCM,scale=1, every node/.append style={transform shape}]
    \node (X) at (-1.25,1) {$X_\varnothing$};
    \node (W) at (-1.25,0) {$W_\varnothing$};
    \node (WF) at (0,0) {$w^+$};
    \node (Z) at (1.25,0) {$Z_{w^+}$};
    \node (Y) at (2.5,0) {$Y_{w^+}$};
    \node (U) at (1.25,1) {$U_\varnothing$};

    \path [->] (X) edge (W);
    \path [->] (WF) edge (Z);
    \path [->] (Z) edge (Y);
    \path [->] (U) edge (W);
    \path [->] (U) edge (Y);
\end{tikzpicture}
}
    \caption{(a) Graph $\Graph$ for Ex.~\ref{ex:verma}. Note that we have $\textsf{ER}^\Graph \vDash y_{zwx} \leftrightarrow y_z$, $z_{ywx} \leftrightarrow z_w$, $w_{yzx} \leftrightarrow w_x$, $x_{yzw} \leftrightarrow x$, $x_w \leftrightarrow x$, $z_{wx} \leftrightarrow z_w$, etc., while pairs where $\textsf{cf-sep}^\Graph$ is applicable include $\{W, Y\}$, $\{X, Z\}$ and $\{Z\}, \{W, X\}$ (and any pairwise subsets thereof). These instances are used in Ex.~\ref{ex:verma}. \label{fig:gverma}\\
    (b) Directed graph $\mathcal{D}$, obtained from $\Graph$ by making the latent exogenous variable $U$ explicit.\\
    (c) SWIG obtained from $\mathcal{D}$ for the intervention $W = w^+$, with fixed node $w^+$. \eqref{eq:swsep} implies $\mathbf{P}(w, y_{w^+}, z_{w^+}) \mathbf{P}(u) = \mathbf{P}(w, u) \mathbf{P}(y_{w^+}, z_{w^+}, u)$ since  every path between $W_\varnothing$ and either $Z_{w^+}$ or $Y_{w^+}$ goes through $U_\varnothing$, where the directions of the arrows do not collide.}\label{fig:swig}
\end{figure}
See Fig.~\ref{fig:swig}c for an example of this construction. Note that edges in the first set in the union above have fixed heads and random tails, while those in the second set have random heads and tails.

\begin{definition}
Define the following conditional independence schema $\textsf{sw-sep}^{\mathcal{D}_{\mathbf{x}}}$,
for any sets of random nodes $\{ (X_i)_{\mathbf{p}_i} : 1 \le i \le l \}$ and $\{ (Y_j)_{\mathbf{p}'_j} : 1 \le j \le m \}$ that are d-separated (Def.~\ref{def:app:dsepdir}) given $\{ (Z_k)_{\mathbf{p}''_k} : 1 \le k \le n \}$ in the SWIG $\mathcal{D}_{\mathbf{x}}$: 
\begin{multline}
\textsf{sw-sep}^{\mathcal{D}_{\mathbf{x}}}. \quad \mathbf{P}\big[\bigwedge_{\substack{1 \le i \le l\\1 \le j \le m}} (x_i)_{\mathbf{p}_i} \land (y_j)_{\mathbf{p}'_j}\big] \cdot
\mathbf{P}[\bigwedge_{1 \le k \le n} (z_k)_{\mathbf{p}''_k}]\\
= \mathbf{P}\big[\bigwedge_{\substack{1 \le i \le l\\1 \le k \le n}} (x_i)_{\mathbf{p}_i} \land (z_k)_{\mathbf{p}''_k}\big] \cdot \mathbf{P}\big[\bigwedge_{\substack{1 \le j \le m\\1 \le k \le n}} (y_j)_{\mathbf{p}'_j} \land (z_k)_{\mathbf{p}''_k}\big].
\label{eq:swsep}\end{multline}
\end{definition}
One notable model associated with SWIGs is the \emph{FFRCISTG} \cite{ROBINS19861393}; given the same graph, FFRCISTGs are compatible with SCMs \cite{RichardsonRobins} but issue fewer (potentially controversial) implications:
\begin{definition}
Let $\mathcal{R}$ be a full RCM. Then $\mathcal{R}$ is a \emph{FFRCISTG over $\mathcal{D}$} if every instance of $\mathrm{T}(\textsf{ER}^{\mathcal{D}})$ and $\textsf{sw-sep}^{\mathcal{D}}$ holds in its counterfactual distribution. Let $\mathfrak{F}(\mathcal{D})$ be the class of FFRCISTGs over $\mathcal{D}$.\label{def:ffrcistg}
\end{definition}
\begin{proposition}
Suppose the SCM $\mathcal{M} \in \mathfrak{M}(\mathcal{D})$ represents the full RCM $\mathcal{R}$. Then $\mathcal{R} \in \mathfrak{F}(\mathcal{D})$.
\qed
\end{proposition}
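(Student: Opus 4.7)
The plan is to reduce verification of $\mathcal{R} \in \mathfrak{F}(\mathcal{D})$ to two separate checks on $P_\counterfactual^{\mathcal{M}}$, exploiting the fact that representation delivers counterfactual equivalence on the full potential-outcome set. Since $\mathcal{R}$ is full, its potential outcome set $\mathfrak{O}$ is the entire collection of potential outcomes over $\mathbf{V}$, and since $\mathcal{M}$ represents $\mathcal{R}$ we have $P_\counterfactual^{\mathcal{R}}=P_\counterfactual^{\mathcal{M}}$ on all of $\Val(\mathfrak{O})$. It therefore suffices to verify that every instance of $\mathrm{T}(\textsf{ER}^{\mathcal{D}})$ and of $\textsf{sw-sep}^{\mathcal{D}}$ holds in $P_\counterfactual^{\mathcal{M}}$.

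The $\textsf{ER}^{\mathcal{D}}$ component is immediate from Def.~\ref{def:intervention}. Fix $Y$ and $\mathbf{A} \supset \mathbf{Pa}_Y^{\mathcal{D}}$ with $Y \notin \mathbf{A}$, and let $\mathbf{p} = \pi_{\mathbf{Pa}_Y^{\mathcal{D}}}(\mathbf{a})$. In both $\mathcal{M}_{\mathbf{a}}$ and $\mathcal{M}_{\mathbf{p}}$ every parent of $Y$ is held fixed to the corresponding coordinate of $\mathbf{p}$, so because $\mathcal{M} \in \mathfrak{M}_{\mathrm{uniq}}$ the unique solution at every unit $\mathbf{u}$ assigns $Y$ the common value $f_Y(\mathbf{u}_Y, \mathbf{p})$. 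Hence $y_{\mathbf{a}} \leftrightarrow y_{\mathbf{p}}$ holds pointwise, and its $\mathrm{T}$-encoding evaluates to unit probability.

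For $\textsf{sw-sep}^{\mathcal{D}}$ the strategy is to augment $\mathcal{D}_{\mathbf{x}}$ into a DAG $\mathcal{D}^{*}_{\mathbf{x}}$ by attaching the exogenous group $\mathbf{U}_V$ as parents of each random node $V_{\mathbf{p}}$, and then invoke the classical DAG Markov property. Unfolding $\mathcal{M}$'s structural equations recursively, each random node $V_{\mathbf{p}}$ becomes a deterministic function of its in-neighbors in $\mathcal{D}^{*}_{\mathbf{x}}$, i.e.\ its exogenous parents together with either parental random nodes or fixed-node constants. Because $\mathcal{D}$ is a plain dag (no bidirected arcs), the groups $\mathbf{U}_V$ are mutually $P$-independent, so the joint over $\mathcal{D}^{*}_{\mathbf{x}}$ factorizes according to $\mathcal{D}^{*}_{\mathbf{x}}$. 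Each $\mathbf{U}_V$ has $V_{\mathbf{p}}$ as its unique descendant in $\mathcal{D}^{*}_{\mathbf{x}}$, so any d-separation in $\mathcal{D}_{\mathbf{x}}$ lifts to a d-separation in $\mathcal{D}^{*}_{\mathbf{x}}$; the Markov property then delivers \eqref{eq:swsep}.

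The main obstacle lies precisely in this lifting-and-Markov step, which is the canonical SCM-to-SWIG correspondence of Richardson and Robins \cite{RichardsonRobins}: one must verify carefully that introducing the exogenous nodes opens no new d-connecting paths among the random SWIG nodes, that the recursive unfolding of structural equations genuinely witnesses the required functional-parent structure on $\mathcal{D}^{*}_{\mathbf{x}}$, and that the resulting DAG factorization is faithful to the SWIG topology. Once this path/factorization bookkeeping is in place, the remaining implications from d-separation to conditional independence are the standard consequences of the DAG Markov property, which then transports directly to the counterfactual distribution of $\mathcal{R}$.
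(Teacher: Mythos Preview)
Your proposal is correct and coincides with what the paper intends: the paper gives no proof beyond the \qed, deferring soundness to \cite{RichardsonRobins}, and your argument is precisely the Richardson--Robins construction spelled out (augment the SWIG with independent exogenous roots, verify the resulting DAG factorization, and read off conditional independence from d-separation). One minor slip: you write that $\mathbf{U}_V$ has $V_{\mathbf{p}}$ as its unique \emph{descendant}, but you mean unique \emph{child}; the lifting argument still goes through, since a node with a single outgoing edge creates no new paths among the original vertices.
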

Given that Def.~\ref{def:ffrcistg} already defines $\mathfrak{F}(\mathcal{D})$ in terms of $\mathcal{L}$-principles, while \cite{RichardsonRobins} have shown the soundness direction, the following is straightforward:
\begin{theorem}
$\textsf{RCM} + \mathrm{T}(\textsf{ER}^{\mathcal{D}}) + \bigcup_{\mathbf{x}}\textsf{sw-sep}^{\mathcal{D}_{\mathbf{x}}}$ is sound and complete over $\mathfrak{F}(\mathcal{D})$. \label{thm:swig}
\qed
\end{theorem}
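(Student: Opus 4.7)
The plan is to handle soundness and completeness separately, with both following essentially from the fact that Def.~\ref{def:ffrcistg} already characterizes $\mathfrak{F}(\mathcal{D})$ in terms of exactly the $\mathcal{L}$-principles at hand. For soundness, observe that by Def.~\ref{def:ffrcistg} any $\mathcal{R} \in \mathfrak{F}(\mathcal{D})$ is a full RCM whose counterfactual distribution validates every instance of $\mathrm{T}(\textsf{ER}^{\mathcal{D}})$ and $\textsf{sw-sep}^{\mathcal{D}_{\mathbf{x}}}$. Full RCMs belong to $\mathfrak{R}_{\mathrm{eff}}$, so Cor.~\ref{cor:reffax} ensures $\mathcal{R} \models \textsf{RCM}$ as well. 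The only non-trivial ingredient---that FFRCISTGs as classically formulated do satisfy the $\textsf{sw-sep}$ d-separation equalities---is precisely the soundness direction established by \cite{RichardsonRobins}, which motivates Def.~\ref{def:ffrcistg} in the first place.

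For completeness, the idea is to reduce to Thm.~\ref{thm:poassumptionscompleteness}. Observe that $\textsf{ER}^{\mathcal{D}}$, together with effectiveness \eqref{eq:poeffectiveness}, is a set of $\Lbase$ assumptions, so Thm.~\ref{thm:poassumptionscompleteness} already yields that $\textsf{RCM} + \mathrm{T}(\textsf{ER}^{\mathcal{D}})$ is complete for $\mathfrak{R}_{\mathrm{eff}} \cap \mathfrak{R}(\textsf{ER}^{\mathcal{D}})$. Suppose $\varphi \in \mathcal{L}$ is consistent with the full system $\Sigma := \textsf{RCM} + \mathrm{T}(\textsf{ER}^{\mathcal{D}}) + \bigcup_{\mathbf{x}}\textsf{sw-sep}^{\mathcal{D}_{\mathbf{x}}}$. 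I would then run the same Henkin-style model-construction that underlies Thm.~\ref{thm:poassumptionscompleteness}, but now also impose the further equational axioms $\textsf{sw-sep}^{\mathcal{D}_{\mathbf{x}}}$; because these are already direct $\mathcal{L}$-formulas and $\varphi$ is consistent with them, the construction yields a probability distribution on $\Val(\mathfrak{O})$ that simultaneously realizes $\varphi$ and every member of $\Sigma$. Identifying this distribution as the pushforward from units $\mathcal{U} = \Val(\mathfrak{O})$ with $\mathfrak{F}$ the identity realizes it as the counterfactual distribution of an RCM, as in Def.~\ref{def:counterfactualrcm}.

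The main (mild) subtlety is ensuring that the RCM so produced is \emph{full}, since Def.~\ref{def:ffrcistg} demands this. This is accommodated by taking $\mathfrak{O}$ at the outset to include every potential outcome over $\mathbf{V}$, as in the setup of Cor.~\ref{cor:rep}; the constructed RCM then has no proper extension. Combined with the verified validity of $\mathrm{T}(\textsf{ER}^{\mathcal{D}})$ and $\textsf{sw-sep}^{\mathcal{D}_{\mathbf{x}}}$ in its counterfactual distribution, this places the model in $\mathfrak{F}(\mathcal{D})$ by Def.~\ref{def:ffrcistg}, giving the desired counterexample to $\varphi$. No machinery beyond Thm.~\ref{thm:poassumptionscompleteness}, Cor.~\ref{cor:reffax}, and the soundness direction from \cite{RichardsonRobins} is required---this is what justifies calling the result ``straightforward.''
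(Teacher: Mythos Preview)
Your proposal is correct and takes essentially the same approach as the paper. The paper offers no explicit proof beyond the prefatory remark that the result is ``straightforward'' because Def.~\ref{def:ffrcistg} already characterizes $\mathfrak{F}(\mathcal{D})$ by the very $\mathcal{L}$-principles in the axiom system (yielding completeness via the $\textsf{AX}$ model construction, exactly as you describe) and because \cite{RichardsonRobins} supplies the substantive soundness content; your write-up simply unpacks this in detail, including the point that fullness is secured by taking $\mathfrak{O}$ to contain all potential outcomes. One tiny clean-up: when you instantiate units as $\mathcal{U}=\Val(\mathfrak{O})$ with $\mathfrak{F}$ the identity, restrict to the support of the constructed distribution (as in the proof of Thm.~\ref{thm:poassumptionscompleteness}) so that every unit satisfies effectiveness pointwise, not merely almost surely.
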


\section{Conclusion}
The task of this paper has been to clarify the senses in which the Rubin causal model and structural causal models are very closely related formalisms for encoding causal assumptions and deriving causal conclusions. We concur with \cite{MorganWinship}, \cite{Imbens2020}, \cite{Weinberger2022} and others that ``there are insights that arise when using each that are less transparent when using the other'' \citep[p. 8]{Weinberger2022}. Our interest in this paper has been to elucidate the comparison from a theoretical (``in principle'') perspective. 

We do not suppose that the present work will be the final word on theoretical connections between RCMs and SCMs. On the contrary, there remain numerous open questions. Perhaps chief among these is the generalization of Thm.~\ref{thm:graphicalsystem} to encompass all possible causal diagrams (not just those in which the bidirected arcs form a disjoint union of complete graphs). Does the theorem hold with no further principles, or do additional algebraic constraints arise? This important open question \cite{TianPearl2002,Evans2018,RichardsonEvans} is a crucial step toward a complete theoretical synthesis of the two frameworks. 

\subsubsection*{Acknowledgments} This work was partially supported by a seed grant from the Stanford  Institute for Human-Centered Artificial Intelligence. We also thank Elias Bareinboim and Guido Imbens for helpful conversations. 


{
\small

\bibliographystyle{abbrvnat}
\bibliography{uai2023}
}

\newpage
\section*{Appendices}
\appendix
These appendices contain demonstrations of the results in the main text as well as additional technical notes.

\section{Modeling (\S{}1)}\label{s:app:modeling}

\subsection{Preliminaries}

\begin{proposition}\label{prop:app:limitofunits}
Uniform RCMs are dense (in the weak subspace topology on counterfactual distributions\footnote{See \cite{NEURIPS2021_2c463dfd, Billingsley} for more detail on this topology and inducing metrics.}): for any $\RCM$, $\varepsilon > 0$, there is a $\RCM'$ whose distribution is uniform that $\varepsilon$-approximates $\RCM$ (in, e.g., the standard metric on counterfactual distributions considered as points in Euclidean space, or the L\'{e}vy-Prohorov metric).
\end{proposition}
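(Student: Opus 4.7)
The plan is to approximate an arbitrary RCM $\mathcal{R} = \langle \mathcal{U}, \mathbf{V}, \mathfrak{O}, \mathfrak{F}, P \rangle$ by a uniform RCM via a standard density-of-rationals-plus-duplication argument. Since the counterfactual distribution $P_\counterfactual^\mathcal{R} = \mathfrak{F}_*(P)$ is supported on the finite set $\Val(\mathfrak{O})$, all reasonable topologies on this space (weak topology, Euclidean, L\'{e}vy--Prohorov) coincide with the total variation topology on the finite-dimensional simplex, so it suffices to control total variation distance between pushforwards.

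First, I would fix $\varepsilon > 0$ and approximate $P$ by a rational-valued distribution $\tilde{P}$ on the same unit set $\mathcal{U}$ with $\sum_{u \in \mathcal{U}} |P(u) - \tilde{P}(u)| < \varepsilon$; this is possible because the rational points are dense in the probability simplex over the finite set $\mathcal{U}$. Write $\tilde{P}(u) = n_u / N$ with $n_u \in \mathbb{Z}_{\ge 0}$ and $N = \sum_u n_u$. Then build the uniform RCM $\mathcal{R}' = \langle \mathcal{U}', \mathbf{V}, \mathfrak{O}, \mathfrak{F}', P' \rangle$ by taking $\mathcal{U}' = \bigsqcup_{u \in \mathcal{U}} \{u^1, \dots, u^{n_u}\}$, setting $f'_{Y_\Ante}(u^k) = f_{Y_\Ante}(u)$ for each $Y_\Ante \in \mathfrak{O}$ and each copy, and letting $P'$ be uniform on $\mathcal{U}'$.

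Next I would verify that the pushforward under $\mathfrak{F}'$ of $P'$ equals the pushforward of $\tilde{P}$ under $\mathfrak{F}$: by construction, for any counterfactual $\*o \in \Val(\mathfrak{O})$, the set $(\mathfrak{F}')^{-1}(\{\*o\})$ consists of exactly those copies $u^k$ with $\mathfrak{F}(u) = \*o$, which has total $P'$-mass $\sum_{u : \mathfrak{F}(u) = \*o} n_u/N = \tilde{P}(\mathfrak{F}^{-1}(\{\*o\}))$. Finally, since pushforward is a contraction with respect to total variation (one easily gets $\|P_\counterfactual^{\mathcal{R}'} - P_\counterfactual^{\mathcal{R}}\|_{\mathrm{TV}} \le \|\tilde P - P\|_{\mathrm{TV}} < \varepsilon$ by a direct calculation summing over fibers), the counterfactual distribution of the uniform RCM $\mathcal{R}'$ is within $\varepsilon$ of that of $\mathcal{R}$ in total variation, hence in all equivalent metrics mentioned.

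The only subtle step is the topological translation: one must note that since $\Val(\mathfrak{O})$ is finite, convergence in the weak topology on probability measures there reduces to entrywise convergence, which is equivalent to convergence in any norm on $\mathbb{R}^{|\Val(\mathfrak{O})|}$ and in the L\'{e}vy--Prohorov metric. With this observation in hand, no further argument is required beyond the bookkeeping above; there is no genuine obstacle, and the main thing to be careful about is making the approximation simultaneously rational and normalized, which is handled by the standard trick of rounding all but one coordinate and absorbing the correction into the last.
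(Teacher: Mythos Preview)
Your proposal is correct and follows exactly the same approach as the paper's proof, which is simply the one-line observation that density of $\mathbb{Q}$ in $\mathbb{R}$ yields a nearby rational distribution, and any rational distribution can be realized as a uniform one by duplicating units. Your write-up is just a fully unpacked version of that sentence, including the explicit duplication construction and the bookkeeping on pushforwards and metrics.
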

\begin{proof}
Follows by density of $\mathbb{Q} \subset \mathbb{R}$ since any RCM with a rational distribution is representable also by a uniform distribution on units.
\end{proof}
\begin{proof}[Proof of Prop.~\ref{prop:scmrepresentation}]
Suppose $\mathcal{R} = \langle \mathcal{U}, \*V, \mathfrak{O}, \{f_{Y_\Ante}\}_{Y_\Ante \in \mathfrak{O}}, P\rangle$ is representable. Then we have an SCM $\SCM = \langle \*U, \*V, \{f_V\}_V, P'\rangle \in \mathfrak{M}_{\mathrm{uniq}}$ inducing its counterfactual distribution on $\Val(\mathfrak{O})$.
For each $u \in \mathcal{U}$, there must be a ${\*u}_{u} \in \Val(\*U)$ such that $P'({\*u}_{u}) > 0$ and $\mathcal{M}_\Ante, {\*u}_{u} \models Y = f_{Y_\Ante}(u)$ for each $Y_\Ante \in \mathfrak{O}$.
Consider the RCM $\RCM' = \langle \mathcal{U}, \*V, \{Y_\Ante\}_{\text{all } Y_\Ante}, \{f'_{Y_\Ante}\}_{Y_\Ante}, P\rangle$ where we define $f'_{Y_\Ante}(u) = f_{Y_\Ante}(u)$ for each $Y_\Ante \in \mathfrak{O}$, and $f'_{Y_\Ante} = \pi_Y(\*v)$ for each $Y_\Ante \notin \mathfrak{O}$ where $\*v$ is the unique solution such that $\SCM_\Ante, \mathbf{u}_u \models \*v$.
Note that $\RCM'$ extends $\RCM$ by construction, has no proper extension itself, and satisfies composition and reversibility by the soundness direction of \cite[Thm.~3.2]{Halpern2000}, as these principles must apply to $\SCM$.

Conversely suppose we have the extension $\RCM'$. Then by the completeness direction of the proof of \cite[Thm.~3.2]{Halpern2000} we can construct an SCM $\SCM \in \mathfrak{M}_{\mathrm{uniq}}$ representing $\RCM'$ and thus also $\RCM$. Specifically, we can derive a unique maximal consistent set $\Gamma$ from the outcomes defined in $\RCM'$ (unique maximality is guaranteed since $\RCM'$ has no proper extension and consistency since $\RCM'$ meets composition and reversibility), and from there define the equations of $\SCM$.
\end{proof}

\subsection{Causal Abstraction}

In this section we will use the following technical result to decompose constructive partial translations.
\begin{lemma}\label{lem:parttransdecomp}
Suppose $\tau$ is a constructive translation of $\EndoVarsLowLevel$ to $\EndoVarsHighLevel$ and
let ${\*x}_\LowLevel, {\*x}_\HighLevel$ be partial settings of ${\*X}_\LowLevel \subset {\*V}_\LowLevel, {\*X}_\HighLevel \subset {\*V}_\HighLevel$ respectively.
Then $\tau({\*x}_\LowLevel) = {\*x}_\HighLevel$ iff:
\begin{enumerate}
\item For each $X_\HighLevel \in {\*X}_\HighLevel$, we have that $\tau_{X_\HighLevel}(\theta_{X_\HighLevel}) = \pi_{X_\HighLevel}({\*x}_\HighLevel)$ where we define $\theta_{V_\HighLevel} = \{\*x \in \Val(\Pi_{V_\HighLevel}) : \pi_{{\*X}_\LowLevel \cap \Pi_{V_\HighLevel}}(\*x) = \pi_{{\*X}_\LowLevel \cap \Pi_{V_\HighLevel}}({\*x}_\LowLevel)\}$ for each $V_\HighLevel \in {\*V}_\HighLevel$.\label{lem:parttransdecomp:1}
\item For each $V_\HighLevel \notin {\*X}_\HighLevel$, we have that $\tau_{V_\HighLevel}(\theta_{V_\HighLevel}) = \Val(V_\HighLevel)$.\label{lem:parttransdecomp:2}
\end{enumerate}
\begin{proof}
First we show the ``if'' direction. It follows easily from \ref{lem:parttransdecomp:1} that $\tau({\*v}_\LowLevel) \in \pi^{-1}_{{\*X}_\HighLevel}({\*x}_\HighLevel)$ for any ${\*v}_\LowLevel \in \pi^{-1}_{{\*X}_\LowLevel}({\*x}_\LowLevel)$, and it follows from \ref{lem:parttransdecomp:2} and \ref{lem:parttransdecomp:1} that for any ${\*v}_\HighLevel \in \pi^{-1}_{{\*X}_\HighLevel}({\*x}_\HighLevel)$ there is a ${\*v}_\LowLevel \in \pi^{-1}_{{\*X}_\LowLevel}({\*x}_\LowLevel)$ with $\tau({\*v}_\LowLevel) = {\*v}_\HighLevel$ (specifically, any extension of $\Ante_\LowLevel$ works).

Now we show the ``only if'' direction. To show \ref{lem:parttransdecomp:1}: suppose not. Then since $\theta_{V_\HighLevel}$ is always nonempty there is some ${\*x} \in\Val(\Pi_{X_\HighLevel})$ such that $\pi_{{\*X}_\LowLevel \cap \Pi_{X_\HighLevel}}(\*x) = \pi_{{\*X}_\LowLevel \cap \Pi_{X_\HighLevel}}({\*x}_\LowLevel)$ and $\tau_{X_\HighLevel}(\*x) \neq \pi_{X_\HighLevel}({\*x}_\HighLevel)$.
But then we cannot have $\tau({\*v}_\LowLevel) \in \pi^{-1}_{{\*X}_\HighLevel}({\*x}_\HighLevel)$ for any ${\*v}_\LowLevel \in \pi^{-1}_{{\*X}_\LowLevel}({\*x}_\LowLevel) \cap \pi^{-1}_{\Pi_{X_\HighLevel}}({\*x})$, which is nonempty.
To show \ref{lem:parttransdecomp:2}: suppose not. Then there is some $v_\HighLevel \in \Val(V_\HighLevel)$ such that there is no ${\*x} \in \Val(\Pi_{V_\HighLevel})$ with $\pi_{{\*X}_\LowLevel\cap \Pi_{V_\HighLevel}}(\Ante) = \pi_{{\*X}_\LowLevel \cap \Pi_{V_\HighLevel}}(\Ante_\LowLevel)$ and $\tau_{V_\HighLevel}(\Ante) = v_\HighLevel$.
But then for any ${\*v}_\HighLevel \in \pi^{-1}_{{\*X}_\HighLevel}(\Ante_\HighLevel) \cap \pi^{-1}_{V_\HighLevel}(v_\HighLevel)$, which is nonempty since $V_\HighLevel \notin {\*X}_\HighLevel$, there cannot be any ${\*v}_\LowLevel \in \tau^{-1}({\*v}_\HighLevel) \cap \pi^{-1}_{{\*X}_\LowLevel}(\Ante_\LowLevel)$, contradicting that $\tau(\Ante_\LowLevel) = \Ante_\HighLevel$.
\end{proof}
\end{lemma}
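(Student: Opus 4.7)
My plan is to unpack the definitions and verify the biconditional in each direction by a direct set-theoretic argument, since the claim is essentially a structural decomposition of the defining identity $\tau(\pi^{-1}_{\mathbf{X}_L}(\mathbf{x}_L)) = \pi^{-1}_{\mathbf{X}_H}(\mathbf{x}_H)$ into its constructive components. Recall that for a constructive $\tau$, a full valuation $\mathbf{v}_L \in \Val(\mathbf{V}_L)$ is sent to $\{\tau_{V_H}(\pi_{\Pi_{V_H}}(\mathbf{v}_L))\}_{V_H \in \mathbf{V}_H}$, so every question about $\tau$ on the cylinder over $\mathbf{x}_L$ factors through the per-cell restrictions $\pi_{\Pi_{V_H}}(\mathbf{v}_L)$, and the set of possible such restrictions (as $\mathbf{v}_L$ ranges over extensions of $\mathbf{x}_L$) is exactly $\theta_{V_H}$. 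I would observe first that each $\theta_{V_H}$ is nonempty, as $\pi_{\mathbf{X}_L \cap \Pi_{V_H}}(\mathbf{x}_L)$ can always be completed to a full valuation of $\Pi_{V_H}$ by choosing arbitrary values outside $\mathbf{X}_L$.

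For the ``if'' direction, assume \ref{lem:parttransdecomp:1} and \ref{lem:parttransdecomp:2}. To see $\tau(\pi^{-1}_{\mathbf{X}_L}(\mathbf{x}_L)) \subset \pi^{-1}_{\mathbf{X}_H}(\mathbf{x}_H)$, pick any $\mathbf{v}_L$ extending $\mathbf{x}_L$; then for each $X_H \in \mathbf{X}_H$ the restriction $\pi_{\Pi_{X_H}}(\mathbf{v}_L)$ lies in $\theta_{X_H}$, so by \ref{lem:parttransdecomp:1} its image under $\tau_{X_H}$ is $\pi_{X_H}(\mathbf{x}_H)$; hence $\tau(\mathbf{v}_L)$ extends $\mathbf{x}_H$. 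For the reverse inclusion, given $\mathbf{v}_H$ extending $\mathbf{x}_H$, I would construct a preimage cell-by-cell: for each $V_H \in \mathbf{X}_H$ pick any $\mathbf{w}_{V_H} \in \theta_{V_H}$ (nonempty, and \ref{lem:parttransdecomp:1} guarantees the image is $\pi_{V_H}(\mathbf{v}_H) = \pi_{V_H}(\mathbf{x}_H)$), while for each $V_H \notin \mathbf{X}_H$ use the surjectivity provided by \ref{lem:parttransdecomp:2} to find $\mathbf{w}_{V_H} \in \theta_{V_H}$ with $\tau_{V_H}(\mathbf{w}_{V_H}) = \pi_{V_H}(\mathbf{v}_H)$; for the $\bot$-cell pick any completion agreeing with $\pi_{\mathbf{X}_L \cap \Pi_\bot}(\mathbf{x}_L)$. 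Glueing the $\mathbf{w}_{V_H}$'s across the partition yields $\mathbf{v}_L \in \pi^{-1}_{\mathbf{X}_L}(\mathbf{x}_L)$ with $\tau(\mathbf{v}_L) = \mathbf{v}_H$.

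For the ``only if'' direction, I would argue by contraposition, exactly along the lines hinted in the lemma statement. If \ref{lem:parttransdecomp:1} failed for some $X_H$, there would be $\mathbf{x} \in \theta_{X_H}$ with $\tau_{X_H}(\mathbf{x}) \neq \pi_{X_H}(\mathbf{x}_H)$; since $\mathbf{x}$ agrees with $\mathbf{x}_L$ on $\mathbf{X}_L \cap \Pi_{X_H}$, I can extend to a full $\mathbf{v}_L \in \pi^{-1}_{\mathbf{X}_L}(\mathbf{x}_L) \cap \pi^{-1}_{\Pi_{X_H}}(\mathbf{x})$, and then $\tau(\mathbf{v}_L)$ would disagree with $\mathbf{x}_H$ at $X_H$, contradicting $\tau(\mathbf{x}_L) = \mathbf{x}_H$. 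If \ref{lem:parttransdecomp:2} failed for some $V_H \notin \mathbf{X}_H$ and value $v_H \in \Val(V_H)$ not in $\tau_{V_H}(\theta_{V_H})$, then since $V_H \notin \mathbf{X}_H$ there is some $\mathbf{v}_H$ extending $\mathbf{x}_H$ with $\pi_{V_H}(\mathbf{v}_H) = v_H$; the assumed identity gives a preimage $\mathbf{v}_L$ extending $\mathbf{x}_L$ with $\tau_{V_H}(\pi_{\Pi_{V_H}}(\mathbf{v}_L)) = v_H$, but $\pi_{\Pi_{V_H}}(\mathbf{v}_L) \in \theta_{V_H}$, a contradiction.

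The argument is conceptually routine, so I do not anticipate a real obstacle; the only place requiring care is bookkeeping around the $\bot$-cell of the partition and the fact that $\mathbf{X}_L$ may intersect several cells (including $\Pi_\bot$), which is what makes $\theta_{V_H}$ the correct intermediate object rather than all of $\Val(\Pi_{V_H})$. Ensuring that the glueing step in the ``if'' direction's reverse inclusion produces a valuation that simultaneously extends $\mathbf{x}_L$ and projects correctly on each cell is the one place I would write out explicitly, as it uses disjointness of the cells $\{\Pi_{V_H}\}_{V_H \in \mathbf{V}_H \cup \{\bot\}}$ in an essential way.
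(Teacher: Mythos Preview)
Your proposal is correct and follows essentially the same approach as the paper's proof: both directions are handled by unpacking the cylinder identity $\tau(\pi^{-1}_{\mathbf{X}_\LowLevel}(\mathbf{x}_\LowLevel)) = \pi^{-1}_{\mathbf{X}_\HighLevel}(\mathbf{x}_\HighLevel)$ cell-by-cell, with the ``only if'' direction argued by contraposition exactly as you outline. Your write-up is in fact more explicit than the paper's in two places---the cell-by-cell glueing for the reverse inclusion and the handling of the $\bot$-cell---both of which the paper leaves implicit.
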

\begin{proof}[Proof of Prop.~\ref{prop:abstractionproperties}]
Suppose $\mathcal{H}$ is not effective so there is some $\{ (\mathbf{y}^j_\HighLevel)_{\Ante^j_\HighLevel}\}_{1\le j \le n}$ in its counterfactual support such that for some $k$ and $Y$ we have that $\pi_Y(\mathbf{y}^k_\HighLevel) \neq \pi_Y(\Ante^k_\HighLevel)$. This means there is some $ (\mathbf{y}^l_\LowLevel)_{\Ante^l_\LowLevel}$ in a counterfactual in the support of $\mathcal{L}$ with $\tau(\mathbf{y}^l_\LowLevel) = \mathbf{y}^k_\HighLevel$ and $\tau(\Ante^l_\LowLevel) = \Ante^k_\HighLevel$. By Lem.~\ref{lem:parttransdecomp}, \ref{lem:parttransdecomp:1}, 
there is no $\*x \in \Val(\Pi_{Y})$ such that $\pi_{{\*Y}^l_\LowLevel \cap \Pi_{Y}}(\*x) = \pi_{{\*Y}^l_\LowLevel \cap \Pi_{Y}}(\mathbf{y}^l_\LowLevel)$ and $\pi_{{\*X}^l_\LowLevel \cap \Pi_{Y}}(\*x) = \pi_{{\*X}^l_\LowLevel \cap \Pi_{Y}}(\mathbf{x}^l_\LowLevel)$. This implies that there is some $Y' \in {\*X}^l_\LowLevel \cap {\*Y}^l_\LowLevel \cap \Pi_{Y}$ such that $\pi_{Y'}({\*x}^l_\LowLevel) \neq \pi_{Y'}({\*y}^l_\LowLevel)$, contradicting the effectiveness of $\mathcal{L}$.

As for the second claim,
by induction it suffices to show this for the case where $\mathcal{H}'$ is identical to $\mathcal{H}$ except that it lacks a single potential outcome $Y_{\Ante}$.
Consider the model $\mathcal{L}'$ formed applying the mapping $\big\{ (\mathbf{y}^i_\LowLevel)_{\Ante^i_\LowLevel} \big\}_{1 \le i \le m} \mapsto \big\{ (\mathbf{z}^i_\LowLevel)_{\Ante^i_\LowLevel} \big\}_{1\le i \le m}$ of counterfactuals to $\mathcal{L}$, where $\mathbf{z}^i_\LowLevel = \pi_{\mathbf{Y}^i_\LowLevel \setminus \Pi_Y}(\mathbf{y}^i_\LowLevel)$ if $\tau(\Ante^i_\LowLevel) = \Ante$ and $\mathbf{z}^i_\LowLevel = \mathbf{y}^i_\LowLevel$ otherwise.
It follows from Lem.~\ref{lem:parttransdecomp} that $\mathcal{H}' \prec_\tau \mathcal{L}'$.
\end{proof}

\begin{proof}[Proof of Thm.~1]
Let $\RCM$ be as in Def.~\ref{def:rcm} and index its potential outcomes $\mathfrak{O}$ as
$\big\{ (\mathbf{Y}^i)_{\Ante^i} \big\}_{1 \le i \le n} \cup \{ (\mathbf{Y}^{n+1})_{\Ante^{n+1}} \}$
where $\Ante^i \neq \varnothing$ for any $1 \le j \le n$ and $\Ante^{n+1} = \varnothing$; here $\varnothing$ represents the empty intervention on $\mathbf{V}$.
We assume that $n > 0$, since otherwise $\RCM$ is trivially representable.
The potential response functions of $\RCM$ are $\{f_{Y_\Ante}\}_{Y_\Ante}$.

Define a low-level set of variables $\EndoVars_\LowLevel = \{ (V, j) : \substack{V \in \*V \\ 1 \le j \le n + 1} \}$ with $\Val(V, j) = \Val(V) \cup \{ \star\}$ where $\star \notin \Val(V)$ for each $(V, j) \in \mathbf{V}_\LowLevel$.
Define $\tau$ as a constructive translation with a partition $\Pi$, defined by $\Pi_{V} = \big\{ (V, j) \big\}_{1 \le j \le n + 1}$ for each $V \in \*V$, and $\tau_V(\mathbf{p}_\LowLevel) = p_j$ iff $\mathbf{p}_\LowLevel \in \Val(\Pi_V)$ is such that there is exactly one $j$ such that $\pi_{(V, j)}(\mathbf{p}_\LowLevel) \neq \star$, with $p_j$ this value. Let 
\begin{multline}
\mathfrak{O}_\LowLevel = \big\{ (Y, i)_{\Ante^i_\LowLevel} : \substack{1 \le i \le n\\ Y \in \mathbf{Y}^i} \big\} \cup \{ (Y, n+1)_{\varnothing_\LowLevel} : Y \in \mathbf{Y}^{n+1} \} \cup \Big\{ (Y, i)_{\Ante^j_\LowLevel}: \substack{1 \le i, j \le n\\i \neq j\\Y \in \mathbf{Y}^i} \Big\}\\ \cup \big\{ 
(Y, n+1)_{\Ante^i_\LowLevel} : \substack{1 \le i \le n\\ Y \in \mathbf{Y}^{n+1}} \big\} \cup \big\{ (Y, i)_{\varnothing_\LowLevel} : \substack{1 \le i \le n\\ Y \in \mathbf{Y}^i} \big\}
\label{eq:polowerlevel}
\end{multline}
where for each $1 \le i \le n$, we let $\mathbf{X}^i_\LowLevel = \{(X, i) : X \in \mathbf{X}^i\}$ with $\pi_{(X, i)}(\Ante^i_{\LowLevel}) = \pi_X(\Ante^i)$ for every $X \in \mathbf{X}^i$, $\varnothing_\LowLevel$ is an empty intervention on $\*V_{\LowLevel}$, and define a set of potential responses $\mathfrak{F}_\LowLevel = \{f^\LowLevel_{Z_\Ante}\}_{Z_\Ante}$ for each of these outcomes via
\begin{equation}
f^\LowLevel_{Z_\Ante} = \begin{cases}
f_{Y_{\Ante^i}}, Z_\Ante = (Y, i)_{\Ante^i_\LowLevel}\\
f_{Y_{\Ante^{n+1}}}, Z_\Ante = (Y, n+1)_{\varnothing_\LowLevel}\\
f_\star, \text{otherwise}
\end{cases}
\label{eq:prlowerlevel}
\end{equation}
where $f_\star$ is a constant function with $f_\star(u) = \star$ for any $u \in \mathcal{U}$.

We claim that $\RCM_\LowLevel = \langle \mathcal{U}, \EndoVars_\LowLevel, \mathfrak{O}_\LowLevel, \{f^\LowLevel_{Z_\Ante}\}_{Z_\Ante}, P \rangle$, where $\RubinUnits$ and $P$ are the same as those in $\RCM$, abstracts to $\RCM$ and is representable.
To show the former, we employ the following result:
\begin{lemma}
Let $\Ante_\LowLevel \in \Val(\mathbf{X}_\LowLevel)$ for some $\mathbf{X}_\LowLevel \subset \mathbf{V}_\LowLevel$
and suppose that there is some $\mathbf{X} \subset \mathbf{V}$ and $j \in \{1, \dots, n+1\}$ such that $\pi_{(V, i)}(\Ante_\LowLevel) \neq \star$ iff $V \in \mathbf{X}$ and $i = j$; suppose further that for each $V$, there is at least one $i \in \{1, \dots, n+1\}$ such that $(V, i) \notin \mathbf{X}_\LowLevel$ or $\pi_{(V, i)}(\Ante_\LowLevel) \neq \star$.
Then $\tau(\Ante_\LowLevel) = \mathbf{x}_\HighLevel$ where $\mathbf{X}_\HighLevel = \mathbf{X}$ and $\pi_X(\mathbf{x}_\HighLevel) = \pi_{(X, i)}(\Ante_\LowLevel)$ for each $X \in \mathbf{X}$.
\label{lem:partialtranslationslowerlevel}
\end{lemma}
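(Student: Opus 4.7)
The plan is to reduce the claim to Lem.~\ref{lem:parttransdecomp} by taking the proposed high-level intervention to be $\Ante_\HighLevel$ with $\Ante_\HighLevel \in \Val(\mathbf{X})$ and $\pi_X(\Ante_\HighLevel) = \pi_{(X,j)}(\Ante_\LowLevel)$ for each $X \in \mathbf{X}$, and then verifying the two bulleted conditions of that lemma. For bookkeeping, recall that the partition is $\Pi_V = \{(V,1),\dots,(V,n+1)\}$, and that $\tau_V$ is defined on a tuple $\*x \in \Val(\Pi_V)$ only when $\*x$ has \emph{exactly one} non-$\star$ coordinate, in which case $\tau_V(\*x)$ returns that value.

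For condition \ref{lem:parttransdecomp:1}, fix $X \in \mathbf{X}_\HighLevel = \mathbf{X}$. I would unfold the definition of $\theta_X = \{\*x \in \Val(\Pi_X) : \pi_{\mathbf{X}_\LowLevel \cap \Pi_X}(\*x) = \pi_{\mathbf{X}_\LowLevel \cap \Pi_X}(\Ante_\LowLevel)\}$ and use the first hypothesis to observe that among the fixed coordinates of any $\*x \in \theta_X$, exactly $(X,j)$ is constrained to the non-$\star$ value $\pi_{(X,j)}(\Ante_\LowLevel)$, while each other constrained coordinate $(X,i)$, $i \neq j$, is pinned to $\star$. The remaining (free) coordinates can be chosen arbitrarily, and setting them all to $\star$ yields an element of $\theta_X$ on which $\tau_X$ is defined and returns $\pi_{(X,j)}(\Ante_\LowLevel)$. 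Conversely, any $\*x \in \theta_X$ in the domain of $\tau_X$ must have its unique non-$\star$ coordinate at $(X,j)$ (since that coordinate is already forced non-$\star$), so $\tau_X(\*x) = \pi_{(X,j)}(\Ante_\LowLevel)$. Hence $\tau_X(\theta_X) = \{\pi_X(\Ante_\HighLevel)\}$ as required.

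For condition \ref{lem:parttransdecomp:2}, fix $V \notin \mathbf{X}$. The first hypothesis (contrapositive) gives that $\pi_{(V,i)}(\Ante_\LowLevel) = \star$ whenever $(V,i) \in \mathbf{X}_\LowLevel$, and the second hypothesis furnishes at least one index $i^\star$ with $(V,i^\star) \notin \mathbf{X}_\LowLevel$. Given any $v \in \Val(V)$, I would construct $\*x \in \Val(\Pi_V)$ by setting $\pi_{(V,i^\star)}(\*x) = v$ and $\pi_{(V,i)}(\*x) = \star$ for all $i \neq i^\star$. Then $\*x \in \theta_V$ (all constraints are to $\star$, which is satisfied), $\*x$ has exactly one non-$\star$ coordinate, and $\tau_V(\*x) = v$. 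Thus $v \in \tau_V(\theta_V)$, and since $v$ was arbitrary, $\tau_V(\theta_V) = \Val(V)$.

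Both conditions being satisfied, Lem.~\ref{lem:parttransdecomp} concludes $\tau(\Ante_\LowLevel) = \Ante_\HighLevel$. The argument is essentially bookkeeping, and the only subtle point is handling the partiality of $\tau_V$: one must track the distinction between coordinates constrained by $\mathbf{X}_\LowLevel$ and those left free, and ensure that in each case one can exhibit (or is forced into) a preimage tuple with exactly one non-$\star$ entry. The second hypothesis is precisely what guarantees this flexibility at indices $V \notin \mathbf{X}$, where no informative value is being specified at the high level.
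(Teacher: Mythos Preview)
Your proposal is correct and follows exactly the approach the paper takes: the paper's own proof is the single sentence ``This follows directly from the construction of $\tau$, in light of Lem.~\ref{lem:parttransdecomp},'' and you have simply spelled out the verification of conditions \ref{lem:parttransdecomp:1} and \ref{lem:parttransdecomp:2} that the paper leaves implicit. Your handling of the partiality of $\tau_V$ and the role of the second hypothesis (furnishing a free index $i^\star$ for $V\notin\mathbf{X}$) is accurate.
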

\begin{proof}
This follows directly from the construction of $\tau$, in light of Lem.~\ref{lem:parttransdecomp}.
\end{proof}
Lem.~\ref{lem:partialtranslationslowerlevel} implies that $\tau(\varnothing_\LowLevel) = \varnothing$, $\tau(\Ante^i_\LowLevel) = \Ante^i$ for each $i$.
Further, for each $u \in \mathcal{U}$ and $i$ it implies that $\tau(\mathbf{y}^\LowLevel_i(u)) = \mathbf{y}_i(u)$
where $\mathbf{y}^\LowLevel_i(u)$ is the partial setting of $\mathbf{V}_\LowLevel$ induced by $\RCM_\LowLevel$ for potential outcomes under $\Ante_\LowLevel^i$ for unit $u$, viz., by \eqref{eq:polowerlevel} and \eqref{eq:prlowerlevel},
that with the projections
$\pi_{(Y, i)}(\mathbf{y}^\LowLevel_i(u)) = f_{Y_{\Ante^i}}(u)$ for each $Y \in \mathbf{Y}^i$,
$\pi_{(Y, j)}(\mathbf{y}^\LowLevel_i(u)) = \star$ for each $j \neq i$, $Y \in \mathbf{Y}^j$;
and $\mathbf{y}_i(u)$ is the analogue in $\RCM$, with $\pi_{Y}(\mathbf{y}_i(u)) = f_{Y_{\Ante^i}}(u)$ for $Y \in \mathbf{Y}^i$.
These facts establish that $P^{\RCM}_\counterfactual = \tau_*(P^{\RCM_\LowLevel}_\counterfactual)$ so that $\RCM \prec_\tau \RCM_\LowLevel$.

Next we demonstrate that $\RCM_\LowLevel$ is representable by constructing an explicit SCM representation $\SCM_\LowLevel = \langle \{U\}, \*V_\LowLevel, \{e_{V_\LowLevel}\}_{V_\LowLevel \in \EndoVars_\LowLevel}, P \rangle$, where $U$ is an exogenous variable with $\Val(U) = \RubinUnits$, while $P$ is the same distribution as in $\RCM$ and $\RCM_\LowLevel$.
It will be convenient to set up $\SCM_\LowLevel$ so as to be \emph{recursive}.
We say that an $\SCM$ over endogenous variables $\EndoVars$ is recursive if there is a total order $<$ on $\*V$ such that $\mathbf{Pa}_V \subset \{ V' \in \*V : V' < V\}$ for every $V \in \*V$.
Recursiveness guarantees uniqueness and existence of solutions under every unit and intervention, regardless of the particular structural functions $\{f_V\}_V$ and merely by virtue of their signatures:
\begin{lemma}\label{lem:app:recursive}
  If $\SCM$ is recursive then there is a unique $\*v \in \Val(\*V)$ such that $\mathcal{M}_{\Ante}, \Unit \models \*v$ for any $\*x \in \bigcup_{\mathbf{X} \subset \EndoVars}\Val(\*X)$, unit $\Unit$.
\end{lemma}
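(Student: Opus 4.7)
The plan is to argue by strong induction on the total order $<$ witnessing recursiveness. Specifically, fix the unit $\Unit$ and the intervention $\Ante \in \Val(\*X)$, and consider the manipulated model $\SCM_\Ante$ from Def.~\ref{def:intervention}. Enumerate the endogenous variables as $V_1 < V_2 < \cdots < V_n$ according to the total order. The claim to prove by induction on $k$ is: there is a unique tuple $(v_1, \dots, v_k) \in \Val(V_1) \times \cdots \times \Val(V_k)$ such that any solution $\*v$ with $\SCM_\Ante, \Unit \models \*v$ must satisfy $\pi_{V_i}(\*v) = v_i$ for every $i \le k$, and moreover such a consistent assignment exists.

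For the inductive step at $V_k$, we split on cases. If $V_k \in \*X$, then in $\SCM_\Ante$ the structural function $f'_{V_k}$ is constant with value $\pi_{V_k}(\Ante)$, so $v_k = \pi_{V_k}(\Ante)$ is forced. If $V_k \notin \*X$, then the structural function is the original $f_{V_k}$, whose domain is $\Val(\*U_{V_k} \cup \Pa_{V_k})$. By recursiveness, $\Pa_{V_k} \subset \{V_i : i < k\}$, so by the inductive hypothesis these parent values are already uniquely determined. Combined with the fixed exogenous setting $\Unit$, this determines $v_k = f_{V_k}(\pi_{\*U_{V_k}}(\Unit), v_{i_1}, \dots, v_{i_m})$ uniquely, where $V_{i_1}, \dots, V_{i_m}$ are the parents of $V_k$. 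In either case existence and uniqueness propagate to stage $k$. Taking $k = n$ yields the unique global solution $\*v$.

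I do not anticipate any serious obstacle: the argument is essentially the standard well-foundedness argument for recursive systems of equations, and the statement is designed precisely so that the only thing one uses about the structural functions is their arity, as foreshadowed in the sentence immediately preceding the lemma. The only small point requiring care is to verify that passing to $\SCM_\Ante$ preserves recursiveness with the same order $<$, which is immediate because Def.~\ref{def:intervention} either leaves $f_V$ untouched or replaces it by a constant function with empty $\Pa'_V$, so the containment $\Pa'_V \subset \{V' : V' < V\}$ is preserved. Once recursiveness of $\SCM_\Ante$ is noted, the induction above finishes the proof.
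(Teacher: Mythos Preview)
Your proposal is correct and follows essentially the same approach as the paper: both enumerate the variables along the total order $<$ and argue by induction that each $V_k$'s value is uniquely determined either by the intervention (if $V_k \in \*X$) or by applying $f_{V_k}$ to the already-determined predecessors and $\Unit$ (if $V_k \notin \*X$). Your explicit remark that passing to $\SCM_\Ante$ preserves recursiveness is a nice touch the paper leaves implicit, but the underlying argument is the same.
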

\begin{proof}
  Assume without loss $\*V = \{ V_1, \dots, V_m \}$ with $V_1 < \dots < V_m$ and prove by induction, the inductive hypothesis being that there is a unique $\*v$ over $\{V_1, \dots, V_j\}$ for $j \le m$, which extends as $\pi_{V_{j+1}}(\*v) = f_{V_{j+1}}\big(\*u, \pi_{\{V_1, \dots, V_{j}\}}(\*v)\big)$ if $V_{j+1} \notin \mathbf{X}$ and $\pi_{V_{j+1}}(\Ante)$ otherwise.
\end{proof}
For the recursive order $<$, pick any $<$ such that $(V, h) < (V', h')$ for any $V, V' \in \EndoVars$ and $1 \le h, h' \le n+1$ whenever $h < h'$,
for any $1 \le j \le n$, we have that $(X, j) < (Y, j)$ for any $X \in \EndoVarsX^j \setminus \mathbf{Y}^j$, $Y \in \mathbf{Y}^j$, and $(Y_l, n+1) < (Y_{l'}, n+1)$
whenever $l < l'$, where we fix an arbitrary indexing $\{Y_l: 1 \le l \le |\mathbf{Y}^{n+1}|\}$ of $\mathbf{Y}^{n+1}$.
Let $\mathbf{Pa}_{V_\LowLevel} = \{V' \in V_\LowLevel: V' < V_\LowLevel\}$, $\mathbf{U}_{V_\LowLevel} = \mathbf{U}$ for each $V_\LowLevel$
and define the structural function $e_{(V, h)}$ by
\begin{equation}\label{eq:sflowlevel}
e_{(V, h)}(u, \*p) = \begin{cases}
    f_{V_\varnothing}(u), & h = n+1, V \in \mathbf{Y}^{n+1}, \*p \text{ is such that } \pi_{(Y_{l'}, n+1)}(\*p) = f_{V_\varnothing}(u) \\
    &\qquad \text{ for any } l' \text{ such that } n+1 \le l' < l \text{ and }\\
    &\qquad \pi_{(V', h')}(\*p) = \star \text{ for any other } (V', h')\\
f_{V_{\Ante^h}}(u), & 1\le h \le n, V \in \mathbf{Y}^h, \big|\EndoVarsX^h \setminus\mathbf{Y}^h\big| > 0, \*p \text{ is such that }\\
    &\qquad
    \pi_{(X, h)}(\*p) =  \pi_X(\Ante_h) \text{ for any } \\
    &\qquad X \in \EndoVarsX^h \setminus\mathbf{Y}^h
    \text{ and }\pi_{(V', h')}(\*p) = \star \text{ for any other } (V', h') \\
\star, & \text{otherwise}
\end{cases}.
\end{equation}
We can find the unique solution of $\SCM_\LowLevel$ under any of our interventions of interest by following these equations in order:
\begin{lemma}[Solutions of $\SCM_\LowLevel$]
Let $\Ante \in \{\Ante^j_{\LowLevel} \}_{1 \le j \le n} \cup \{\varnothing_\LowLevel\}$ and $i \in \RubinUnits$.
Then $(\SCM_\LowLevel)_{\Ante}, i \models \*v \in \Val(\EndoVars_\LowLevel)$ where
   \begin{align}\label{eq:lowlevelsolution}
\pi_{(V, h)}(\*v) = \begin{cases}
             \pi_V(\Ante_j) & \Ante = \Ante^j_{\LowLevel} \text{ for some } j, V \in \EndoVarsX^j \setminus \mathbf{Y}^j, h = j \\
              f_{V_{\Ante^j}}(i) & \Ante = \Ante^j_{\LowLevel} \text{ for some } j, V \in \mathbf{Y}^j, h = j \\
             f_{V_\varnothing}(i) & \Ante = \varnothing_\LowLevel, h = n+1, V \in \mathbf{Y}^{n+1}\\
             \star & \text{otherwise}
        \end{cases}.
   \end{align}
\end{lemma}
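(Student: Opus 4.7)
The plan is as follows. Since $\SCM_\LowLevel$ is manifestly recursive under the total order $<$ chosen in the construction (which orders everything first by the block index $h$, then within block $j \le n$ puts $\mathbf{X}^j \setminus \mathbf{Y}^j$ before $\mathbf{Y}^j$, and within block $n+1$ refines by the $\{Y_l\}$-indexing), Lemma \ref{lem:app:recursive} already guarantees existence and uniqueness of a solution $\*v$ to $(\SCM_\LowLevel)_{\Ante}, i \models \*v$ for any $\Ante, i$. Thus the lemma reduces to verifying that the explicit candidate valuation given by \eqref{eq:lowlevelsolution} is in fact a solution, i.e., satisfies every structural equation of the manipulated model.

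I would run this verification by induction along $<$. Fix $(V, h)$ and let $\*p = \pi_{\mathbf{Pa}_{(V, h)}}(\*v)$ be the projection of the candidate onto the strictly earlier variables (for which the equation has already been checked). There are two top-level subcases. If $(V, h)$ is fixed by the intervention $\Ante$ — which only occurs when $\Ante = \Ante^j_\LowLevel$ and $V \in \mathbf{X}^j$ with $h = j$ — then the manipulated equation is constant at $\pi_V(\Ante_j)$, which matches the first clause of \eqref{eq:lowlevelsolution}. Otherwise one must evaluate $e_{(V, h)}(i, \*p)$ from \eqref{eq:sflowlevel} and verify agreement with the candidate value $\pi_{(V, h)}(\*v)$.

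The nontrivial casework for the second subcase splits into: (a) $\Ante = \Ante^j_\LowLevel$ and $(V, j)$ with $V \in \mathbf{Y}^j$, where the second branch of \eqref{eq:sflowlevel} should fire — by the inductive hypothesis, parents in $\{(X, j) : X \in \mathbf{X}^j \setminus \mathbf{Y}^j\}$ carry their intervention values $\pi_X(\Ante_j)$, while all other parents carry value $\star$, so the pattern matches and the equation returns $f_{V_{\Ante^j}}(i)$; (b) $\Ante = \varnothing_\LowLevel$ and $(V, n+1)$ with $V \in \mathbf{Y}^{n+1}$, where the first branch of \eqref{eq:sflowlevel} should fire, the earlier $\mathbf{Y}^{n+1}$-parents already having been assigned their $f_{\cdot_\varnothing}(i)$ values by induction and all other parents being $\star$; (c) every remaining $(V, h)$, where one checks that neither of the first two branches of \eqref{eq:sflowlevel} can fire (either because the wrong intervention is active, the wrong block index $h$ is being tested, or some required parent carries $\star$ rather than the pattern value), so that the ``otherwise'' clause returns $\star$, in agreement with the candidate.

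The main obstacle is purely the bookkeeping of case (c): one must verify that for \emph{every} combination of $\Ante \in \{\Ante^j_\LowLevel\}_{j \le n} \cup \{\varnothing_\LowLevel\}$ and $(V, h)$ not already handled in (a) or (b), the conditions guarding the first two branches of \eqref{eq:sflowlevel} genuinely fail. This is a finite but somewhat tedious check driven by the observation that the branches are each ``keyed'' to a specific intervention (namely $\Ante^h$ or $\varnothing$) and a specific block index $h$, so whenever $\Ante$ or the block of $(V, h)$ is mismatched, some parent required to equal a nonstar value in the pattern is in fact $\star$ (by induction), blocking the branch and forcing the ``otherwise'' case. Once this routine check is dispatched, the candidate \eqref{eq:lowlevelsolution} is confirmed as the unique solution and the lemma follows.
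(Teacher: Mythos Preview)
Your proposal is correct and follows essentially the same strategy as the paper: establish uniqueness via recursiveness (Lem.~\ref{lem:app:recursive}) and then verify the candidate \eqref{eq:lowlevelsolution} by induction along $<$, case-splitting on whether the variable is fixed by the intervention and, if not, which branch of \eqref{eq:sflowlevel} fires. The paper organizes the casework slightly differently (by intervention first, then by block index $k$ relative to $j$) and makes explicit one point you leave implicit---that when $V \in \mathbf{X}^j \cap \mathbf{Y}^j$ the intervention value $\pi_V(\Ante_j)$ matches the \emph{second} clause of \eqref{eq:lowlevelsolution} via effectiveness of $\mathcal{R}$, not the first---but the substance is the same.
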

\begin{proof}
    Suppose $\Ante = \Ante^j_{\LowLevel}$ for some $j \le n$. We claim that $\pi_{(V, k)}(\*v) = \star$ for any $V$ and $k < j$.
    The only way it could be otherwise is in the second case of \eqref{eq:sflowlevel}, but since $k \neq j$ and $\big|\mathbf{X}^k \setminus \mathbf{Y}^k\big| > 0$ there is here at least one $(X, k) < (Y^k, k)$ with $\pi_{(X, k)}(\*v) = \star$, so that this case in fact cannot apply.
    Now for $k = j$, it is clear by construction of $\Ante^j_{\LowLevel}$ that $\pi_{(V, j)}(\*v) = \pi_V(\Ante^j)$ for each $V \in \EndoVarsX^j \setminus \mathbf{Y}^j$. For $V \in \mathbf{Y}^j$, if $V \in \EndoVarsX^j$ then by effectiveness and the construction of $\Ante^j_{\LowLevel}$, we have that $\pi_{(V, j)}(\*v) = \pi_{V}(\Ante^j) = f_{V_{\Ante^j}}(i)$. Otherwise, since $\EndoVarsX^j \neq \varnothing$ and given what we have already found about $\*v$ we fall into the second case in \eqref{eq:sflowlevel}. Thus $\pi_{(V, j)}(\*v) = f_{V_{\Ante^j}}(i)$.
    For $k > j$, if $k \le n$ then the second case of \eqref{eq:sflowlevel} cannot apply so $\*v$ extends only by $\star$'s, and if $k = n+1$ then since $\Ante^j \neq \varnothing$ the first case of \eqref{eq:sflowlevel} likewise cannot apply, and we get only $\star$'s. This gives us precisely the $\*v$ specified by the first, second, and fourth cases in \eqref{eq:lowlevelsolution}.

    Next suppose $\Ante = \varnothing_\LowLevel$. Then $\pi_{(V, h)}(\*v) = \star$ for any $V$ and $h < n+1$: it could only be otherwise in the second case of \eqref{eq:sflowlevel} where since $\big|\mathbf{X}^h \setminus \mathbf{Y}^h\big| > 0$ there is here at least one $(X, h) < (V, h)$ with $\pi_{(X, h)}(\*v) = \star$ by the third case of \eqref{eq:sflowlevel}.
    Now if $h = n+1$ then we have $\pi_{(Y_1, n+1)}(\*v) = f_{V_{\varnothing}}(i)$ by the first case of \eqref{eq:sflowlevel} and (inductively) likewise for $Y_{2}, \dots, Y_{|\mathbf{Y}^{n+1}|}$. For any $V \notin \mathbf{Y}^{n+1}$, we have $\pi_{(V, n+1)}(\*v) = \star$.
    We thus obtain exactly the last two cases of \eqref{eq:lowlevelsolution}.
\end{proof}

To show that $p_\counterfactual^{\SCM_\LowLevel}$ marginalizes to $p_\counterfactual^{\RCM_\LowLevel}$, because $\SCM_\LowLevel$ and $\RCM_\LowLevel$ are constructed so as to share the same set of units and distribution over them, it suffices to show that for each $i \in \mathcal{U}$ we have that $(\SCM_\LowLevel)_\Ante, i \models (V, h) = f^\LowLevel_{(V, h)_\Ante}(i)$ for every $(V, h)_\Ante \in \RubinOutcomes_\LowLevel$.
This is easily seen by inspecting and comparing \eqref{eq:lowlevelsolution}, \eqref{eq:polowerlevel}, and \eqref{eq:prlowerlevel}.
\end{proof}

\subsection*{SUTVA}
The following remarks are in the context of the example from the discussion after Def.~\ref{def:constructiveabs} (see also that after Thm.~\ref{thm:abstractionrepresentation}). 

\begin{remark}\label{rmk:app:sutva}
We show how our framework can model unit-wise treatment assignment and the first part of SUTVA as making a particular abstraction viable.

Consider a model $\langle \{s\}, \{U, Y, T\}, \mathfrak{O}, \mathfrak{F}, \cdot\rangle$ where $s$ is some singleton element and $U$ is an endogenized unit variable (so that the exogenous unit set $\mathcal{U}$ is trivial in this model), $Y$ encodes survival with $\Val(Y) = \{0, 1\}$, and $T$ is a treatment variable that encodes whether \emph{each} unit was assigned to treatment, with $\Val(T) = \CartProd_{u \in \Val(U)} \Val(Y)$.
The list $\mathfrak{O}$ contains the $2^{|\Val(U)|} |\Val(U)|$ potential responses $Y_{\Ante}$ for each possible $\Ante$ where $\pi_T(\Ante) = \mathbf{u} \in \CartProd_{u \in \Val(U)} \{0, 1\}$ and $\pi_U(\Ante) = u \in \Val(U)$, and $\mathfrak{F}$ maps this to the survival outcome for patient $u$ when the vector $\mathbf{u}$ encodes treatment assignments for each patient.
Under the above SUTVA assumption, this outcome depends only on the component $\pi_u(\mathbf{u})$ of $\pi_T(\Ante)$ corresponding to $U = u$.
This means that a constructive abstraction $\tau$ to high-level variables $\{T', Y\}$ where $\Val(T') = \Val(U) \times \{0, 1\}$ with partition $\Pi$ is viable:
$\Pi_{T'} = \{T, U\}$, $\Pi_{Y} = \{Y\}$, and
$\tau(\Ante) = \big(\pi_{U}(\Ante), \pi_{\pi_U(\Ante)}(\pi_T(\Ante))\big)$.
In turn, given a distribution on units, we can exogenize them out of $T'$, showing that this model is equivalent to the familiar one with binary treatment and potential outcomes $Y_t(u)$.
\end{remark}

\begin{remark}\label{rmk:app:sutva2}
We show how to model several alternative ways mentioned by \cite{Imbens2015} of making the second part of SUTVA hold.

The first way
is for each unit to receive only one treatment level; then the possibility of inconsistency in abstraction is excluded.
Our framework requires that the potential responses are total functions (with domain the set of units)
and therefore that the outcome for each unit and treatment level is defined; nevertheless, it is possible to redo the analysis from the beginning up to causal abstraction, modifying Def.~\ref{def:rcm} to admit partial potential response functions. One would then find that this version of SUTVA guarantees the viability of the abstraction collapsing all treatment levels.

The second way is to admit stochastic treatments.
Suppose we have two treatment levels $\mathrm{tr}^i$, $\mathrm{tr}^j$ and some $u \in \mathcal{U}$, occurring with probability $P(u)$, for which $Y_{\mathrm{tr}^i}(u) \neq Y_{\mathrm{tr}^j}(u)$.
Then for any $p^i, p^j \ge 0$ with $p^i + p^j = 1$ we can form a new model with $\mathrm{tr}^i$, $\mathrm{tr}^j$ merged into a single treatment $\mathrm{tr}$, but $u$ split into two new units $u^i$, $u^j$ with respective probabilities $p^i P(u)$, $p^j P(u)$.
In this model we define $Y_{\mathrm{tr}}(u^i) = Y_{\mathrm{tr}^i}(u)$, $Y_{\mathrm{tr}}(u^j) = Y_{\mathrm{tr}^j}(u)$, and the violation to the condition in question is removed since we end up with only a single treatment level $\mathrm{tr}$. Further, the expectation of $Y_{\mathrm{tr}}$ conditioned on drawing $u^i$ or $u^j$ is equal to the weighted average of $Y_{\mathrm{tr}^i}(u)$ and $Y_{\mathrm{tr}^j}(u)$.
This is really a different definition in which we mix incompatible outcomes rather than consider them to invalidate the entire abstraction---study of such a notion may prove fruitful, but is outside the scope of this article.

A final way to make this condition hold is to coarsen the outcome, by, e.g. blurring multiple health statuses into the binary distinction between surviving and not.
This can in fact already be seen in our framework. Let $\tau$ be the abstraction from the running example coarsening treatment levels (values of $T$), $\tau'$ be that coarsening the outcome (values of $Y$), and $\mathcal{L}$ be our lowest-level model.
Then it is possible that we have some $\mathcal{H}', \mathcal{H}$ for which $\mathcal{H} \prec_\tau \mathcal{H}' \prec_{\tau'} \mathcal{L}$, but no $\mathcal{H}$ for which $\mathcal{H} \prec_\tau \mathcal{L}$. In this sense (by moving to the already coarser model $\mathcal{H}'$) the second part of SUTVA can be made to hold.
\end{remark}

\section{Inference (\S{}2)}\label{s:app:inference}

\begin{proposition}\label{prop:appx:late}
$\eqref{eq:pomonotonicity}, \eqref{eq:po:er}, \eqref{eq:po:od}\models \text{LATE} = \text{ITT}_1/\text{ITT}_2$.
\end{proposition}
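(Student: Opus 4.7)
The plan is to reason propositionally and probabilistically in the calculus $\textsf{RCM}+\mathrm{T}\eqref{eq:pomonotonicity}+\mathrm{T}\eqref{eq:po:er}+\mathrm{T}\eqref{eq:po:od}$, using Thm.~\ref{thm:poassumptionscompleteness} to import arbitrary predicate-logical consequences of these assumptions. First I would combine \eqref{eq:po:er} and \eqref{eq:po:od} into the single derived schema $y_{z,x}(u)\leftrightarrow y_x(u)$ for both $z\in\{z^+,z^-\}$, which lets me eliminate the intervention on $Z$ wherever it co-occurs with one on $X$. This immediately simplifies $\text{ITT}_1$: expanding the definition $\mathbf{P}(y^+_{z,X_z})=\mathbf{P}(y^+_{z,x^+}\land x^+_z)+\mathbf{P}(y^+_{z,x^-}\land x^-_z)$ and rewriting via the derived equivalence gives
\[
\text{ITT}_1 \;=\; \bigl[\mathbf{P}(y^+_{x^+}\land x^+_{z^+})-\mathbf{P}(y^+_{x^+}\land x^+_{z^-})\bigr]+\bigl[\mathbf{P}(y^+_{x^-}\land x^-_{z^+})-\mathbf{P}(y^+_{x^-}\land x^-_{z^-})\bigr].
\]

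Next I would decompose the sample space unit-wise by the four mutually exclusive compliance types defined by the joint value of $(X_{z^+},X_{z^-})$: always-takers ($x^+_{z^+}\land x^+_{z^-}$), never-takers ($x^-_{z^+}\land x^-_{z^-}$), compliers ($x^+_{z^+}\land x^-_{z^-}$), and defiers ($x^-_{z^+}\land x^+_{z^-}$). Monotonicity \eqref{eq:pomonotonicity}, translated via Def.~\ref{def:equationaltranslation}, yields $\mathbf{P}(x^-_{z^+}\land x^+_{z^-})=0$, so the defier class has zero mass. Now $x^+_{z^+}$ decomposes as the disjoint union of always-takers and compliers, while $x^+_{z^-}$ reduces (modulo a null event) to just always-takers; the symmetric decomposition holds for $x^-_{z^+}$ (never-takers only) and $x^-_{z^-}$ (never-takers plus compliers). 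Substituting these disjoint-union identities into the four terms above cancels the always-taker and never-taker contributions, leaving
\[
\text{ITT}_1 \;=\; \mathbf{P}(y^+_{x^+}\land x^+_{z^+}\land x^-_{z^-}) \;-\; \mathbf{P}(y^+_{x^-}\land x^+_{z^+}\land x^-_{z^-}).
\]
The same monotonicity argument applied directly to $\text{ITT}_2=\mathbf{P}(x^+_{z^+}\land x^-_{z^-})-\mathbf{P}(x^-_{z^+}\land x^+_{z^-})$ gives $\text{ITT}_2=\mathbf{P}(x^+_{z^+}\land x^-_{z^-})$, the complier probability.

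Finally, dividing and using the ratio definition of conditional expectation,
\[
\frac{\text{ITT}_1}{\text{ITT}_2} \;=\; \frac{\mathbf{E}\bigl[(Y_{x^+}-Y_{x^-})\cdot\mathbf{1}[x^+_{z^+}\land x^-_{z^-}]\bigr]}{\mathbf{P}(x^+_{z^+}\land x^-_{z^-})} \;=\; \mathbf{E}(Y_{x^+}-Y_{x^-}\mid x^+_{z^+}\land x^-_{z^-}) \;=\; \text{LATE},
\]
provided $\text{ITT}_2>0$ so the conditional expectation is defined; if $\text{ITT}_2=0$ the identity holds trivially on both sides modulo the usual convention, or we simply take this as a hypothesis of the statement. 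The main obstacle is essentially bookkeeping: one must be careful that every step---especially the substitutions in the expansion of $\text{ITT}_1$ and the disjoint-union decompositions---can be justified either as a purely probabilistic manipulation (the $\textsf{AX}$ axioms) or as a propositional consequence of the three assumptions translated via Def.~\ref{def:equationaltranslation}, which is exactly what Thm.~\ref{thm:poassumptionscompleteness} guarantees is sufficient. No nontrivial use of composition or reversibility is needed, consistent with the surrounding discussion that \eqref{eq:po:od} is precisely the extra ingredient beyond ER and monotonicity.
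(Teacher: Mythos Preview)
Your argument is correct. The route differs from the paper's in how $\text{ITT}_1$ is unpacked: the paper keeps the joint form $\mathbf{P}(y^+_{z^+,X_{z^+}}\land y^-_{z^-,X_{z^-}})-\mathbf{P}(y^-_{z^+,X_{z^+}}\land y^+_{z^-,X_{z^-}})$, expands it as a double sum over $(x,x')$, and then kills three of the four terms directly (the $x=x'$ terms by contradiction after the ER/OD rewrite, the defier term by monotonicity), landing immediately on the LATE numerator. You instead invoke linearity first to pass to the marginal form $\mathbf{P}(y^+_{z^+,X_{z^+}})-\mathbf{P}(y^+_{z^-,X_{z^-}})$, and then stratify by compliance type---the classical Angrist--Imbens--Rubin bookkeeping. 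Both decompositions use the three assumptions in exactly the same places; yours is the more familiar textbook organization, while the paper's stays closer to the joint counterfactual language $\mathcal{L}$ in which the surrounding results are phrased and makes the match with the displayed LATE numerator visible in one step rather than via a second application of the identity $\mathbf{P}(y^+_A)-\mathbf{P}(y^+_B)=\mathbf{P}(y^+_A\land y^-_B)-\mathbf{P}(y^-_A\land y^+_B)$.
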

\begin{proof}
Regarding the encoding of expected values: for any function $f: \Val_\counterfactual \to \mathbb{R}$ we can define an expected value $\mathbf{E}_{\mathbf{u}}(f) = \sum_{\mathbf{u} \in \mathcal{U}} P(\mathbf{u}) f(\mathfrak{O}(\mathbf{u}))$, abbreviated $\mathbf{E}(f)$.
If $f$ is integer-valued then $\mathbf{E}(f)$ is expressible within our formal language.
Thus LATE is equal to a ratio\footnote{The ratio $\mathbf{t}_1 = \nicefrac{\mathbf{t}_2}{\mathbf{t}_3}$ is an abbreviation for $\mathbf{t}_1 \cdot \mathbf{t}_3 = \mathbf{t}_2 \in \Ll$. Likewise in Ex.~\ref{ex:verma}.} of our terms,
\begin{align*}
\text{LATE} = \mathbf{E}(Y_{x^+} - Y_{x^-} | x^+_{z^+} \land x^-_{z^-}) = \frac{\mathbf{P}(y^+_{x^+} \land y^-_{x^-} \land x^+_{z^+} \land x^-_{z^-}) - \mathbf{P}(y^-_{x^+} \land y^+_{x^-} \land x^+_{z^+} \land x^-_{z^-})}{\mathbf{P}(x^+_{z^+} \land x^-_{z^-})}.
\end{align*}

We show $\text{LATE} = \text{ITT}_1/\text{ITT}_2$; \eqref{eq:pomonotonicity} shows the second term of $\text{ITT}_2$ vanishes, while the first term is the denominator of $\text{LATE}$, so it suffices to show equality of the numerator and $\text{ITT}_1$.
All of the following arithmetic moves rely only on \textsf{AX}. Note that
\begin{multline*}
     \mathbf{P}(y^+_{z^+, X_{z^+}} \land y^-_{z^-, X_{z^-}}) = \sum_{x, x'} \mathbf{P}\big(y^+_{z^+, x}\land x_{z^+} \land y^-_{z^-, x'} \land x'_{z^-} \big) = \sum_{x, x'} \mathbf{P}(y^+_{x} \land x_{z^+} \land y^-_{x'} \land x'_{z^-})
\end{multline*}
the last equality following from \eqref{eq:po:od} and \eqref{eq:po:er}.
Any terms for $x = x'$ vanish by \eqref{eq:po:er} (using also that $\mathbf{P}(\epsilon) = 0 \rightarrow \mathbf{P}(\epsilon \land \zeta) = 0$ is derivable from basic probabilistic logic), the term for $x = x^-$ and $x' = x^+$ vanishes by \eqref{eq:pomonotonicity}, and we are left with exactly the first term of the numerator.
We can analogously derive equality of the expressions with negative coefficients. 
\end{proof}

\begin{proposition}\label{prop:appx:itt1}
Under representability, $\text{ITT}_1 = \mathbf{E}(Y_{z^+} - Y_{z^-})$.
\end{proposition}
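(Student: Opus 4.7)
The plan is to leverage representability to invoke composition \eqref{eq:pocomposition}, which collapses $Y_{z, X_z}$ to $Y_z$ unit-wise, and then match the resulting expression with $\mathbf{E}(Y_{z^+} - Y_{z^-})$ by binary marginalization. By Prop.~\ref{prop:scmrepresentation}, a representable RCM satisfies composition on all its defined potential outcomes, so for any unit $u$ and any $z, x$ the implication $X_z(u) = x \land Y_z(u) = y \rightarrow Y_{z,x}(u) = y$ is licensed (taking the outer ``added'' value to be that of $X$). The shorthand $Y_{z, X_z}(u)$ is defined as $Y_{z, x}(u)$ for $x = X_z(u)$, so composition gives the pointwise identity $Y_{z, X_z}(u) = Y_z(u)$. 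Since this holds for every unit and both $z = z^+, z = z^-$ simultaneously, we get the joint identities
\begin{align*}
    \mathbf{P}(y^+_{z^+, X_{z^+}} \land y^-_{z^-, X_{z^-}}) &= \mathbf{P}(y^+_{z^+} \land y^-_{z^-}),\\
    \mathbf{P}(y^-_{z^+, X_{z^+}} \land y^+_{z^-, X_{z^-}}) &= \mathbf{P}(y^-_{z^+} \land y^+_{z^-}).
\end{align*}

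Substituting into the definition of $\text{ITT}_1$ yields
\[
    \text{ITT}_1 = \mathbf{P}(y^+_{z^+} \land y^-_{z^-}) - \mathbf{P}(y^-_{z^+} \land y^+_{z^-}),
\]
and the remaining task is purely propositional-probabilistic. Since $Y$ is binary, marginalizing over $Y_{z^-}$ gives $\mathbf{P}(y^+_{z^+}) = \mathbf{P}(y^+_{z^+} \land y^+_{z^-}) + \mathbf{P}(y^+_{z^+} \land y^-_{z^-})$, and analogously $\mathbf{P}(y^+_{z^-}) = \mathbf{P}(y^+_{z^+} \land y^+_{z^-}) + \mathbf{P}(y^-_{z^+} \land y^+_{z^-})$. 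Subtracting these two equations cancels the common term and leaves exactly $\text{ITT}_1$, yielding $\mathbf{E}(Y_{z^+} - Y_{z^-}) = \mathbf{P}(y^+_{z^+}) - \mathbf{P}(y^+_{z^-}) = \text{ITT}_1$.

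The only subtlety is to ensure that the composition instances we invoke are available: we need $Y_z$, $X_z$, and $Y_{z,x}$ all to be defined for the relevant $z, x$. These outcomes are present in the RCMs considered in Ex.~\ref{eg:late} (cf.~Fig.~\ref{fig:example:cmreq}), and more generally, by Prop.~\ref{prop:scmrepresentation} representability supplies a full extension satisfying composition and reversibility on all outcomes, after which marginalizing back recovers the needed identities on $\mathcal{R}$ itself. The arithmetic manipulations are all instances of $\textsf{AX}$, so the conclusion is derivable in our calculus from $\textsf{AX} + \mathrm{T}(\textsf{C})$ alone—confirming the intuition that the simpler form for $\text{ITT}_1$ is an explicit consequence of the representability assumption.
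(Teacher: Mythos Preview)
Your proposal is correct and follows essentially the same approach as the paper's proof: both arguments use composition to establish the unit-level identity $Y_{z,X_z}(u) = Y_z(u)$ (the paper phrases this as the $\Lbase$ equivalence $y_{z,X_z} \leftrightarrow y_z$, derived via the disjunction $\bigvee_{x,y'} y_{zx}\land x_z \land y'_z$ and eliminating $y'\neq y$ by composition), and then transfer this to the joint probabilities defining $\text{ITT}_1$. Your explicit marginalization step connecting $\mathbf{P}(y^+_{z^+}\land y^-_{z^-})-\mathbf{P}(y^-_{z^+}\land y^+_{z^-})$ to $\mathbf{P}(y^+_{z^+})-\mathbf{P}(y^+_{z^-})$ is left implicit in the paper, and your handling of which outcomes must be defined matches the paper's closing remark that only $Y_z$ is needed beyond the outcomes already appearing in $\text{ITT}_1$.
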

\begin{proof}
We claim that $\text{ITT}_1 = \mathbf{E}(Y_{z^+, X_{z^+}} - Y_{z^-, X_{z^-}}) = \mathbf{E}(Y_{z^+} - Y_{z^-})$ and thus that
\begin{align*}
\mathbf{P}\big(y^+_{z^+, X_{z^+}} \land y^-_{z^-, X_{z^-}}\big) - \mathbf{P}\big(y^-_{z^+, X_{z^+}} \land y^+_{z^-, X_{z^-}}\big)
= \mathbf{P}(y^+_{z^+} \land y^-_{z^-}) - \mathbf{P}(y^-_{z^+} \land y^+_{z^-})
\end{align*}
under representability.
It suffices to show $\text{Rep.} \models y_{z, X_z} \leftrightarrow y_z$ where $\text{Rep}.$ stands for representability.
Note that $\text{Rep.} \models y_{z, X_z} \leftrightarrow \bigvee_{x, y'} y_{zx} \land x_z \land y'_{z} \leftrightarrow \bigvee_{x} y_{xz} \land x_z \land y_z$, the last step since $\eqref{eq:pocomposition} \models x_z \land y'_z \rightarrow y'_{xz}$ contradicting $y_{xz}$ when $y' \neq y$. Similarly by setting $y' = y$, we have $\text{Rep}. \models \bigvee_{x} y_{xz} \land x_z \land y_z \leftrightarrow \bigvee_x x_z \land y_z \leftrightarrow y_z$. Note that no additional outcomes are used here except $Y_z$, which is necessary to state the result.
\end{proof}

\begin{proof}[Proof of Thm.~\ref{thm:poassumptionscompleteness}]
Soundness is straightforward.
For completeness, show that any consistent formula $\varphi$ is satisfiable.
By the completeness proof for $\textsf{AX}$ \citep[Thm.~6]{ibelingicard2020} we get a
distribution $P$ satisfying $\varphi$ over joint valuations $\Val(\mathfrak{O}_\varphi)$, where $\mathfrak{O}_\varphi$ is the set of potential outcomes appearing in $\varphi$. To construct a satisfying model $\RCM$, add a unit $u$ that witnesses any $\*o \in \Val(\mathfrak{O}_\varphi)$ for which $P(\*o) > 0$. It is clear that $\RCM \in \mathfrak{R}(S)$ since $\varphi$ is consistent with $\mathrm{T}(S)$.
\end{proof}

\begin{proof}[Proof of Cor.~\ref{cor:reffax}]
Observe that $\mathfrak{R}_{\mathrm{eff}} = \mathfrak{R}\eqref{eq:poeffectiveness}$.
\end{proof}

\begin{proof}[Proof of Thm.~\ref{thm:graphicalsystem}]
Soundness has been shown in prior works \cite{Pearl2009}, and in any case is straightforward ($\textsf{ER}^\Graph$ since each variable can only be a function of its parents in $\mathfrak{M}(\Graph)$ and $\textsf{cf-sep}^\Graph$ since the functional mechanisms determining unconfounded nodes must be independent; see also Cor.~\ref{cor:app:dsepcomplete} below for a derivation from d-separation).

Completeness: our proof is a modification of the proof for $\textsf{AX}_3$ in \cite[Thm.~6]{ibelingicard2020}.
We take the same steps as in that proof, the strategy being to show that any consistent $\varphi$ is satisfiable, up to and including the normal form in Lem.~8, except that we use the full set $\mathbf{V}$ in place of $\mathbf{V}_\varphi$ (this is possible since $\mathbf{V}$ is finite in our case).

Let $\mathcal{C}$ be the set of connected components under the edge relation $\Confounded$ of $\Graph$.
Each $\*C \in \mathcal{C}$ is a subset $\*C \subset \*V$ that is a complete graph under $\Confounded$, by assumption.
For each $\*C$ let 
\begin{align*}
\Delta_{\*C} = \Big\{ \bigwedge_{\substack{C \in \*C\\ \*p_C \in \Val(\mathbf{Pa}^\Graph_C)}} C_{\mathbf{p}_C} = c_{C, \*p_C} : \substack{c_{C, \*p_C} \in \Val(C) \text{ for each } C \in \*C,\\ \mathbf{p}_C \in \Val(\mathbf{Pa}_C^\Graph)} \Big\}
\end{align*}
be the collection of counterfactuals describing complete functional mechanisms for each $C \in \*C$.
The notation $C_{\mathbf{p}_C} = c_{C, \*p_C}$ represents the outcome that $C = c_{C, \*p_C}$ under the intervention $\mathbf{p}_C$; we use such notation in order to be explicit about the choice of variable $C$.
The below is our analogue of \cite[Lem.~8]{ibelingicard2020}.
\begin{lemma}
    \label{lem:normalform}
  Let $\varphi$ be a conjunction of literals.
  Then there are polynomial terms $\{\mathbf{t}_i, \mathbf{t}'_{i'}\}_{\substack{i, i'}}$
  in the variables $\big\{\mathbf{P}(\delta_{\*C}) : \substack{\delta_{\*C} \in \Delta_{\*C} \\ \*C \in \mathcal{C}}\big\}$
  such that \begin{align}\label{eqn:normalform}
\mathrm{T}(\Graph) \vdash \varphi\leftrightarrow \bigwedge_{\substack{\delta_{\*C} \in \Delta_{\*C}\\ \*C \in \mathcal{C}}} \mathbf{P}(\delta_{\*C}) \ge 0 \land \bigwedge_{\*C \in \mathcal{C}} \Big[\sum_{\delta_{\*C} \in \Delta_{\*C}} \mathbf{P}\left(\delta_{\*C}\right) = 1
\Big] \land \bigwedge_{i}
          \mathbf{t}_i \ge 0
         \land
        \bigwedge_{i'}
          \mathbf{t}'_{i'} > {0}. \end{align}
\end{lemma}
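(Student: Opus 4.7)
The plan is to rewrite every probability term appearing in $\varphi$ as a polynomial in the canonical variables $\mathbf{P}(\delta_{\*C})$, then collect the resulting polynomial (in)equalities into the normal form on the right-hand side of \eqref{eqn:normalform}. Intuitively, a family $\{\delta_{\*C}\}_{\*C \in \mathcal{C}}$ specifies a complete functional mechanism for each confounded component, and by the hypothesis on $\Graph$ together with $\textsf{cf-sep}^\Graph$ these mechanisms are fully independent, so joint probabilities of counterfactuals decompose into products of $\mathbf{P}(\delta_{\*C})$'s after a preliminary reduction that expresses all outcomes in terms of parent interventions.

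The first step is to use $\mathrm{T}(\textsf{ER}^\Graph)$ to normalize atoms. For any potential outcome $Y_\Ante$ appearing inside a probability term in $\varphi$, instantiate $\textsf{ER}^\Graph$ at $\mathbf{A} = \Ante \cup \mathbf{Pa}_Y^\Graph$ (filling in arbitrary values on variables in $\mathbf{Pa}_Y^\Graph \setminus \Ante$ and then summing/case-splitting over those completions using basic probability moves from $\textsf{AX}$) to replace $Y_\Ante = y$ by an equivalent Boolean combination of atoms of the form $Y_{\*p} = y$ with $\*p \in \Val(\mathbf{Pa}_Y^\Graph)$. After this reduction, every atom appearing inside a probability term has the canonical form $C_{\*p_C} = c$.

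The second step expands and factors. Put each $\varepsilon \in \Lbase$ occurring in $\mathbf{P}(\varepsilon)$ into disjunctive normal form over the canonical atoms; by inclusion--exclusion (provable in $\textsf{AX}$) we may assume $\varepsilon$ is a single consistent conjunction $\bigwedge_\alpha (C^\alpha_{\*p^\alpha} = c^\alpha)$. Marginalize: $\mathbf{P}(\varepsilon)$ equals the sum of $\mathbf{P}\big(\bigwedge_{\*C \in \mathcal{C}} \delta_{\*C}\big)$ over all complete-mechanism families $\{\delta_{\*C}\}_{\*C \in \mathcal{C}}$ that extend $\varepsilon$. Now invoke $\textsf{cf-sep}^\Graph$: because the bidirected arcs decompose $\*V$ into a disjoint union of complete subgraphs, any two variables $C, C'$ lying in distinct components $\*C \neq \*C'$ satisfy $C \neq C'$ and $C \not\Confounded C'$, so $\textsf{cf-sep}^\Graph$ applies to the partition of atoms by their component, yielding $\mathbf{P}\big(\bigwedge_{\*C} \delta_{\*C}\big) = \prod_{\*C} \mathbf{P}(\delta_{\*C})$. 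Hence each $\mathbf{P}(\varepsilon)$ is a polynomial in the $\mathbf{P}(\delta_{\*C})$'s, and substituting back into the polynomial terms $\mathbf{t}_1, \mathbf{t}_2$ of each literal $\mathbf{t}_1 \geqslant \mathbf{t}_2$ (or its negation $\mathbf{t}_2 > \mathbf{t}_1$) yields the desired polynomial (in)equalities $\mathbf{t}_i \ge 0$ and $\mathbf{t}'_{i'} > 0$.

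Finally, append the probability-axiom constraints $\mathbf{P}(\delta_{\*C}) \ge 0$ and $\sum_{\delta_{\*C} \in \Delta_{\*C}} \mathbf{P}(\delta_{\*C}) = 1$ for each $\*C$; the latter is derivable in $\textsf{AX}$ since $\Delta_{\*C}$ partitions the set of possible mechanisms for $\*C$ into mutually exclusive and exhaustive events (using that the canonical atoms $C_{\*p} = c$ for distinct $c$ are incompatible by effectiveness of the singleton intervention plus basic $\textsf{AX}$ reasoning). The main delicate point is checking that the single application of $\textsf{cf-sep}^\Graph$ used above is licit; this is exactly where the disjoint-union-of-complete-graphs hypothesis on the bidirected structure of $\Graph$ is used, and without it one cannot in general collapse the joint probability over all components into a product. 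The remaining work is bookkeeping of the expansion and substitution, which is routine once the factorization is available.
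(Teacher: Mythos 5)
Your proposal is correct and takes essentially the same route as the paper's proof: reduce every atom inside a probability term to parent-set interventions using $\textsf{ER}^\Graph$ together with composition (note that the case-split over completions of $\Ante$ to a full setting of $\mathbf{Pa}_Y^\Graph$ genuinely needs composition from $\textsf{SCM}$, not just $\textsf{AX}$ probability moves, and the resulting Boolean combination involves canonical atoms for ancestors of $Y$, not only $Y$ itself, terminating by acyclicity), then factorize across the bidirected-connected components via $\textsf{cf-sep}^\Graph$, and assemble the polynomial normal form with the $\textsf{AX}$ machinery. The paper organizes the same argument slightly differently, routing through the complete interventional state descriptions $\delta \in \Delta_{\mathrm{sat}}$ of the prior normal-form lemma and proving the single factorization identity $\mathbf{P}(\delta) = \prod_{\mathbf{C} \in \mathcal{C}} \mathbf{P}(\delta_{\mathbf{C}})$, rather than normalizing each probability term of $\varphi$ directly.
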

  \begin{proof}
    This will follow from the fact that we have the below \eqref{eq:deltafactorization} for any $\delta \in \Delta_{\mathrm{sat}}$.
    Here $\Delta_{\mathrm{sat}}$ is the set defined in \cite{ibelingicard2020} as the satisfiable subset of complete interventional state descriptions
    $\Delta = \Big\{ \bigwedge_{\substack{\Ante \in \Val(\*X) \\ \*X \subset \*V}} \bigwedge_{V \in \*V} V_\Ante = \pi_V(\*v_\Ante) : \substack{\*v_\Ante \in \Val(\*V)\\ \text{for each }\Ante} \Big\}$.
    \begin{align}\label{eq:deltafactorization}
    \mathrm{T}(\Graph) \Proves \mathbf{P}(\delta) =
    \prod_{\mathbf{C} \in \mathcal{C}}
    \mathbf{P}\Bigg(
      \bigwedge_{\substack{C \in \*C\\ \*p_C \in \Val(\mathbf{Pa}_C^\Graph)}} C_{\mathbf{p}_C} = c_{C, \mathbf{p}_C}
    \Bigg)
    \end{align}
    where $c_{C, \mathbf{p}_C} \in \Val(C)$ for each $C$ is the outcome of $C$ in $\delta$ in the outer conjunct corresponding to $\Ante = \mathbf{p}_C$ and the inner conjunct corresponding to $V = C$.
    To see \eqref{eq:deltafactorization}, note that for each $V \in \*V$ we can remove all conjuncts in a $\delta$ for any $\Ante$ that do not correspond to a setting ${\*p}_C$ of the parents $\mathbf{Pa}_C^\Graph$ by $\textsf{ER}^\Graph$ and composition, which we can use since $\textsf{SCM}$ is part of $\mathrm{T}(\Graph)$.
    We thereby obtain $\mathbf{P}(\delta) = \mathbf{P}\big(\bigwedge_{\substack{V \in \*V \\ \*p \in \mathbf{Pa}_V^\Graph}} V_{\*p} = \pi_V({\*v}_{\*p}) \big)$ and this gives us the $c_{V, \*p} = \pi_V({\*v}_{\*p})$.
    Next, we apply $\textsf{cf-sep}^\Graph$ to this expression, giving us the final product in \eqref{eq:deltafactorization} factorizing over all $\*C \in \mathcal{C}$.
    \end{proof}
    
    Now, as in the proof of \cite[Thm.~6]{ibelingicard2020}, the Positivstellensatz shows that consistency of $\varphi$ implies there is a real solution to \eqref{eqn:normalform} for $\big\{\mathbf{P}(\delta_{\*C})\big\}_{\substack{\delta_{\*C} \in \Delta_{\*C} \\ \*C \in \mathcal{C}}}$. We construct a model $\SCM$ with diagram $\Graph$ inducing these probabilities and thereby satisfying $\varphi$.
    Note that any $\delta_{\*C} \in \Delta_{\*C}$ defines a function $f^{\delta_{\*C}}_C: \Val(\mathbf{Pa}_C^\Graph) \to \Val(C)$ for each $C \in \*C$ via setting $f^{\delta_{\*C}}_C(\*p_C)$ to the value $C$ is set to under $\*p_C$ in $\delta_{\*C}$.
    We let $\SCM = \langle \{U_{\*C}\}_{\*C \in \mathcal{C}}, \*V, \{f_V\}_V, P \rangle$ where $P$ factors as $P\big(\{ u_{\*C}\}_{\*C \in \mathcal{C}} \big) = \prod_{\*C \in \mathcal{C}} P(u_\*C)$ and for any variable $C \in \*V$ with $C \in \*C$, we define parent sets $\mathbf{Pa}_C = \mathbf{Pa}_C^\Graph$ and $\mathbf{U}_C = \{U_{\*C}\}$; this implies that $\SCM$ has diagram $\Graph$ (there is a bidirected edge between any $C, C'$ in the same $\*C$, and the statement assumes $\Graph$'s bidirected arcs form a disjoint union of complete graphs).
    Finally let $\Val(U_{\*C}) = \Delta_{\*C}$ for each $\*C$, and let $P(\delta_{\*C})$ be the probability $\mathbf{P}(\delta_{\*C})$ from our solution to \eqref{eqn:normalform}; define the structural functions by $f_C(\*p_C, \delta_{\*C}) = f^{\delta_{\*C}}_C(\*p_C)$ for each $C \in \*C \in \mathcal{C}$. By construction, it is clear that $\SCM$ induces exactly the probabilities from our solution to \eqref{eqn:normalform}.
\end{proof}
The following definition of d-separation and its extension to mixed diagrams are standard (e.g., \cite{pearl1988probabilistic,ShpitserPearlComplete}). In a directed graph $\mathcal{D}$ over nodes $\mathbf{V}$, let $\mathbf{De}^{\mathcal{D}}_{V} = \{ V' \in \mathbf{V} : V \to \dots \to V'\}$ be the set of \emph{descendants} of $V \in \mathbf{V}$, the transitive closure of parenthood, with $V \in \mathbf{De}^{\mathcal{D}}_V$.
\begin{definition}[Directed d-separation]\label{def:app:dsepdir}
Given directed graph $\mathcal{D}$, the nodes $X$ and $Y$ are \emph{d-separated} given a set of nodes $\mathbf{Z}$ if every path (consisting of arrows of either direction) from $X$ to $Y$ either has a \emph{collider} node (to be defined shortly) on the path $M$ such that $\mathbf{De}_M^\mathcal{D} \cap \mathbf{Z} = \varnothing$, or a non-collider node $M'$ on the path such that $M' \in \mathbf{Z}$.
A collider is a vertex $M$ on the path such that the path is of the form $X \rightleftharpoons
 \dots \to M \leftarrow \dots \rightleftharpoons Y$ (the symbol $\rightleftharpoons$ denoting an arrow of either direction).
The subsets $\mathbf{X}$, $\mathbf{Y}$ of nodes are d-separated given $\mathbf{Z}$ if every $X \in \mathbf{X}$ and $Y \in \mathbf{Y}$ are d-separated given $\mathbf{Z}$.
\end{definition}
\begin{definition}[Mixed d-separation]\label{def:app:dsep}
Given mixed causal diagram $\Graph$ over $\mathbf{V}$, let $\mathcal{C}$ be the set of maximal cliques in $\Graph$ under its edge relation $\Confounded$. Thus every $\mathbf{C} \in \mathcal{C}$ is a subset $\mathbf{C} \subset \mathbf{V}$ of vertices such that $C \Confounded C'$ for every $C, C' \in \mathbf{C}$, and there is no $\mathbf{C}' \supsetneq \mathbf{C}$ with this property.
Then form the directed graph $\mathcal{D}^\Graph$ with nodes $\mathbf{V} \cup \mathcal{C}$ and edges
\begin{align*}
\{ V \to V' : V, V' \in \mathbf{V}, V \to V' \text{ in } \Graph \} \cup \{ \mathbf{C} \to V: \mathbf{C} \in \mathcal{C}, V \in \mathbf{V}, V \in \mathbf{C} \}.
\end{align*}
Finally, given $\mathbf{X}, \mathbf{Y}, \mathbf{Z} \subset \mathbf{V}$, we say that $\mathbf{X}$ and $\mathbf{Y}$ are d-separated in $\Graph$ given $\mathbf{Z}$ if the same holds in $\mathcal{D}^\Graph$ (according to Def.~\ref{def:app:dsepdir}).
\end{definition}
\begin{corollary}
\label{cor:app:dsepcomplete}
Let $\textsf{d-sep}^\Graph$ be an encoding of all the conditional independences implied by the graphical d-separation criterion over the diagram $\Graph$ (including over counterfactuals, where these are determined by parallel networks).
Then the system $\mathrm{T}(\textsf{ER}^\Graph) + \textsf{d-sep}^\Graph + \textsf{SCM}$ is sound and complete under the same assumptions as Thm.~\ref{thm:graphicalsystem}.
\end{corollary}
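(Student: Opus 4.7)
The plan is to reduce Cor.~\ref{cor:app:dsepcomplete} to Thm.~\ref{thm:graphicalsystem} by observing that the d-separation schema is strictly stronger than the cf-separation schema, and that it remains sound.

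For soundness of $\mathrm{T}(\textsf{ER}^\Graph) + \textsf{d-sep}^\Graph + \textsf{SCM}$ over $\mathfrak{M}(\Graph)$, the only new content beyond Thm.~\ref{thm:graphicalsystem} is $\textsf{d-sep}^\Graph$. Given an SCM $\SCM \in \mathfrak{M}(\Graph)$, I would appeal to the standard parallel-networks (``twin network'') construction: each counterfactual atom $Y_{\Ante}$ becomes a node in an augmented directed graph, with the structural mechanisms of $\SCM$ copied once per distinct intervention and exogenous variables shared across copies (so that bidirected arcs of $\Graph$ become latent common parents). It is a classical result (e.g., Pearl 2009 Ch.~7, Shpitser--Pearl) that directed d-separation in this augmented graph implies the corresponding conditional independence in the joint distribution over counterfactuals. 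Since $\SCM$ belongs to $\mathfrak{M}_{\mathrm{uniq}}$ by Lem.~\ref{lem:app:recursive}, this distribution is exactly $P^\SCM_\counterfactual$, establishing soundness of each instance of $\textsf{d-sep}^\Graph$.

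For completeness under the bidirected-clique hypothesis of Thm.~\ref{thm:graphicalsystem}, the key observation is that every instance of $\textsf{cf-sep}^\Graph$ is already an instance of $\textsf{d-sep}^\Graph$: given sets $\{Y_i\}, \{Y'_j\}$ containing no shared vertex and no bidirected arc across them, the nodes $\{(Y_i)_{\mathbf{p}_i}\}$ and $\{(Y'_j)_{\mathbf{p}'_j}\}$ of the parallel network built for these parent-level interventions share no common ancestor (every potential ancestry path must cross the deleted bidirected component or coincide on a variable), so they are d-separated given $\varnothing$. Hence the axiom collection $\mathrm{T}(\textsf{ER}^\Graph) + \textsf{d-sep}^\Graph + \textsf{SCM}$ syntactically contains $\mathrm{T}(\textsf{ER}^\Graph) + \textsf{cf-sep}^\Graph + \textsf{SCM} = \mathrm{T}(\Graph) + \textsf{SCM}$. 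Completeness then transfers immediately: any $\varphi$ consistent with the larger system is \emph{a fortiori} consistent with the smaller, hence satisfiable by Thm.~\ref{thm:graphicalsystem}, and the satisfying model (whose diagram is $\Graph$) validates every instance of $\textsf{d-sep}^\Graph$ by the soundness we just argued.

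The main obstacle is the soundness step: while folklore, writing out the parallel-networks construction carefully when $\Graph$ contains bidirected arcs and when the ``d-sep over counterfactuals'' schema ranges over arbitrary families of interventions requires some bookkeeping---in particular, one must justify that sharing exogenous nodes across intervention copies faithfully models the confounding structure of every $\SCM \in \mathfrak{M}(\Graph)$, independent of the specific latent decomposition. The completeness half, by contrast, is essentially a one-line inclusion argument once the inclusion $\textsf{cf-sep}^\Graph \subset \textsf{d-sep}^\Graph$ is verified.
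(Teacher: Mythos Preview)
Your proposal is correct and follows essentially the same approach as the paper: soundness is delegated to the classical d-separation literature, and completeness is obtained by showing that every instance of $\textsf{cf-sep}^\Graph$ is already an instance of $\textsf{d-sep}^\Graph$, so that the system of Thm.~\ref{thm:graphicalsystem} is contained in the present one. The only cosmetic difference is in how you justify the key inclusion: you argue that the nodes $\{(Y_i)_{\mathbf p_i}\}$ and $\{(Y'_j)_{\mathbf p'_j}\}$ share no common ancestor in the parallel network (which indeed forces every connecting path to contain a collider, hence d-separation given $\varnothing$), whereas the paper argues directly that, after severing the incoming edges via the parent-level interventions, any remaining path must traverse a bidirected segment of length $\ge 2$ or pass through proper descendants, and in either case hits a collider. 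These are two phrasings of the same graph-theoretic fact.
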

\begin{proof}
For soundness, see \cite{VermaPearl1988} or any standard discussion of d-separation, e.g., that in \cite{Pearl2009} or \cite{ShpitserPearlComplete}.
For completeness, in light of Thm.~\ref{thm:graphicalsystem} it suffices to show that any instance of $\textsf{cf-sep}^\Graph$, as in Def.~\ref{def:tgraph}, is an instance of $\textsf{d-sep}^\Graph$. Consider the parallel network $\Graph'$ with $n+n'$ copies of $\Graph$ and the two sets of nodes $\mathbf{Y} = \{(Y_i)_{\mathbf{p}_i}\}_{1 \le i \le n}$, $\mathbf{Y}' = \{(Y'_i)_{\mathbf{p}'_i}\}_{1 \le i \le n'}$. Since there is no $Y \in \mathbf{Y}$, $Y' \in \mathbf{Y}'$ for which $Y = Y'$ or $Y \Confounded Y'$ in $\Graph$ and all ingoing edges to $Y$ and $Y'$ have been severed in $\Graph'$, the only possibility for a (direction-agnostic) path between some such $Y, Y'$ is a bidirected path of length $\ge 2$ or a bidirected path between proper descendants of $Y, Y'$.
All such paths contain colliders, so that d-separation delivers the (unconditional) independence \eqref{eqn:cfsep}.
\end{proof}

\end{document}